\documentclass[9pt, a4paper]{article}
\usepackage{amsmath,amssymb,amsthm}
\usepackage{fullpage}
\usepackage{fancyhdr}

\usepackage{thm-restate}
\DeclareMathOperator{\Var}{\operatorname{Var}}
\DeclareMathOperator{\E}{\operatorname{E}}

\fancypagestyle{firststyle}
{
\fancyhead{}

\fancyfoot[L]{\begin{minipage}{0.5\textwidth}\footnotesize{\textcopyright Copyright is held by the author. \\Publication rights licensed to ACM.\\PODS '21, June 20--25, 2021,\\ Virtual Event, China}\end{minipage}}
}
\AtBeginDocument{%
  \providecommand\BibTeX{{%
    \normalfont B\kern-0.5em{\scshape i\kern-0.25em b}\kern-0.8em\TeX}}}

\title{\Large{Improved Differentially Private Euclidean Distance Approximation}}
\author{Nina Mesing Stausholm\\\small{nimn@itu.dk}\\\small{IT University of Copenhagen}\\\small{BARC}}
\date{}

\begin{document}
\maketitle
\thispagestyle{firststyle}



\newtheorem{note}{Note}
\newtheorem{theorem}{Theorem}
\newtheorem{lemma}{Lemma}
\newtheorem{definition}{Definition}
\newtheorem{corollary}{Corollary}
\newcommand{\sign}{\varphi} %

\begin{abstract}
This work shows how to privately and more accurately estimate Euclidean distance between pairs of vectors. Input vectors $x$ and $y$ are mapped to differentially private sketches $x'$ and $y'$, from which one can estimate the distance between $x$ and $y$. Our estimator relies on the Sparser Johnson-Lindenstrauss constructions by Kane \& Nelson (Journal of the ACM 2014), which for any $0<\alpha,\beta<1/2$ have optimal output dimension $k=\Theta(\alpha^{-2}\log(1/\beta))$ and sparsity $s=O(\alpha^{-1}\log(1/\beta))$. We combine the constructions of Kane \& Nelson with either the Laplace or the Gaussian mechanism from the differential privacy literature, depending on the privacy parameters $\varepsilon$ and $\delta$. We also suggest a differentially private version of Fast Johnson-Lindenstrauss Transform (FJLT) by Ailon \& Chazelle (SIAM Journal of Computing 2009) which offers a tradeoff in speed for variance for certain parameters.
We answer an open question by Kenthapadi et al.~(Journal of Privacy and Confidentiality 2013) by analyzing the privacy and utility guarantees of an estimator for Euclidean distance, relying on Laplacian rather than Gaussian noise. We prove that the Laplace mechanism yields lower variance than the Gaussian mechanism whenever $\delta<\beta^{O(1/\alpha)}$.
Thus, our work poses an improvement over the work of Kenthapadi et al.~by giving a more efficient estimator with lower variance for sufficiently small $\delta$. Our sketch also achieves \emph{pure} differential privacy as a neat side-effect of the Laplace mechanism rather than the \emph{approximate} differential privacy guarantee of the Gaussian mechanism, which may not be sufficiently strong for some settings.

Our main result is a special case of more general, technical results proving that one can generally construct unbiased estimators for Euclidean distance with a high level of utility even under the constraint of differential privacy. The bulk of our analysis is proving that the variance of the estimator does not suffer too much in the presence of differential privacy. 
\end{abstract}

\section{Introduction}
The well-known Johnson-Lindenstrauss Lemma \cite{johnson1984extensions} is a fundamental tool in dimensionality reduction and has applications in a variety of fields. The lemma allows for significant speed-ups in applications such as nearest-neighbor search \cite{AilonC09,IndykM98}, 
 computational geometry \cite{Clarkson08}, document comparison \cite{TanSK2005},
data streams \cite{Indyk06},
clustering \cite{BoutsidisZMD15,CohenEMMP15}, graph sparsification \cite{SpielmanS11}, low-rank approximation \cite{ClarksonW13},
numerical linear algebra \cite{Woodruff14,ClarksonW09,DrineasMMS11} and many more.
The Johnson-Lindenstrauss Lemma, or JL lemma for short, states that for any $0<\alpha,\beta<1/2$ and input dimension $d>0$, there exists a random $k\times d$-projection matrix $S$ such that $S$ preserves Euclidean norm of any input vector $x\in\mathbb{R}^d$ up to a factor $(1\pm\alpha)$ with probability at least $1-\beta$. Classic examples of projections satisfying the lemma include the constructions from \cite{IndykM98,AilonC09, Achlioptas03,DasguptaKS10,KaneN14}. Jayram \& Nelson \cite{JayramW11}, and later Kane et al.
~\cite{KaneMN11}, proved the remarkable result that the optimal output dimension $k$ is independent of the input dimension $d$. In particular, they showed that $k=\Theta(\alpha^{-2}\log(1/\beta))$ is optimal.
In the case where $n$ input vectors are known in advance, the \emph{Johnson-Lindenstrauss Flattening Lemma} states that there exist projections preserving Euclidean distance for \emph{all} pairs of these vectors within a factor $(1\pm \alpha)$, simultaneously. We concern ourselves with a distributed setting, where data is held by several parties and may not be present at the same time. Hence, in our setting the length preserving projection $S$ must be public, so any party holding input vector $x\in\mathbb{R}^d$ can compute and release $Sx$. For inputs $x,y\in\mathbb{R}^d$ held by different parties, one can estimate the Euclidean distance as $\Vert Sx-Sy\Vert_2^2~=~\Vert S(x-y)\Vert_2^2$. By the Johnson-Lindenstrauss Lemma, this estimate is within a factor $(1\pm\alpha)$ of $\Vert x-y\Vert_2^2$ with high probability. We henceforth use \emph{transform} and \emph{projection} interchangeably and refer to random projections satisfying the Johnson-Lindenstrauss Lemma as \emph{Johnson-Lindenstrauss projections}, or simply \emph{JL projections}. We will even misuse this convention slightly, as we also use this name for projections that preserve Euclidean norm in \emph{expectation}, as defined in Definition \ref{def:lengthpreservingproperty}.

As the input $x$ may contain sensitive information, the released projection of $x$ must preserve \emph{privacy} to prevent third parties from learning the input $x$. Privacy has often been obtained through simple anonymization by removing obvious identifiers, but several cases have shown that this approach is insufficient \cite{anonymization,de2013unique,NarayananS08,sweeney2015only}. Due to its stringent definition and provable guarantees, we concern ourselves with \emph{differential privacy} \cite{DworkMNS06}, which is usually achieved by perturbing the result of a query, to obfuscate the true result slightly. Thus, we analyze the privacy and utility guarantees when adding noise to the projection $Sx$. That is, for a noise distribution $\mathcal{D}$ and noise vectors $\eta,\mu\in\mathcal{D}^k$, we analyze whether we can privately and accurately estimate $\Vert x-y\Vert_2^2$ from $Sx+\eta$ and $Sy+\mu$. The main questions of interest are: \emph{How much noise do we need to add?} and \emph{What utility guarantees can we achieve?} We define differential privacy and mention common choices for noise distribution $\mathcal{D}$ in Section \ref{sec:DP}.

\subsection{Differentially Private Random Projections}
\label{sec:Kenthapadi}
This work improves on the work of Kenthapadi et al.~\cite{KenthapadiKMM13}, in which it was shown how to construct an $(\varepsilon,\delta)$-differentially private version of a JL transform allowing for high accuracy estimators for squared Euclidean distance. The idea applied by Kenthapadi et al.~is simple: Let $P$ be the i.i.d. normally distributed JL transform where each entry is drawn from the standard Normal distribution. For input vector $x\in[0,1]^d$, add Gaussian noise to each entry of $Px$.

Kenthapadi et al.~prove Theorems \ref{thm:kethapadiSigmaPrivacy} and \ref{thm:kenthapadiUnbiasedEstVar}. We remark that these results extend naturally to $x,y\in \mathbb{R}^d$.
\begin{theorem}[\cite{KenthapadiKMM13}]
\label{thm:kethapadiSigmaPrivacy}
Let $P$ be a $k\times d$-projection matrix with i.i.d. entries from the standard Normal distribution

and let $x,y\in[0,1]^{d}$ be input vectors. Let $\eta,\mu\sim \mathcal{N}(0,\sigma^2)^{k}$ be noise vectors. If $\sigma~\ge~4/\varepsilon\sqrt{\log(1/\delta)}$, $\varepsilon<\ln(1/\delta)$ and $k>2(\ln(d)+\ln(2/\delta))$, then $Px+\eta$ is $(\varepsilon,\delta)$-differentially private.
\end{theorem}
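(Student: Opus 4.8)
The plan is to reduce the claim to the standard Gaussian mechanism after first conditioning on the realisation of the random projection $P$. Fix neighbouring inputs $x,x'$ --- differing, say, in one coordinate $i$ with $|x_i-x_i'|\le 1$ --- and set $w=x-x'$, so that $Px-Px'=w_i\,P_{\cdot,i}$; hence, once $P$ is fixed, the map $z\mapsto Pz$ has $\ell_2$-sensitivity at most $\|P_{\cdot,i}\|_2$. Because the entries of $P$ are i.i.d.\ standard normals, each column norm $\|P_{\cdot,j}\|_2^2$ is $\chi^2$-distributed with $k$ degrees of freedom and therefore sharply concentrated around $k$. The proof then has two parts: a ``typical'' event for $P$ on which the sensitivity of every admissible input pair is uniformly bounded, and the Gaussian mechanism conditioned on that event.

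For the first part I would define $G=\{\,\|P_{\cdot,j}\|_2^2\le\Delta^2\text{ for all }j\in[d]\,\}$ for a suitable $\Delta$, and bound $\Pr[G^c]$ using the Laurent--Massart tail inequality $\Pr[\chi^2_k-k\ge 2\sqrt{kt}+2t]\le e^{-t}$ together with a union bound over the $d$ columns. Choosing $t=\ln(2d/\delta)$ gives $\Pr[G^c]\le\delta/2$, and the hypothesis $k>2(\ln d+\ln(2/\delta))=2t$ forces $t<k/2$, whereupon the bound collapses to $\Delta^2=O(k)$. This is the only use of the lower bound on $k$: it guarantees that, on $G$, the sensitivity is controlled by an absolute constant (up to the $\sqrt{k}$-scaling built into the choice of $\sigma$), and in particular does not grow with $d$.

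For the second part, fix any $P\in G$. Then $Px+\eta$ is exactly the output of the Gaussian mechanism applied to $z\mapsto Pz$ with $\ell_2$-sensitivity at most $\Delta$ and noise $\mathcal{N}(0,\sigma^2 I_k)$, so by the classical guarantee it is $(\varepsilon,\delta/2)$-differentially private as soon as $\sigma\ge \Delta\sqrt{2\ln(2.5/\delta)}/\varepsilon$; the hypotheses $\sigma\ge(4/\varepsilon)\sqrt{\log(1/\delta)}$ and $\varepsilon<\ln(1/\delta)$ are precisely what is needed to absorb the constant $\Delta$ and the $\ln 2.5$ term and make this inequality hold for all admissible $\delta$. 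Since $\eta$ is drawn independently of $P$, this conditional statement holds verbatim for each fixed $P\in G$.

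It remains to glue the pieces together via the law of total probability. For any measurable $S$, write $\Pr[Px+\eta\in S]=\E_P\!\big[\mathbf{1}[P\in G]\,\Pr[Px+\eta\in S\mid P]\big]+\Pr\big[\{Px+\eta\in S\}\cap G^c\big]$; the second summand is at most $\Pr[G^c]\le\delta/2$, and inside the first the conditional Gaussian-mechanism bound gives $\Pr[Px+\eta\in S\mid P]\le e^\varepsilon\Pr[Px'+\eta\in S\mid P]+\delta/2$ for every $P\in G$, so the total is at most $e^\varepsilon\Pr[Px'+\eta\in S]+\delta$. I expect the only delicate point to be this bookkeeping --- splitting the $\delta$ budget cleanly between ``$P$ is atypical'' and ``the Gaussian mechanism fails'', and checking that the $\chi^2$ concentration really does yield a sensitivity bound independent of $d$ (and of $k$, after normalisation) under the stated constraint on $k$.
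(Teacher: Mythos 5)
The paper offers no proof of this theorem---it is imported verbatim from Kenthapadi et al.\ and only sketched in Note~\ref{note:kenthapadinoteone} and Section~\ref{sec:relatedworkKenthapadi}---so the comparison is against that sketch, and your two-part structure (a high-probability event $G$ bounding every column norm, then the Gaussian mechanism conditioned on $G$, then the law-of-total-probability bookkeeping splitting $\delta$ into $\delta/2+\delta/2$) is exactly the argument the citation relies on. Your Laurent--Massart step with $t=\ln(2d/\delta)$ and the observation that $k>2(\ln d+\ln(2/\delta))=2t$ is what pins the column norms down is also precisely how the hypothesis on $k$ is used, matching the paper's remark that the $\ell_2$-sensitivity exceeds $2$ with probability at most $\delta'$ under that condition.

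There is, however, one genuine gap in your part two. Taking the statement literally---entries i.i.d.\ from the \emph{standard} Normal distribution---each $\Vert P_{\cdot,j}\Vert_2^2$ is $\chi^2_k$ with mean $k$, so on your event $G$ the sensitivity is $\Delta=\Theta(\sqrt{k})$, and the Gaussian mechanism would then require $\sigma=\Omega(\sqrt{k}\,\varepsilon^{-1}\sqrt{\log(1/\delta)})$. The hypothesis $\sigma\ge 4\varepsilon^{-1}\sqrt{\log(1/\delta)}$ contains no $\sqrt{k}$ factor, and your parenthetical that the $\sqrt{k}$-scaling is ``built into the choice of $\sigma$'' is not true as written; nothing absorbs it. The theorem only closes for the \emph{normalized} projection (entries $\mathcal{N}(0,1/k)$, equivalently $P/\sqrt{k}$), under which your concentration bound gives $\Vert P_{\cdot,j}\Vert_2^2\le 1+2\sqrt{t/k}+2t/k<4$ on $G$, i.e.\ sensitivity at most $2$, and then $\sigma\ge 4\varepsilon^{-1}\sqrt{\log(1/\delta)}=2\cdot 2\varepsilon^{-1}\sqrt{\log(1/\delta)}$ does satisfy the Gaussian-mechanism requirement for the usual range of $\delta$. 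This imprecision is arguably inherited from the theorem statement itself, but your proof needs to make the normalization explicit rather than wave at it. Finally, note that while your gluing step is formally valid and does yield $(\varepsilon,\delta)$-differential privacy, the paper deliberately distances itself from it (Note~\ref{note:sigmaKenthapadi}, Section~\ref{sec:relatedworkKenthapadi}): on the event $G^c$ the fixed $P$ gives \emph{no} privacy for any subsequent use, which is why the paper instead computes $\Delta_2$ exactly in an $O(dk)$ initialization step and calibrates $\sigma$ to it.
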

\begin{note}
\label{note:kenthapadinoteone}
Kenthapadi et al.~show that for $k>2\ln(d)+2\ln(1/\delta')$, the $\ell_2$-sensitivity of $P$ is greater than 2 with probability at most $\delta'$. We will assume that the $\ell_2$-sensitivity of $P$ is computed exactly in an initializing step, as discussed in Section \ref{sec:relatedworkKenthapadi}, and hence avoid this assumption on $k$. From \cite{JayramW11,KaneMN11} we know that for any $\alpha,\beta\in(0,1/2)$ $k=\Theta\left(\alpha
^{-2}\log(1/\beta)\right)$ is optimal in the non-private case. We use this value of $k$ and discuss the optimal $k$ for the noisy construction in Section \ref{sec:optimalknoisy}. We also remark that the $\sigma$ in Theorem \ref{thm:kethapadiSigmaPrivacy} can be exchanged with $\sigma \ge \Delta_2\varepsilon^{-1}\sqrt{2\log(1.25/\delta)}$ by a later result from \cite{DworkR14} (See Lemma \ref{lem:GaussMech}), where $\Delta_2$ is the (exact) $\ell_2$-sensitivity of $P$.
\end{note}

\begin{theorem}[\cite{KenthapadiKMM13}]
\label{thm:kenthapadiUnbiasedEstVar}
Let $P$ be a $k\times d$-projection matrix with i.i.d. entries from the standard Normal distribution and $x,y\in[0,1]^{d}$ be input vectors. Let $\eta,\mu\sim \mathcal{N}(0,\sigma^2)^{k}$ be noise vectors, where $\sigma$ is independent of the realization of $P$. Define
\[\hat{E}_{iid}:=\Vert (Px+\eta)-(Py+\mu)\Vert_2^2-2k\sigma^2.\] Then
\begin{enumerate}
    \item $\hat{E}_{iid}$ is an unbiased estimator for $\Vert x-y\Vert_2^2$.
    \item $\operatorname{Var}\left[\hat{E}_{iid}\right]=\frac{2}{k}\Vert x-y\Vert_2^4+8\sigma^2\Vert x-y\Vert_2^2+8\sigma^4k.$
\end{enumerate}
\end{theorem}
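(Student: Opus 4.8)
The plan is to collapse the entire expression to a sum of $k$ independent squared one-dimensional Gaussians and then simply read off its first two moments.

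\textbf{Step 1 (reduction).} I would first set $z := x-y$ and $\nu := \eta - \mu$, so that $(Px+\eta)-(Py+\mu) = Pz + \nu$ and hence $\hat{E}_{iid} = \Vert Pz + \nu\Vert_2^2 - 2k\sigma^2$. Since $\eta$ and $\mu$ are independent $\mathcal{N}(0,\sigma^2)^k$ vectors, their difference is $\nu \sim \mathcal{N}(0,2\sigma^2)^k$, and because $\sigma$ is independent of the realization of $P$, the vector $\nu$ is independent of $Pz$. Expanding coordinate-wise, $\Vert Pz+\nu\Vert_2^2 = \sum_{i=1}^k W_i^2$ with $W_i := \langle P_i, z\rangle + \nu_i$, where $P_i$ is the $i$-th row of $P$. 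The rows of $P$ are independent, and each is a vector of $d$ i.i.d.\ Gaussian entries under the normalization that makes $\operatorname{E}[\Vert Pv\Vert_2^2]=\Vert v\Vert_2^2$ (variance $1/k$ per entry), so $\langle P_i, z\rangle \sim \mathcal{N}(0,\Vert z\Vert_2^2/k)$ independently across $i$; combined with the independence of the $\nu_i$, this shows $W_1,\dots,W_k$ are i.i.d.\ $\mathcal{N}(0,\tau^2)$ with $\tau^2 := \Vert z\Vert_2^2/k + 2\sigma^2$.

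\textbf{Step 2 (the two moments).} For unbiasedness I would just compute $\operatorname{E}[\hat{E}_{iid}] = k\operatorname{E}[W_1^2] - 2k\sigma^2 = k\tau^2 - 2k\sigma^2 = \Vert z\Vert_2^2 = \Vert x-y\Vert_2^2$. For the variance, the additive constant $2k\sigma^2$ is irrelevant and the $W_i^2$ are independent, so $\operatorname{Var}[\hat{E}_{iid}] = \sum_{i=1}^k \operatorname{Var}[W_i^2] = k\operatorname{Var}[W_1^2]$; since $W_1/\tau$ is standard normal, $W_1^2/\tau^2\sim\chi_1^2$ has variance $2$ (equivalently $\operatorname{E}[W_1^4]=3\tau^4$), so $\operatorname{Var}[W_1^2]=2\tau^4$ and $\operatorname{Var}[\hat{E}_{iid}]=2k\tau^4$. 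Expanding $2k(\Vert z\Vert_2^2/k+2\sigma^2)^2$ then yields $\tfrac{2}{k}\Vert x-y\Vert_2^4 + 8\sigma^2\Vert x-y\Vert_2^2 + 8\sigma^4 k$, which is exactly the claimed formula.

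\textbf{Main obstacle.} There is no serious obstacle here; the only point requiring care is the independence bookkeeping in Step 1. One must use that $\eta$ and $\mu$ are independent both of each other and of $P$ (so that $\nu$ is a centred Gaussian with variance exactly $2\sigma^2$ per coordinate and independent of $Pz$) and, crucially, that $\sigma$ does not depend on the draw of $P$ — precisely the hypothesis in the statement. If instead $\sigma$ were a function of $P$, as in the sensitivity-dependent variant of Note~\ref{note:kenthapadinoteone}, the argument would have to be run conditionally on $P$ and then checked to survive taking expectations over $P$; the strategy is unchanged but the computation becomes heavier. Everything else reduces to the Gaussian moment identities $\operatorname{E}[G^2]=\sigma_G^2$ and $\operatorname{E}[G^4]=3\sigma_G^4$.
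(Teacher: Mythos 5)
Your proof is correct, and the computation checks out: with $\tau^2=\Vert z\Vert_2^2/k+2\sigma^2$ one gets $\operatorname{Var}[\hat{E}_{iid}]=2k\tau^4=\tfrac{2}{k}\Vert x-y\Vert_2^4+8\sigma^2\Vert x-y\Vert_2^2+8\sigma^4k$, exactly as claimed. You were also right to flag and repair the normalization: as literally stated (entries standard normal, variance $1$) the estimator would be biased by a factor $k$; the intended matrix is the $\mathcal{N}(0,1/k)$-normalized one satisfying the LPP of Definition~\ref{def:lengthpreservingproperty}, and your formula only comes out under that convention.

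Your route is genuinely different from the one the paper supports. The paper does not reprove Theorem~\ref{thm:kenthapadiUnbiasedEstVar} directly — it cites \cite{KenthapadiKMM13} — but its own machinery recovers it as a special case of Lemma~\ref{lem:technicalestimator}: for any transform $S$ with LPP and any zero-mean noise $\mathcal{D}$, $\operatorname{Var}[\hat{E}_{gen}]=\operatorname{Var}[\Vert S(x-y)\Vert_2^2]+8\operatorname{E}[\eta_*^2]\Vert x-y\Vert_2^2+2k\operatorname{E}[\eta_*^4]+2k\operatorname{E}[\eta_*^2]^2$, proved by a brute-force expansion of the fourth moment (Claim~\ref{claim:firstterm}); plugging in $\operatorname{E}[\eta_*^2]=\sigma^2$, $\operatorname{E}[\eta_*^4]=3\sigma^4$ and $\operatorname{Var}[\Vert Pz\Vert_2^2]=\tfrac{2}{k}\Vert z\Vert_2^4$ gives the theorem. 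Your argument instead exploits the double Gaussianity of this particular instance: each coordinate $W_i=\langle P_i,z\rangle+\nu_i$ is exactly $\mathcal{N}(0,\tau^2)$ and the coordinates are i.i.d., so everything collapses to the variance of a scaled $\chi^2_k$. This is shorter and cleaner, but it is tied to the fact that both the projection rows and the noise are Gaussian; it would not transfer to the SJLT (where $(Sz)_i$ is not Gaussian) nor to Laplace noise, which is precisely why the paper carries out the general, distribution-agnostic expansion. In short: your proof buys elegance for this one case, the paper's buys the generality it needs for Theorem~\ref{thm:main}.
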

\begin{note}
\label{note:sigmaKenthapadi}
Letting $\sigma$ be independent of the realization of $P$ might lead to complete loss of privacy if the $\ell_2$-sensitivity of $P$ is much higher than 1, as argued in Section \ref{sec:relatedworkKenthapadi}. Hence, we let $\sigma$ be a function of the exact $\ell_2$-sensitivity of $P$, $\Delta_2$, as discussed in Note \ref{note:kenthapadinoteone}.
\end{note}

\subsection{New Contributions}
\label{sec:contributions}
An immediate idea to achieve a speed-up is to apply the techniques of Kenthapadi et al.~to a JL transform, which is faster than the i.i.d. normally distributed JL transform. We show such a result for a private Fast Johnson-Lindenstrauss Transform (FJLT) \cite{AilonC09} in Section \ref{sec:privateFJLT}, but remark that the privacy issue of Kenthapadi et al.~mentioned in Note \ref{note:sigmaKenthapadi} carries over, if we simply exchange the i.i.d. normally distributed JL transform for the FJLT. We discuss how to address this issue in Section \ref{sec:FJLT} to obtain a differentially private version of FJLT, where $\sigma$ does not depend on the $\ell_2$-sensitivity of the transform (which could be very large).
Kenthapadi et al.~leave open the question of whether we can obtain better results with Laplacian noise. We answer this question by proving that we can indeed obtain an $\varepsilon$-differentially private estimator for squared Euclidean distances, which has better variance for certain parameters. Specifically, we show the following main theorem:
\begin{restatable}{theorem}{main}
\label{thm:main}
For any $0<\alpha,\beta<1/2$ and any integer $d>0$ there exists a random $k\times d$-projection $S$ for $k=\Theta\left(\alpha^{-2}\log(1/\beta)\right)$ with sparsity $s=O\left(\alpha^{-1}\log(1/\beta)\right)$ and a distribution $\mathcal{D}$ over $\mathbb{R}$ such that for any $x,y\in\mathbb{R}^d$ and $\eta,\mu\sim\mathcal{D}^k$ we define:
\[
\hat{E}_{SJLT}:=\Vert (Sx+\eta)-(Sy+\mu)\Vert_2^2-\frac{2ks}{\varepsilon^2}.
\]

Then
\begin{enumerate}
    \item $\hat{E}_{SJLT}$ is an unbiased estimator for $\Vert x-y\Vert_2^2$.
    \item $\operatorname{Var}\left[\hat{E}_{SJLT}\right]\le \frac{2}{k}\Vert x-y\Vert_2^4+O\left(\frac{s}{\varepsilon^2}\Vert x-y\Vert_2^2+\frac{s^2}{\varepsilon^4}k\right).$
    \item The sketch $(S,Sx+\eta)$ is $\varepsilon$-differentially private.
    \item For a data stream, we can update the sketch $(S,Sx+\eta)$ in time $O(s)$. 
    \item $Sx+\eta$ can be computed in time $O(s\Vert x\Vert_0+k)$. Given $Sx+\eta$ and $Sy+\mu$, $\hat{E}_{SJLT}$ can be computed in time $O\left(k\right)$.
\end{enumerate}
\end{restatable}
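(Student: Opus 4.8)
The plan is to write the squared-distance estimate as the sum of a sketch term, a cross term, and a pure-noise term, and to treat the five claims in turn; this is the special case of the general estimator analysis obtained by instantiating it with the Kane--Nelson transform and Laplace noise. Set $z:=x-y$, so that $(Sx+\eta)-(Sy+\mu)=Sz+(\eta-\mu)$ and
\[
\hat{E}_{SJLT}=\Vert Sz\Vert_2^2+2\langle Sz,\,\eta-\mu\rangle+\Vert\eta-\mu\Vert_2^2-\frac{2ks}{\varepsilon^2}.
\]
For $S$ I would take the sparse Kane--Nelson construction with $k=\Theta(\alpha^{-2}\log(1/\beta))$ rows and column sparsity $s=O(\alpha^{-1}\log(1/\beta))$, whose nonzero entries are $\pm 1/\sqrt{s}$ with independent uniform signs; by the analysis of Kane \& Nelson this transform has the length-preserving property of Definition~\ref{def:lengthpreservingproperty}, i.e.\ $\operatorname{E}[\Vert Sv\Vert_2^2]=\Vert v\Vert_2^2$ and $\operatorname{Var}[\Vert Sv\Vert_2^2]\le\frac{2}{k}\Vert v\Vert_2^4$ for every $v\in\mathbb{R}^d$. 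For $\mathcal{D}$ I would take a centred Laplace distribution whose scale is a sufficiently large constant multiple of $\sqrt{s}/\varepsilon$ (as dictated by the privacy argument below), noting in advance that the constant $2ks/\varepsilon^2$ subtracted in $\hat{E}_{SJLT}$ is exactly $\operatorname{E}[\Vert\eta-\mu\Vert_2^2]$ for this $\mathcal{D}$, which is what debiases the estimator.

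For parts 1 and 2, independence of $\eta,\mu$ from $S$ and their being centred give $\operatorname{E}[\langle Sz,\eta-\mu\rangle]=0$ and $\operatorname{E}[\Vert Sz\Vert_2^2]=\Vert z\Vert_2^2$, so $\operatorname{E}[\hat{E}_{SJLT}]=\Vert z\Vert_2^2$. For the variance, I would first argue that the three mixed covariances among $\Vert Sz\Vert_2^2$, $\langle Sz,\eta-\mu\rangle$ and $\Vert\eta-\mu\Vert_2^2$ all vanish --- $\Vert Sz\Vert_2^2$ is independent of $\Vert\eta-\mu\Vert_2^2$, while the two covariances involving the cross term vanish because $\eta-\mu$ is independent of $S$, centred, and symmetric --- so that
\[
\operatorname{Var}[\hat{E}_{SJLT}]=\operatorname{Var}[\Vert Sz\Vert_2^2]+4\,\operatorname{E}[\langle Sz,\eta-\mu\rangle^2]+\operatorname{Var}[\Vert\eta-\mu\Vert_2^2].
\]
The first term is at most $\frac{2}{k}\Vert z\Vert_2^4$ by the length-preserving property. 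In the second, the off-diagonal contributions cancel since distinct $\eta_j-\mu_j$ are independent with mean zero, leaving $4\,\operatorname{E}[(\eta_1-\mu_1)^2]\,\operatorname{E}[\Vert Sz\Vert_2^2]=\Theta(s/\varepsilon^2)\Vert z\Vert_2^2$. The third is a sum of $k$ i.i.d.\ terms $(\eta_j-\mu_j)^2$ of variance $\Theta(s^2/\varepsilon^4)$ (a routine fourth-moment computation for a difference of two Laplace variables), hence $\Theta(ks^2/\varepsilon^4)$. Adding these yields the claimed bound.

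For part 3, since $S$ is released in the clear it suffices to prove $\varepsilon$-differential privacy conditioned on each realization of $S$; then $(S,Sx+\eta)$ is $\varepsilon$-differentially private and $\hat{E}_{SJLT}$ inherits this as a post-processing of the two released sketches. Conditioned on $S$, the map $x\mapsto Sx+\eta$ is the coordinatewise Laplace mechanism, so I only need to bound the $\ell_1$-sensitivity of $x\mapsto Sx$: for neighbours $x,x'$ differing in a single coordinate $i$ by at most one, $Sx-Sx'=(x_i-x_i')S_{\cdot i}$, and each column $S_{\cdot i}$ of the Kane--Nelson sketch has exactly $s$ nonzero entries of magnitude $1/\sqrt{s}$, so $\Vert S_{\cdot i}\Vert_1=\sqrt{s}$. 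The key point is that this value is the \emph{same for every realization of $S$}, so --- unlike the i.i.d.\ Gaussian construction of Kenthapadi et al.\ (cf.\ Notes~\ref{note:kenthapadinoteone}--\ref{note:sigmaKenthapadi}) --- no data-dependent step to estimate the sensitivity of $S$ is required, and a Laplace scale of order $\sqrt{s}/\varepsilon$ (at least the $\ell_1$-sensitivity divided by $\varepsilon$) makes the mechanism $\varepsilon$-differentially private.

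Parts 4 and 5 follow by inspecting the construction: $\eta$ is sampled once at initialization, so a stream update $x_i\leftarrow x_i+\Delta$ only adds $\Delta S_{\cdot i}$ to the stored vector $Sx+\eta$, costing $O(s)$; building $Sx+\eta$ from scratch costs $O(s\Vert x\Vert_0)$ to form $Sx=\sum_{i:\,x_i\neq 0}x_iS_{\cdot i}$ plus $O(k)$ to add $\eta$; and given two sketches, computing the squared $\ell_2$-norm of their difference and subtracting the constant takes $O(k)$. The main obstacle is part 2: one must invoke the precise $\frac{2}{k}\Vert z\Vert_2^4$ second-moment bound for the Kane--Nelson sketch (a generic $O(1/k)$ JL tail bound does not pin down the leading coefficient), and one must carefully verify that the mixed covariances vanish, which relies on both the independence of the noise from $S$ and the symmetry of the Laplace law; the remaining moment estimates for $\eta-\mu$ are routine.
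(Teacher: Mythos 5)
Your proposal is correct and follows essentially the same route as the paper: establish the length-preserving property and the $\frac{2}{k}\Vert z\Vert_2^4$ variance bound for the Kane--Nelson sketch, expand the noisy squared distance (your three-term decomposition with vanishing cross-covariances is just a tidier bookkeeping of the paper's general Lemma~\ref{lem:technicalestimator} and Corollary~\ref{cor:sparserJL}), and apply the Laplace mechanism with the realization-independent $\ell_1$-sensitivity $\Delta_1=\max_j\Vert S_{\cdot j}\Vert_1=\sqrt{s}$. The only nit is that neighbouring inputs are defined by $\Vert x-x'\Vert_1\le 1$ rather than a single-coordinate difference, so you should invoke the convexity observation in Section~\ref{sec:sensitivitydef} to reduce the sensitivity to the column norms you compute; beyond that, the slight mismatch between the debiasing constant $2ks/\varepsilon^2$ and $\operatorname{E}[\Vert\eta-\mu\Vert_2^2]=4kb^2$ for scale $b=\sqrt{s}/\varepsilon$ is a constant-factor slip already present in the paper's own statement and does not affect your argument.
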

\medskip
The noise distribution $\mathcal{D}$ will depend on the sparsity of $S$ but it is crucial that $\mathcal{D}$ is otherwise independent of $S$.
We state our improvements over the work of Kenthapadi et al.~\cite{KenthapadiKMM13}:
\begin{itemize}
    \item Recall that the projection $P$ of Kenthapadi et al.~has constant $\ell_2$-sensitivity with high probability. Under this assumption, we combine Theorems \ref{thm:kethapadiSigmaPrivacy}, \ref{thm:kenthapadiUnbiasedEstVar} and Note \ref{note:kenthapadinoteone} to see that
    \[
    \operatorname{Var}[\hat{E}_{iid}]=\frac{2}{k}\Vert x-y\Vert_2^4+O\left(\frac{\log(1/\delta)}{\varepsilon^2}\Vert x-y\Vert_2^2+\frac{\log^2(1/\delta)}{\varepsilon^4}k\right),
    \] and so $\hat{E}_{SJLT}$ improves over $\hat{E}_{iid}$ in terms of variance whenever $\delta<e^{-s}=\beta^{O(1/\alpha)}$ (see Section \ref{sec:compareFJLTSparser}). In the case where $P$ has higher sensitivity, our results give an even better improvement.
    \item Kenthapadi et al.~have an additional initialization cost of $O(dk)$ to compute the sensitivity of the projection matrix. We refer the reader to Section \ref{sec:relatedworkKenthapadi} for a detailed discussion.
    \item Our estimator $\hat{E}_{SJLT}$ is more efficient as the update time, i.e., time to compute $Sx+\eta$, is $O(s\Vert x\Vert_0~+~k)$ rather than $O(k\Vert x\Vert_0+k)$ for $s=o(k)$.
    \item Rather than \emph{approximate} differential privacy, which may be insufficient for some applications, we achieve \emph{pure} differential privacy.
\end{itemize} 
Our improved efficiency in Theorem \ref{thm:main} relies on the sparsity of the Sparser JL transforms by Kane \& Nelson \cite{KaneN14}, henceforth referred to as \emph{the SJLT}. We remark that the results of Kenthapadi et al.~extend naturally to these JL transforms, and thus they would obtain the same efficiency for $\delta>\beta^{O(1/\alpha)}$. We do, although, give the analysis proving that these transforms can indeed be used. Using a SJLT instead of the i.i.d. normally distributed transform, the work of Kenthapadi et al.~would also avoid the initialization cost.

Related to our analysis for the SJLT, we remark that our main result is, in fact, a special case of an even more general result: we give a class of length preserving linear transformations that allow for efficient, private estimators for Euclidean distance with a high level of utility. The FJLT and SJLT are merely examples of such linear transformations. We define what is meant by \emph{length preserving} in section \ref{sec:lengthpreservingproperty} and prove our general, technical results in Section \ref{sec:generalization}. In Section \ref{sec:FJLT} we give two differentially private versions of FJLT
and in Section \ref{sec:applications}, we prove Theorem \ref{thm:main} by applying the technical results to the SJLT with noise from the Laplace distribution. Finally, we compare the work of Kenthapadi et al.~with our private FJLT and SJLT in Section \ref{sec:compareFJLTSparser}.

\section{Related Work}

Differential privacy is usually achieved by adding random noise to the output of a query to obfuscate the exact result, before publishing the result. This idea is easily extended to vector outputs by simply adding noise to each entry of the output vector. This technique has been studied extensively in previous work, see for example \cite{MirMNW11,PaghStausholm, HsuKR12, McSherryM09}. 

We consider a distributed setting, where party $i$ adds noise $\eta_i\sim\mathcal{D}^k$ to the projection $Sx_i$ of input vector $x_i$ and releases the noisy projection $Sx_i+\eta_i$ for future distance estimation. All parties must use the same randomized matrix $S$ and noise drawn from the same distribution $\mathcal{D}$. It is crucial that the projection matrix is public, and only the noise be kept secret.

\subsection{Versions of Johnson-Lindenstrauss Transformations}
\label{sec:versionsJLT}
We refer to the classical JL transform by Indyk \& Motwani \cite{IndykM98} as the \emph{i.i.d. normally distributed JL transform}. As the name suggests, the random projection matrix consists of i.i.d. entries from the standard Normal distribution.

The \emph{sparsity} of the random projection, i.e., the number of non-zero entries per column is an important tool in speeding up dimensionality reduction. Ailon \& Chazelle \cite{AilonC09} presented a JL transform with a sparser projection matrix with a mixture of normally distributed entries and 0s. This transform is commonly known as \emph{The Fast Johnson-Lindenstrauss Transform} or in short, \emph{FJLT}. We describe the transform in detail in Section \ref{sec:FJLTDescription}.

The sparsity not only affects the sensitivity of the transformation (see Section \ref{sec:sensitivitydef} for the definition of sensitivity), but also the time required to compute the projection of an input vector $x$. For a random projection $S$ with sparsity $s$, we can compute $Sx$ in time $O(s\Vert x\Vert_0)$.
Kane \& Nelson \cite{KaneN14} show that the JL transform of Dasgupta et al.~\cite{DasguptaKS10} requires sparsity $s=\tilde{\Omega}(\alpha^{-1}\log^2(1/\beta))$, and Nelson \& Nguyen showed that this sparsity is optimal up to a factor $O(\log(1/\alpha))$ \cite{NelsonN13}. 
Kane \& Nelson \cite{KaneN14} also give two sparser constructions with $s=\Theta(\alpha^{-1}\log(1/\beta))$ for embedding into $k=\Theta(\alpha^{-2}\log(1/\beta))$ dimensions. These transformations are commonly known as \emph{The Sparser JL Transforms} and we will henceforth refer to them as \emph{SJLT}. We describe SJLT in Section \ref{sec:Kaneconstruction}.

\subsubsection{Differentially Private JL Construction}
\label{sec:relatedworkKenthapadi}

Kenthapadi et al.~\cite{KenthapadiKMM13}, which was also discussed in Section \ref{sec:Kenthapadi}, give a private estimator for Euclidean distance relying on the i.i.d. normally distributed JL transform. 
A drawback of their construction is that the $\ell_2$-sensitivity is only 1 in \emph{expectation}, so the sensitivity \emph{might} not be small. This is the case if the random projection has even a single very large entry. The authors suggest drawing noise calibrated to a low sensitivity projection matrix independently of the actual projection matrix $P$. However, with a small probability, $P$ \emph{does not} have low sensitivity, in which case the noise is not ensured to provide differential privacy. Kenthapadi et al.~''hide'' the probability of drawing a high-sensitivity projection under $\delta$, but for a fixed $P$, either the noise provides privacy, or certain inputs would always be distinguishable, even in the presence of noise calibrated to low sensitivity. 
An alternative solution is to compute the sensitivity of the fixed $P$ and calibrate the noise to the actual sensitivity. Hence, initialization requires time $O(dk)$. 
Kenthapadi et al.~state without proof that their results extend to the JL transformations from \cite{Achlioptas03,DasguptaKS10}.
Xu et al.~\cite{XuRZQR17} extend the work of \cite{KenthapadiKMM13} with experimental comparisons with JTree \cite{ChenXZX15}, PrivBayes \cite{ZhangCPSX14}, PriView \cite{QardajiYL14} and PrivateSVM \cite{RubinsteinBHT12}.

\subsection{Differentially Private Linear Transformations}
\label{sec:relatedworkMir}
Mir et al. (PODS11)~\cite{MirMNW11} suggest a general framework for generating pan-private linear transformations by initializing with noise from the exponential mechanism. The work argues how to create a $\varepsilon$-pan private estimator for (squared) Euclidean distance with multiplicative error $(1+\gamma)$ and additive error $\operatorname{poly}(\log d, \varepsilon^{-1},\gamma^{-1},\log(q^{-1}))+O(Z)$, with probability at least $1-q$, where $Z$ is an upper bound on the entries of the input vector. 
The technique used by Mir et al.~can be used for private dimensionality reduction, but is computationally inefficient as the sketch relies on the exponential mechanism for noise addition.

In an earlier (unpublished) version of the same work, \cite{MirImprovedArXiv}, Mir et al.~analyze the \emph{cropped} second moment for a parameter $\tau$, defined for input vector $x\in\mathbb{Z}^d$ as $\sum_{i=1}^d\min\{x_i^2,\tau\}$. In this work, Mir et al.~show a $2\varepsilon$-differentially private estimator with additive error $O_{\varepsilon}(\tau\sqrt{d})$ with high probability. Differential privacy is achieved by an application of Randomized Response \cite{RandomizedResponse}. As our error depends on $\Vert x-y\Vert_2$ and $\sqrt{k}<\sqrt{d}$, we see an improvement when $x$ and $y$ are sparse. The problems are not directly comparable as the cropped second moment of Mir et al.~applies to integer inputs, whereas we consider inputs over the reals.

\subsection{When Data is Known in Advance}
If input data is known in advance, there are other techniques to achieve differential privacy. A central unit with access to all data can compute the \emph{exact} distances (up to the error incurred by the JL embedding) and add noise specifically calibrated to this distance. This technique often incurs less noise, but is not applicable in our setting, as data is split among several parties and may not all be available at once. 

Blocki et al.~\cite{BlockiBDS12} show that, as long as the projection matrix is kept secret, the i.i.d. normally distributed JL transform 
allows for differentially private estimates of distances with the accuracy guarantees from the Johnson-Lindenstrauss Lemma. 
Upadhyay \cite{upadhyay2014randomness} proves that this technique does not generally work to preserve privacy for sparser JL projections. As we consider a distributed setting, keeping the projection matrix secret is unattainable.
Bhaskar et al.~\cite{BhaskarBGLT11} introduce \emph{noiseless privacy} where the output is always exact, rather than a noisy approximation. The privacy guarantees are of a similar form as differential privacy but rely on assumptions about the distribution of the data and auxiliary information, whereas differential privacy aims for a higher level of generality.

\subsubsection{Representing Noise from Continuous Noise Distributions}
We will assume that noise is drawn from either the continuous Laplace or Gaussian distribution, which, however, may introduce practical issues.
Mironov \cite{Mironov12} described how privacy may be lost due to floating-point error when sampling noise from a continuous distribution. As an alternative to the continuous Laplace distribution, Mironov suggests the \emph{Snapping mechanism}, which incurs an additional error of approximately $\Delta_1/\varepsilon$ compared to noise from $\operatorname{Lap}(\Delta_1/\varepsilon)$, where $\Delta_1$ is the $\ell_1$-sensitivity of the query. 

\cite{securenoisegeneration} improve over the Snapping Mechanism, by drawing noise from a discrete distribution, differing from the Laplace distribution by at most a factor $(1+\frac{1+2/\varepsilon}{2^k})$ for a fixed integer $k$, which controls the accuracy of the discretization. It suffices to use $k\in[10,45]_{\mathbb{Z}}$. 

A discrete, ''hole-free'' alternative to the Gaussian distribution, requiring only expected constant time is suggested in \cite{securenoisegeneration}. The distribution builds on the Binomial distribution with parameters $n$ and $p=1/2$ and the work of \cite{BringmannKPPT14} to give a distribution which for large $n$ differs from the Gaussian distribution by at most $O(\log^{1.5}(n)/\sqrt{n})$. 
 
 In a very recent work, Canonne et al.~\cite{CanonneKS20} describe a discretization of the Gaussian distribution supported on $\mathbb{Z}$ whose variance is at most that of the corresponding continuous Gaussian distribution, and hence allows for identical or slightly better utility. Simultaneously, the discretization has sub-Gaussian tails compared to the corresponding continuous Gaussian distribution and essentially the same privacy guarantees.  We refer to the discussion in \cite{CanonneKS20} for further reading on discretizations of the Laplace and Gaussian distributions.

\subsection{Lower bounds}
McGregor et al.~\cite{mcgregor2010limits} show that any protocol for estimating Hamming distance (and so for inner product, which again leads to a protocol for estimating squared Euclidean distance)
of two binary $k$-dimensional vectors in a differentially private manner incurs an additive error of $\tilde{\Omega}(\sqrt{k})$, which is contrasted by the observation that simple Randomized Response \cite{RandomizedResponse} allows for error $O(\sqrt{k})$. The error lower bound implies a $\tilde{\Omega}(k)$ lower bound for the variance of the noisy estimator. In contrast, our  variance of the noise added (we may disregard the variance introduced by the JL projection, as this error occurs even in the non-private version) depends on $\Vert x-y\Vert_2^2\le d$ and $k$ (for binary input vectors). 

Independently from the work of McGregor et al., Mir et al.~\cite{MirMNW11} show a similar lower bound of additive error $\Omega(\sqrt{k})$ for estimating inner product for binary vectors in a pan-private setting. The lower bound by McGregor et al.~implies a lower bound for pan-private algorithms, which is weaker than the lower bound of Mir et al.~in the case of single-pass algorithms and dynamic data.
Hardt \& Talwar \cite{HardtT10} show that an $\varepsilon$-differentially private algorithm for the second frequency moment $F_2$ requires an additive error factor of $\Omega(1/\varepsilon)$, which is comparable to our result (up to polynomial and logarithmic factors).

\section{Preliminaries}
\subsection{Notation}
Let $x\in\mathbb{R}^{d}$ be an input vector. For a $k\times d$-matrix $S$, let $Sx$ be the linear transformation of $x$ under $S$. Let $\eta\sim\mathcal{D}^k$ for a noise distribution $\mathcal{D}$. We denote by $Sx+\eta$ the noisy counter-part to $Sx$ and let $\eta_*\sim\mathcal{D}$ denote a random variable drawn according to $\mathcal{D}$.
We use \emph{transformation of $x$} and \emph{projection of $x$} interchangeably as our main focus will be on random projections. Unless otherwise specified \emph{projection} always refers to a \emph{random projection}.

Denote by $\Vert x\Vert_p$ the $\ell_p$-norm of $x$ and let $1[p]$ denote the indicator variable for predicate $p$.

\subsection{Differential Privacy}
\label{sec:DP}
Intuitively, differential privacy guarantees that one cannot (confidently) distinguish between whether an output is the result generated from a specific input vector or from a \emph{neighboring} vector:
\begin{definition}[Neighboring inputs]
Vectors $x,y\in\mathbb{R}^d$ are called \emph{neighboring}, sometimes also \emph{adjacent}, if
\[
\Vert x-y\Vert_1\le 1.
\]
\end{definition}
We remark that this definition is a generalization of the natural attribute-level privacy for binary input vectors, where privacy is preserved for a single bit-flip. For user-level privacy, we suppose that the contribution of a single user affects the $\ell_1$-norm of the input vector by at most 1. This is the case when we consider example histograms. More general user-level privacy is out of scope of this work.

\begin{definition}[Differential Privacy \cite{DworkMNS06,DworkKMMN06}]
A randomized mechanism $\mathcal{M}$ preserves $(\varepsilon,\delta)$-differential privacy, or \emph{approximate} differential privacy, if for any neighboring input vectors $x$ and $y$, and for all subsets $S\subset\operatorname{Range}(\mathcal{M})$, we have 
\[
\Pr[\mathcal{M}(x)\in S]\le e^\varepsilon\Pr[\mathcal{M}(y)\in S]+\delta.
\]
where the probability is over the random choices of $\mathcal{M}$. If $\delta=0$ we say that $\mathcal{M}$ preserves $\varepsilon$-differential privacy or \emph{pure} differential privacy.
\end{definition}
A common interpretation of approximate differential privacy is that we get pure differential privacy except with probability $\delta$ \cite{Mironov17}.

\subsubsection{Sensitivity}
\label{sec:sensitivitydef}
Dwork et al.~showed that we can obtain differential privacy by adding noise calibrated to the \emph{sensitivity} of a function \cite{DworkMNS06}. We define the sensitivity of a linear transformation:

\begin{definition}[$\ell_p$-sensitivity of transformation \cite{KenthapadiKMM13}]
For $p\ge 1$, the $\ell_p$-sensitivity of a linear transformation $S~:~\mathbb{R}^d\rightarrow \mathbb{R}^k$ is 
\begin{align*}
\Delta_p(S)&=\max_{\substack{x,y\in\mathbb{R}^d\\\Vert x-y\Vert_1\le 1}}\Vert Sx-Sy\Vert_p=\max_{1\le j\le d}\left(\sum_{i=1}^k\vert S_{i,j}\vert^{p}\right)^{1/p}=\max_{1\le j\le d}\Vert S_{\cdot,j}\Vert_p
\end{align*}
where $S_{\cdot,j}$ is the $j$\textsuperscript{th} column of $S$.
\end{definition}
\begin{note}
The definition follows from the observation that any vector of $\ell_1$-norm 1 (which is the case for neighboring vectors) can be represented as a convex linear combinations of basis vectors.
\end{note}

\subsubsection{Techniques In Differential Privacy}
We present two fundamental techniques in differential privacy that we use extensively in our analysis.
\label{sec:mechanisms}
\begin{lemma}[Laplace Mechanism \cite{DworkMNS06}]
\label{lem:LaplaceMech}
For linear transformation $S\in\mathbb{R}^{k\times d}$ and input $x\in\mathbb{R}^d$, the Laplace Mechanism with parameter $b$ outputs $Sx+\eta$ for $\eta\sim\operatorname{Lap}(0,b)^k$.
Let $\Delta_1$ be the $\ell_1$-sensitivity of $S$. The Laplace Mechanism with parameter $\Delta_1\varepsilon^{-1}$ preserves $\varepsilon$-differential privacy.
\end{lemma}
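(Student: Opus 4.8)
The plan is to reduce the claim to a pointwise comparison of output densities and then integrate. Fix neighboring inputs $x,y\in\mathbb{R}^d$, so $\Vert x-y\Vert_1\le 1$. Since $\eta\sim\operatorname{Lap}(0,b)^k$ has independent coordinates, the output $Sx+\eta$ is absolutely continuous with density $p_x(z)=\prod_{i=1}^k\frac{1}{2b}\exp(-\vert z_i-(Sx)_i\vert/b)$ at $z\in\mathbb{R}^k$, and likewise $p_y$ for $Sy+\eta$. Because both output distributions have densities, it suffices to show $p_x(z)\le e^{\varepsilon}p_y(z)$ for every $z$: integrating this inequality over any measurable $T\subseteq\operatorname{Range}(\mathcal{M})=\mathbb{R}^k$ immediately yields $\Pr[Sx+\eta\in T]\le e^{\varepsilon}\Pr[Sy+\eta\in T]$, and with additive slack $\delta=0$ this is exactly pure $\varepsilon$-differential privacy.

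The core computation is the likelihood ratio. I would write
\[
\frac{p_x(z)}{p_y(z)}=\prod_{i=1}^k\exp\!\left(\frac{\vert z_i-(Sy)_i\vert-\vert z_i-(Sx)_i\vert}{b}\right)=\exp\!\left(\frac{1}{b}\sum_{i=1}^k\big(\vert z_i-(Sy)_i\vert-\vert z_i-(Sx)_i\vert\big)\right),
\]
and then bound each summand by the reverse triangle inequality, $\vert z_i-(Sy)_i\vert-\vert z_i-(Sx)_i\vert\le\vert (Sx)_i-(Sy)_i\vert$. Summing over $i$ gives $\sum_{i=1}^k(\cdots)\le\Vert Sx-Sy\Vert_1=\Vert S(x-y)\Vert_1$. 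Since $\Vert x-y\Vert_1\le 1$, the definition of $\ell_1$-sensitivity (equivalently, the maximum of $\Vert Su\Vert_1$ over $\Vert u\Vert_1\le 1$) gives $\Vert S(x-y)\Vert_1\le\Delta_1$. Hence $p_x(z)/p_y(z)\le e^{\Delta_1/b}$ for all $z$, and substituting $b=\Delta_1\varepsilon^{-1}$ makes this bound precisely $e^{\varepsilon}$, completing the proof.

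This is essentially a textbook argument, so there is no deep obstacle; the one step deserving care is the sensitivity bound $\Vert S(x-y)\Vert_1\le\Delta_1$. It relies on using $\Delta_1$ for the specific difference vector $x-y$ rather than for a basis vector, which is justified exactly by the observation recorded in the Note following the sensitivity definition: any vector of $\ell_1$-norm at most $1$ is a convex combination of signed standard basis vectors, and $u\mapsto\Vert Su\Vert_1$ is convex. I would also note in passing that if $S$ is itself random (as for the SJLT), the argument above is applied conditionally on the realization of $S$, so $b$ must be calibrated to $\Delta_1(S)$ for that realization — this is the point that motivates the later discussion of why $\mathcal{D}$ may depend on the sparsity of $S$ but nothing else.
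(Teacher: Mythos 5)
Your proof is correct and is exactly the standard density-ratio argument for the Laplace mechanism from Dwork et al., which the paper cites rather than reproves; the likelihood-ratio bound via the reverse triangle inequality, the reduction of $\Vert S(x-y)\Vert_1\le\Delta_1$ to column norms via convexity, and the remark about conditioning on the realization of $S$ are all sound. Nothing to add.
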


\begin{lemma}[Gaussian Mechanism \cite{DworkKMMN06,DworkR14}]
\label{lem:GaussMech}
For linear transform $S\in\mathbb{R}^{k\times d}$ and input $x\in\mathbb{R}^d$, the Gaussian Mechanism with parameter $\sigma$ outputs $Sx+\eta$ for $\eta\sim\mathcal{N}(0,\sigma^2)^k$.
Let $\Delta_2$ be the $\ell_2$-sensitivity of $S$. The Gaussian Mechanism with parameter $\sigma\ge \Delta_2\varepsilon^{-1}\sqrt{2\log(1.25/\delta)}$ preserves $(\varepsilon,\delta)$-differential privacy.
\end{lemma}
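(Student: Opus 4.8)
The plan is to fix two neighboring inputs $x,y\in\mathbb{R}^d$ (so $\Vert x-y\Vert_1\le 1$) and compare the two output distributions: on $x$ the mechanism returns a draw from the spherical Gaussian $\mathcal{N}(Sx,\sigma^2 I_k)$, and on $y$ from $\mathcal{N}(Sy,\sigma^2 I_k)$. The only feature distinguishing the two that matters is the displacement $b:=Sx-Sy=S(x-y)$, and by the definition of $\ell_2$-sensitivity together with $\Vert x-y\Vert_1\le 1$ we have $\Vert b\Vert_2\le\Delta_2$. First I would introduce the privacy loss random variable $L(z)=\ln\frac{p_x(z)}{p_y(z)}$, where $p_x,p_y$ are the two Gaussian densities, and invoke the standard reduction: to certify $(\varepsilon,\delta)$-differential privacy it suffices to show $\Pr_{z\sim p_x}[L(z)>\varepsilon]\le\delta$. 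This follows by splitting any measurable event $A\subseteq\mathbb{R}^k$ into the part where $L\le\varepsilon$ (whose $p_x$-mass is at most $e^\varepsilon$ times its $p_y$-mass) and the part where $L>\varepsilon$ (whose $p_x$-mass is at most $\delta$).

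Next I would reduce to one dimension. For two spherical Gaussians with common covariance $\sigma^2 I_k$ the log-density ratio is $L(z)=\frac{1}{2\sigma^2}\bigl(\Vert z-Sy\Vert_2^2-\Vert z-Sx\Vert_2^2\bigr)$; writing $z=Sx+w$ with $w\sim\mathcal{N}(0,\sigma^2 I_k)$ and expanding gives $L=\frac{\langle w,b\rangle}{\sigma^2}+\frac{\Vert b\Vert_2^2}{2\sigma^2}$. Since $\langle w,b\rangle\sim\mathcal{N}(0,\sigma^2\Vert b\Vert_2^2)$, the privacy loss is itself Gaussian, $L\sim\mathcal{N}\!\left(\frac{\Vert b\Vert_2^2}{2\sigma^2},\frac{\Vert b\Vert_2^2}{\sigma^2}\right)$, so everything depends on $b$ only through $\ell:=\Vert b\Vert_2\le\Delta_2$. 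A quick standardization shows $\Pr[L>\varepsilon]=\Pr[\mathcal{N}(0,1)>\frac{\varepsilon\sigma}{\ell}-\frac{\ell}{2\sigma}]$, whose threshold decreases in $\ell$, so the tail is monotone increasing in $\ell$ and the worst case is $\ell=\Delta_2$. This collapses the problem to the one-dimensional Gaussian mechanism with shift $\Delta_2$.

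Then I would carry out the tail estimate. Substituting $\sigma=c\Delta_2/\varepsilon$ with $c=\sqrt{2\log(1.25/\delta)}$ and setting $g=\langle w,b\rangle/(\sigma\Vert b\Vert_2)$, the condition $L>\varepsilon$ becomes $g>c-\tfrac{\varepsilon}{2c}=:t$. Applying the Gaussian tail bound $\Pr[g>t]\le\frac{1}{t\sqrt{2\pi}}e^{-t^2/2}$ and expanding $t^2=c^2-\varepsilon+\tfrac{\varepsilon^2}{4c^2}$ produces the factor $e^{-c^2/2}=\delta/1.25$, leaving $\Pr[L>\varepsilon]\le\delta\cdot\frac{e^{\varepsilon/2}}{1.25\sqrt{2\pi}\,(c-\varepsilon/(2c))}$. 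The main obstacle is exactly this last bit of bookkeeping: verifying that the constant $1.25$ in the choice of $\sigma$ absorbs the leftover factor $e^{\varepsilon/2}/(1.25\sqrt{2\pi}\,t)$ down to at most $1$, which goes through in the mild regime $\varepsilon\le 1$ with $\delta$ small enough that $t>0$; I would state this $\varepsilon\le 1$ assumption explicitly, as the stated $\sigma$ is calibrated precisely to it. Finally I would note that the symmetric lower tail of $L$ is dominated by the upper tail already bounded, so controlling the single binding tail suffices.
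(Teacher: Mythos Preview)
Your argument is the standard Dwork--Roth proof of the Gaussian mechanism and is correct (with the usual caveat, which you flag, that the constant $1.25$ is calibrated to the regime $\varepsilon\le 1$). Note, however, that the paper does not give its own proof of this lemma: it is stated as a preliminary and attributed to \cite{DworkKMMN06,DworkR14}, so there is no in-paper argument to compare against; your proposal simply reproduces the proof from the cited reference.
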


\subsection{Length Preserving Property}
\label{sec:lengthpreservingproperty}
Our technical results in Section \ref{sec:generalization} rely on linear transforms with the \emph{Length Preserving Property (LPP)}:
\begin{definition}[Length Preserving Property (LPP)]
\label{def:lengthpreservingproperty}
A random $k\times d$-projection $S$ satisfies the \emph{Length Preserving Property} if for any $x\in\mathbb{R}^d$ we have
\[
\operatorname{E}_{S}\left[\Vert Sx\Vert_2^2\right]=\Vert x\Vert_2^2.
\]
\end{definition}
Note that if $S$ satisfies LPP, then $S$ also preserves Euclidean distances and inner products, as $\langle x,y\rangle =\frac{\Vert x\Vert_2^2+\Vert y\Vert_2^2-\Vert x-y\Vert_2^2}{2}$.

\section{Technical Results}
\label{sec:generalization}
We now show our general, technical lemmas which will be useful for proving Theorem \ref{thm:main}. Let $S$ be a random $k\times d$-matrix with LPP as defined in Definition \ref{def:lengthpreservingproperty} and let $x,y\in\mathbb{R}^{d}$. Let $\mathcal{D}$ be a zero-mean distribution and $\eta,\mu\sim\mathcal{D}^k$ noise vectors. Let $\eta_*\sim \mathcal{D}$. 
We define
\[
\hat{E}_{gen}:=\Vert (Sx+\eta)-(Sy+\mu)\Vert_2^2-2k\operatorname{E}_{\mathcal{D}}[\eta_*^2].
\]
Our technical results are as follows:
\begin{restatable}{lemma}{technicalestimator}
\label{lem:technicalestimator}
We have
\begin{enumerate}
    \item $\hat{E}_{gen}$ is an unbiased estimator for $\Vert x-y\Vert_2^2$.
    \item The variance of $\hat{E}_{gen}$ is \begin{align*}
    \operatorname{Var}\left[\hat{E}_{gen}\right]&=\operatorname{Var}\left[\Vert Sx-Sy\Vert_2^2\right]+8\operatorname{E}_{\mathcal{D}}[\eta_*^2]\Vert x-y\Vert_2^2+2k\operatorname{E}_{\mathcal{D}}[\eta_*^4]+2k\operatorname{E}_{\mathcal{D}}[\eta_*^2]^2
    \end{align*}
\end{enumerate}
\end{restatable}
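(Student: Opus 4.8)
The plan is to compute the mean and variance of $\hat{E}_{gen}$ directly, exploiting the independence of the three sources of randomness: the matrix $S$, the noise vector $\eta$, and the noise vector $\mu$. First I would introduce the shorthand $z = x - y$ and $\nu = \eta - \mu \sim (\mathcal{D} - \mathcal{D})^k$, so that $\hat{E}_{gen} = \Vert Sz + \nu \Vert_2^2 - 2k\E_{\mathcal{D}}[\eta_*^2]$. Expanding the square gives $\Vert Sz + \nu\Vert_2^2 = \Vert Sz\Vert_2^2 + 2\langle Sz, \nu\rangle + \Vert \nu \Vert_2^2$. Since $\nu$ is zero-mean and independent of $S$, the cross term vanishes in expectation, and $\E[\Vert \nu\Vert_2^2] = k \cdot \E[(\eta_* - \mu_*)^2] = 2k\,\E_{\mathcal{D}}[\eta_*^2]$ because $\eta_*, \mu_*$ are i.i.d. zero-mean; combined with LPP ($\E_S[\Vert Sz\Vert_2^2] = \Vert z\Vert_2^2$) this immediately gives part (1), unbiasedness.

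For part (2), I would use the law of total variance or, more directly, the identity $\Var[\hat{E}_{gen}] = \Var[\Vert Sz\Vert_2^2 + 2\langle Sz,\nu\rangle + \Vert\nu\Vert_2^2]$ and expand it as a sum of variances plus covariances of the three terms $A := \Vert Sz\Vert_2^2$, $B := 2\langle Sz,\nu\rangle$, $C := \Vert\nu\Vert_2^2$. The key observations are: (i) $C$ is independent of $S$, so $\Cov(A,C) = 0$; (ii) conditioning on $S$, $B$ has mean zero and $\langle Sz,\nu\rangle$ is linear in $\nu$, so $\Cov(A,B) = 0$ (condition on $S$: $\E[B \mid S] = 0$ makes $\E[AB] = \E[A\,\E[B\mid S]] = 0 = \E[A]\E[B]$) and similarly $\Cov(B,C) = 0$ since $C$ is an even function of each $\nu_i$ while $B$ is odd, and the summands decouple by independence of coordinates. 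So $\Var[\hat{E}_{gen}] = \Var[A] + \Var[B] + \Var[C]$. Then $\Var[A] = \Var[\Vert Sx - Sy\Vert_2^2]$ is left as-is; $\Var[C] = \Var[\Vert\nu\Vert_2^2] = k\,\Var[(\eta_*-\mu_*)^2]$, which after expanding $(\eta_*-\mu_*)^2$ and using independence/zero-mean/symmetry of $\eta_*,\mu_*$ works out to $2k\,\E_{\mathcal{D}}[\eta_*^4] + 2k\,\E_{\mathcal{D}}[\eta_*^2]^2$ (the $\E[\eta_*^4]$, two $\E[\eta_*^2]^2$ cross terms, minus the squared mean $(2\E[\eta_*^2])^2$); and $\Var[B] = 4\,\E[\langle Sz,\nu\rangle^2] = 4\,\E_S\big[\sum_{i,j}(Sz)_i(Sz)_j\,\E_\nu[\nu_i\nu_j]\big] = 4\,\E[(\eta_*-\mu_*)^2]\,\E_S[\Vert Sz\Vert_2^2] = 8\,\E_{\mathcal{D}}[\eta_*^2]\,\Vert z\Vert_2^2$ by LPP. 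Summing these three contributions yields the claimed formula.

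The main obstacle — really a bookkeeping subtlety rather than a deep difficulty — is verifying carefully that all the cross-covariances vanish, since $S$, $\eta$, $\mu$ are not jointly independent of their products in an obvious way; the clean route is to always condition on $S$ first, use that $\nu$ is coordinatewise-independent and symmetric about $0$, and invoke the parity argument ($B$ odd in $\nu$, $A,C$ even in $\nu$) together with Fubini. A secondary point to be careful about is confirming that $\E_{\mathcal{D}}[\eta_*^4]$ and higher moments are finite — but this is a hypothesis we inherit from $\mathcal{D}$ being a reasonable zero-mean distribution (Laplace/Gaussian in the applications), so I would simply note it. The only genuinely substantive ingredient beyond elementary moment computations is the Length Preserving Property, which is used exactly twice (once for $\E[A]$, once inside $\Var[B]$), and I would flag that the $\Var[A]$ term is deliberately left unexpanded because it is transform-specific and gets bounded separately when we instantiate $S$ as the SJLT.
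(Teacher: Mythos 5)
Your proof is correct, and all the individual computations check out: $\Var[C]=k\bigl(2\E_{\mathcal{D}}[\eta_*^4]+2\E_{\mathcal{D}}[\eta_*^2]^2\bigr)$, $\Var[B]=8\E_{\mathcal{D}}[\eta_*^2]\Vert z\Vert_2^2$ via LPP, and the three covariances do vanish for exactly the reasons you give (independence of $S$ and $\nu$, conditioning on $S$ with $\E[B\mid S]=0$, and $\E[\nu_i^3]=0$ because $\nu_i=\eta_i-\mu_i$ is symmetric even when $\mathcal{D}$ is not). The route is organized differently from the paper's: the paper computes $\E[\hat{E}_{gen}^2]$ head-on, which forces a full expansion of the fourth moment $\E\bigl[\Vert Sz+\nu\Vert_2^4\bigr]$ into roughly a dozen terms (its Claim \ref{claim:firstterm}, described there as ``straightforward but tedious''), and only afterwards subtracts $\E[\hat{E}_{gen}]^2$ to let the cancellations happen. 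Your decomposition $\hat{E}_{gen}+2k\E_{\mathcal{D}}[\eta_*^2]=A+B+C$ with $\Var[A+B+C]=\Var[A]+\Var[B]+\Var[C]$ after killing the pairwise covariances does the same elementary moment bookkeeping but makes the cancellations structural rather than computational: each of the three variance terms lands directly on a piece of the final formula, and the parity/conditioning arguments replace the long sum. What your version buys is transparency and less room for arithmetic slips; what the paper's version buys is that the intermediate fourth-moment identity is stated explicitly and reusable. One small point of care: when you write $\operatorname{Cov}(B,C)=0$ ``since the summands decouple,'' the precise statement is $\E_\nu[\nu_i\Vert\nu\Vert_2^2]=\E[\nu_i^3]+\sum_{j\neq i}\E[\nu_i]\E[\nu_j^2]=0$, which needs both coordinatewise independence and the symmetry of $\nu_i$; you have both, but it is worth writing out since $\mathcal{D}$ itself is only assumed zero-mean, not symmetric.
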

\begin{proof}
See Appendix \ref{app:technicallemmas}.
\end{proof}
Hence, the variance of $\hat{E}_{gen}$ is close to the variance of the non-private estimator, but has an additional noise term depending on the output dimension $k$ and the Euclidean distance of the input vectors. The following result describes the privacy guarantees of $\hat{E}_{gen}$:
\begin{restatable}{lemma}{technicalDP}
\label{lem:technicalDP}
Let $\Delta_1$ and $\Delta_2$ be the $\ell_1$- and $\ell_2$-sensitivities of $S$, respectively. 
Let $\delta>0$ be given and define 
\[
m:=\min\left\{\Delta_1,\Delta_2\sqrt{\ln\left(1/\delta\right)}\right\}.
\]
There is a distribution $\mathcal{D}$ such that
\begin{enumerate}
    \item The sketch $(S,Sx+\eta)$ is differentially private.
    \item $\operatorname{Var}\left[\hat{E}_{gen}\right]=\operatorname{Var}\left[\Vert Sx-Sy\Vert_2^2\right]+O\left(\frac{m^2}{\varepsilon^2}\Vert x-y\Vert_2^2+\frac{m^4}{\varepsilon^4}k\right).$
    \item Given $Sx$ and $Sy$, the estimate $\hat{E}_{gen}$ can be computed in time $O(k)$.
\end{enumerate}
\end{restatable}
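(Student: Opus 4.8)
\emph{Proof plan.} The plan is to realize $\mathcal{D}$ as one of the two canonical mechanisms of Section~\ref{sec:mechanisms}, choosing whichever is calibrated to the smaller noise level so that the extra variance scales with $m$ rather than with $\Delta_1$ and $\Delta_2\sqrt{\ln(1/\delta)}$ separately. Concretely, if $\Delta_1 \le \Delta_2\sqrt{\ln(1/\delta)}$ (so $m=\Delta_1$) I take $\mathcal{D}=\operatorname{Lap}(0,b)$ with $b=\Delta_1/\varepsilon = m/\varepsilon$; otherwise (so $m=\Delta_2\sqrt{\ln(1/\delta)}$) I take $\mathcal{D}=\mathcal{N}(0,\sigma^2)$ with $\sigma=\Delta_2\varepsilon^{-1}\sqrt{2\log(1.25/\delta)}$, noting that $\sigma=O(m/\varepsilon)$ as long as $\delta$ is bounded away from $1$ (so that $\log(1.25/\delta)=O(\ln(1/\delta))$). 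In both cases $\mathcal{D}$ is a zero-mean distribution that depends on $S$ only through the numbers $\Delta_1,\Delta_2$, so $\operatorname{E}_{\mathcal{D}}[\eta_*^2]$ and $\operatorname{E}_{\mathcal{D}}[\eta_*^4]$ are constants and Lemma~\ref{lem:technicalestimator} applies unchanged.

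For part~1 I would argue privacy conditionally on $S$ and then average. For a fixed realization of $S$ the construction above is exactly the Laplace mechanism calibrated to $\Delta_1(S)$ (resp. the Gaussian mechanism calibrated to $\Delta_2(S)$), so $Sx+\eta$ is $\varepsilon$-differentially private by Lemma~\ref{lem:LaplaceMech} (resp. $(\varepsilon,\delta)$-differentially private by Lemma~\ref{lem:GaussMech}). Since $S$ is drawn independently of the input and then released, for any measurable $T$ (writing $T_S$ for its slice at $S$) we get $\Pr[(S,Sx+\eta)\in T]=\operatorname{E}_S[\Pr[Sx+\eta\in T_S\mid S]]\le \operatorname{E}_S[e^\varepsilon\Pr[Sy+\mu\in T_S\mid S]+\delta]=e^\varepsilon\Pr[(S,Sy+\mu)\in T]+\delta$, so the sketch $(S,Sx+\eta)$ is differentially private, with $\delta=0$ (pure privacy) in the Laplace case.

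For part~2 I would simply substitute the relevant second and fourth moments into the variance formula of Lemma~\ref{lem:technicalestimator}. For $\operatorname{Lap}(0,b)$ one has $\operatorname{E}[\eta_*^2]=2b^2$ and $\operatorname{E}[\eta_*^4]=24b^4$, so the noise terms become $16b^2\Vert x-y\Vert_2^2+56kb^4$; for $\mathcal{N}(0,\sigma^2)$ one has $\operatorname{E}[\eta_*^2]=\sigma^2$ and $\operatorname{E}[\eta_*^4]=3\sigma^4$, giving $8\sigma^2\Vert x-y\Vert_2^2+8k\sigma^4$. Plugging in $b=m/\varepsilon$ (resp. $\sigma=O(m/\varepsilon)$) both collapse to $O\!\left(\frac{m^2}{\varepsilon^2}\Vert x-y\Vert_2^2+\frac{m^4}{\varepsilon^4}k\right)$, which is exactly the claimed bound since the $\operatorname{Var}[\Vert Sx-Sy\Vert_2^2]$ term carries over verbatim. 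Part~3 is then immediate: given the two $k$-dimensional noisy sketches, $\hat{E}_{gen}$ is the squared Euclidean norm of their difference minus the precomputed constant $2k\operatorname{E}_{\mathcal{D}}[\eta_*^2]$ (namely $4kb^2$ or $2k\sigma^2$), which costs $O(k)$ arithmetic.

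The argument is mostly bookkeeping on top of Lemma~\ref{lem:technicalestimator}, so I do not expect a serious obstacle; the points needing a little care are (i) making the case split on the $\min$ so the variance really degrades as $m$ and not as the larger of the two calibration levels, (ii) the comparison $\log(1.25/\delta)=O(\ln(1/\delta))$ in the Gaussian case, which requires $\delta$ bounded away from $1$ (harmless, since $\delta<1$ in any nontrivial regime), (iii) the second/fourth moment identities for the Laplace and Gaussian laws, and (iv) keeping the dependence of $\mathcal{D}$ on $S$ confined to the sensitivities $\Delta_1,\Delta_2$ (treated as known, or computed once) so that publishing $S$ is harmless and Lemma~\ref{lem:technicalestimator} applies without modification.
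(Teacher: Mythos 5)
Your proposal is correct and follows essentially the same route as the paper: instantiate $\mathcal{D}$ as $\operatorname{Lap}(\Delta_1/\varepsilon)$ or $\mathcal{N}(0,\sigma^2)$ with $\sigma\ge\Delta_2\varepsilon^{-1}\sqrt{2\log(1.25/\delta)}$ according to which term attains the minimum in $m$, invoke Lemmas~\ref{lem:LaplaceMech}/\ref{lem:GaussMech} for privacy, and substitute the second and fourth moments of the chosen distribution into Lemma~\ref{lem:technicalestimator} (your explicit constants $16b^2\Vert x-y\Vert_2^2+56kb^4$ and $8\sigma^2\Vert x-y\Vert_2^2+8k\sigma^4$ agree with the paper's Note~\ref{note:expectations}). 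Your added care about conditioning on $S$ and about $\log(1.25/\delta)=O(\ln(1/\delta))$ requiring $\delta$ bounded away from $1$ are minor refinements of, not departures from, the paper's argument.
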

\begin{proof}
We show that it suffices to let $\mathcal{D}$ be either the Normal or Laplace distribution for well-chosen parameters. We start with the following useful note:
\begin{note}
\label{note:expectations}
Let $n!!$ be the product of the numbers $1,...,n$ that have the same parity as $n$.
For $L\sim\operatorname{Lap}(b)$ and $G\sim\mathcal{N}(0,\sigma^2)$, we have 
\begin{align*}
&\forall n\in\mathbb{N}:\ \operatorname{E}_{\mathcal{D}}[L^n]=\frac{n!}{(b^{-1})^n}\\&\text{for even $n$:}\ \operatorname{E}_{\mathcal{D}}[G^n]=(n-1)!!\sigma^n.
\end{align*}
\end{note}

By Lemma \ref{lem:GaussMech}, the noisy projection $Sx+\eta$ is $(\varepsilon,\delta)$-differentially private for $\mathcal{D}=\mathcal{N}(0,\sigma^2)$ with $\sigma\ge \frac{\Delta_2}{\varepsilon}\sqrt{2\ln\left(1.25/\delta\right)}$. By the post-processing property of differential privacy, $\hat{E}_{gen}$ is also $(\varepsilon,\delta)$-differentially private.
From Lemma \ref{lem:technicalestimator} and Note \ref{note:expectations}
\begin{equation}
\begin{split}
\label{expr:noisetermGaussian}
    \operatorname{Var}\left[\hat{E}_{gen}\right]&=\operatorname{Var}\left[\Vert Sx-Sy\Vert_2^2\right]+O\left(\frac{\Delta_2^2\ln\left(\frac{1}{\delta}\right)}{\varepsilon^2}\Vert x-y\Vert_2^2+\frac{\Delta_2^4\ln^2\left(\frac{1}{\delta}\right)}{\varepsilon^4}k\right).
\end{split}
\end{equation}

Similarly, by Lemma \ref{lem:LaplaceMech} $\hat{E}_{gen}$ is $\varepsilon$-differentially private for $\mathcal{D}=\operatorname{Lap}(\Delta_1/\varepsilon)$, and from Lemma \ref{lem:technicalestimator} and Note \ref{note:expectations} we get
\begin{align}
\label{expr:noisetermLaplace}
    \operatorname{Var}\left[\hat{E}_{gen}\right]&=\operatorname{Var}\left[\Vert Sx-Sy\Vert_2^2\right]+O\left(\frac{\Delta_1^2}{\varepsilon^2}\Vert x-y\Vert_2^2+\frac{\Delta_1^4}{\varepsilon^4}k\right).
\end{align}
Finally, we can draw noise from the Laplace or the Normal distribution in constant time.
\end{proof}

\begin{note}
\label{note:technicalnoteDistributions}
As seen in the proof of Lemma \ref{lem:technicalDP}, letting $\mathcal{D}=Lap(\Delta_1/\varepsilon)$ gives $m=\Delta_1$ and letting $\mathcal{D}=\mathcal{N}(0,\sigma^2)$ for $\sigma\ge \Delta_2\varepsilon^{-1}\sqrt{2\ln(1.25/\delta)}$ gives $m=\Delta_2\sqrt{\ln(1/\delta)}$.
We wish to choose the $\mathcal{D}$ which minimizes $\operatorname{Var}[\hat{E}_{gen}]$. Ignoring constants, (\ref{expr:noisetermLaplace}) is upper bounded by (\ref{expr:noisetermGaussian}) when 
\begin{align}
\label{expr:Delta1vsDelta2}
\Delta_1<\Delta_2\sqrt{\ln(1/\delta)}\quad \Leftrightarrow\quad \delta<e^{-\Delta_1^2/\Delta_2^2}.
\end{align}
Hence, when (\ref{expr:Delta1vsDelta2}) is satisfied, we use $\mathcal{D}=\operatorname{Lap}(\Delta_1/\varepsilon)$ and otherwise let $\mathcal{D}=\mathcal{N}(0,\sigma^2)$ with $\sigma\ge \Delta_2\varepsilon^{-1}\sqrt{2\log(1.25/\delta)}$.
\end{note}

\section{Private Fast Johnson-Lindenstrauss Transform}
\label{sec:FJLT}
We now discuss a private version of the Fast Johnson-Lindenstrauss transform (FJLT) by Ailon \& Chazelle \cite{AilonC09}. We first remind the reader of the non-private transform in Section \ref{sec:FJLTDescription} and then give two private versions in Section \ref{sec:privateFJLT}. 

\subsection{Description of (non-private) Fast Johnson-Lindenstrauss Transform (FJLT)}
\label{sec:FJLTDescription}

We are concerned only with the transform preserving $\ell_2$-distances, but refer the reader to \cite{AilonC09} for the transform preserving $\ell_1$-distances as well as the analysis for the transforms.

FJLT is a random distribution of linear mappings $\Phi:\mathbb{R}^d\rightarrow \mathbb{R}^k$ with $k=O(\log(1/\beta)/\alpha^2)$, such that for $\alpha,\beta\in(0,1/2)$, with probability at least $1-\beta$
\[
(1-\alpha)k\Vert x\Vert_2^2\le \Vert \Phi x\Vert_2^2\le (1+\alpha)k\Vert x\Vert_2^2.
\]
For given values of $d, \alpha,\beta$, we describe how to obtain the random mapping $\Phi$ as the product of three real valued matrices, $P,H$ and $D$:  
\begin{itemize}
    \item $D$ is a random $d\times d$-diagonal matrix with $D_{jj}$ drawn independently from $\{-1,+1\}$ with probability $1/2$.
    \item $H$ is a $d\times d$-normalized Hadamard matrix such that for $f,j\in[d]$
    \[
    H_{fj}=\frac{1}{\sqrt{d}}(-1)^{\langle f-1,j-1\rangle}
    \]
    where $\langle f,j\rangle$ is the dot-product between vectors expressing $f$ and $j$ in binary representation.
    \item $P$ is a random $k\times d$-matrix whose entries are independently either normally distributed or 0. Specifically, for 
    \[
    q=\min\left\{\Theta\left(\frac{\log^2(1/\beta)}{d}\right),1\right\}
    \]
    we let $P_{if}$ be drawn (independently) from $\mathcal{N}(0,q^{-1})$ with probability $q$ and $P_{if}=0$ with probability $1-q$ for $i\in[k]$ and $f\in[d]$.
\end{itemize}
The transform $\Phi$ is defined as 
\[
\Phi:=PHD.
\]

To formalize, we get the following lemma:
\begin{lemma}[Lemma 2.1 from \cite{AilonC09}]
\label{lem:FJLT}
Let $\alpha,\beta\in(0,1/2)$ and let $\Phi$ be a random $k\times d$-projection matrix as described above. Let $x\in\mathbb{R}^d$. With probability at least $1-\beta$, the following two events occur:
\begin{itemize}
    \item $(1-\alpha)k\Vert x\Vert_2^2\le \Vert \Phi x\Vert_2^2\le (1+\alpha)k\Vert x\Vert_2^2$.
    \item The mapping $\Phi:\mathbb{R}^d\rightarrow \mathbb{R}^k$ requires time
    \[
    O\left(d\log d+dq\frac{\log(1/\beta)}{\alpha^{2}}\right)
    \]
    for $q=\min\left\{\Theta\left(\frac{\log^2(1/\beta)}{d}\right),1\right\}.$
\end{itemize}
\end{lemma}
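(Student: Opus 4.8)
The plan is to follow the two-phase structure underlying all Fast Johnson--Lindenstrauss analyses: first argue that the orthogonal ``preconditioner'' $HD$ spreads any input vector out (makes its $\ell_\infty$-norm small relative to its $\ell_2$-norm), and then argue that the sparse Gaussian matrix $P$ preserves the squared $\ell_2$-norm of an already spread-out vector. By linearity of $\Phi = PHD$ it suffices to treat a fixed $x$ with $\Vert x\Vert_2 = 1$. Since $H$ is orthogonal and $D$ is a signed diagonal matrix, $HD$ is orthogonal, so $u := HDx$ satisfies $\Vert u\Vert_2 = 1$ deterministically. Each coordinate $u_i = \tfrac{1}{\sqrt d}\sum_j (-1)^{\langle i-1,\,j-1\rangle} D_{jj} x_j$ is a sum of independent mean-zero terms of magnitude at most $|x_j|/\sqrt d$, so Hoeffding's inequality gives $\Pr[|u_i| > t] \le 2\exp(-t^2 d/2)$; a union bound over the $d$ coordinates shows that, except with probability $\beta/3$, we have $\Vert u\Vert_\infty \le s$ with $s = O(\sqrt{\log(d/\beta)/d})$, hence also $\sum_f u_f^4 \le s^2\Vert u\Vert_2^2 = s^2$.

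Conditioning on such a $u$, write $P_{if} = b_{if} r_{if}$ with $b_{if}\sim\mathrm{Bernoulli}(q)$ and $r_{if}\sim\mathcal{N}(0,q^{-1})$, all independent. Then $\operatorname{E}[\Vert Pu\Vert_2^2] = \sum_{i,f}\operatorname{E}[P_{if}^2]u_f^2 = k$, and conditioning additionally on the $b_{if}$ yields $\Vert Pu\Vert_2^2 = \sum_{i=1}^k Y_i g_i^2$ with $g_i\sim\mathcal{N}(0,1)$ independent and weights $Y_i = q^{-1}\sum_f b_{if}u_f^2$ satisfying $\operatorname{E}[Y_i] = 1$ and $\operatorname{Var}[Y_i] \le q^{-1}\sum_f u_f^4 \le s^2/q$. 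The crux is to show that, except with probability $\beta/3$ over the $b_{if}$, one has simultaneously $\sum_i Y_i = (1\pm\alpha/2)k$ and $\max_i Y_i = O(1)$; both follow from Bernstein/Chernoff bounds based on $\operatorname{Var}[Y_i]\le s^2/q$ and the boundedness of the summands by $s^2$, provided the hidden constant in $q = \Theta(\log^2(1/\beta)/d)$ is taken large enough. Conditioned on such a realization, $\Vert Pu\Vert_2^2$ is a weighted sum of $\chi^2_1$ variables whose weight vector has small $\ell_\infty$-norm and total weight $(1\pm\alpha/2)k$, so a standard tail bound for such quadratic forms (e.g.\ Hanson--Wright) gives $\Vert Pu\Vert_2^2 = (1\pm\alpha)k\Vert x\Vert_2^2$ except with probability $\beta/3$, using $k = \Theta(\log(1/\beta)/\alpha^2)$. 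When $d = O(\log^2(1/\beta))$ this sparse analysis would degrade, so there one instead sets $q = 1$, i.e.\ $P$ is a dense i.i.d.\ $\mathcal{N}(0,1)$ matrix; then $\Vert \Phi x\Vert_2^2 = \Vert Pu\Vert_2^2$ is distributed as $\Vert x\Vert_2^2$ times a $\chi^2_k$ variable and the first bullet is the classical Johnson--Lindenstrauss tail bound. Undoing the conditioning and a union bound over the (at most three) failure events of probability $\beta/3$ gives the first bullet.

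For the running-time bullet: applying $D$ costs $O(d)$, applying $H$ via the fast Walsh--Hadamard transform costs $O(d\log d)$, and applying $P$ to a dense vector costs $O(\mathrm{nnz}(P))$. Since $\mathrm{nnz}(P)\sim\mathrm{Bin}(kd,q)$ has mean $kdq = \Theta(dq\log(1/\beta)/\alpha^2)$, a Chernoff bound shows $\mathrm{nnz}(P) = O(dq\log(1/\beta)/\alpha^2)$ with probability at least $1-\beta$ (in the $q=1$ case this holds deterministically), so the total time is $O(d\log d + dq\log(1/\beta)/\alpha^2)$ as claimed, and this estimate can be folded into the earlier union bound.

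The main obstacle is the concentration in the projection step. The fourth moments $\operatorname{E}[P_{if}^4] = 3/q$ blow up as $q\to 0$, and controlling this heavy-tailed behaviour is exactly what forces the quadratic dependence $q = \Theta(\log^2(1/\beta)/d)$ rather than $\Theta(\log(1/\beta)/d)$; moreover one must track how the preconditioning bound $s$, the sparsity $q$, and the target dimension $k$ interact so that the Bernstein, Chernoff, and $\chi^2$/Hanson--Wright tail probabilities can all be pushed below $\beta/3$ simultaneously (this implicitly uses that $\log d$ is not too large relative to $\log(1/\beta)$, the regime in which the lemma is stated). The case split on whether $q=1$ is precisely what prevents the analysis from breaking down when $d$ is small.
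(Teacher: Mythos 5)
The paper does not actually prove this lemma---it is Lemma~2.1 of Ailon \& Chazelle and the ``proof'' is just the citation---so the relevant comparison is with the original argument, and your sketch reconstructs it correctly: the same two-phase decomposition (the orthogonal preconditioner $HD$ flattening $\Vert HDx\Vert_\infty$ via Hoeffding plus a union bound over coordinates, then the sparse Gaussian $P$ preserving the norm of a flattened vector), with the same role for $q=\Theta(\log^2(1/\beta)/d)$ and the same $q=1$ fallback when $d$ is small. Your only real deviation is presentational, in the second phase: you condition on the Bernoulli sparsity pattern and invoke a weighted-$\chi^2$/Hanson--Wright tail for $\sum_i Y_i g_i^2$ rather than Ailon--Chazelle's direct moment-generating-function bound on the row sums; both routes work, and the points you flag yourself (the summands of $Y_i$ are bounded by $s^2/q$ rather than $s^2$, and the calibration of $s$, $q$, $k$ needs $\log d = O(\log(1/\beta))$ and a large enough constant in $q$) are exactly the constant-chasing the original carries out.
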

\begin{proof}
See \cite{AilonC09}.
\end{proof}

We will henceforth concern ourselves with the \emph{normalized} FJLT, $1/\sqrt{k}\cdot\Phi$, such that 
\[
(1-\alpha)\Vert x\Vert_2^2\le 1/k\Vert \Phi x\Vert_2^2\le (1+\alpha)\Vert x\Vert_2^2.
\]

\begin{restatable}{lemma}{fjltLPP}
\label{lem:FJLTLPP}
The normalized FJLT satisfies LPP (see Definition \ref{def:lengthpreservingproperty}).
\end{restatable}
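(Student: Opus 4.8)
The plan is to unwind the definition $\Phi = PHD$ of the (un-normalized) FJLT and compute
\[
\operatorname{E}_{\Phi}\left[\left\Vert \tfrac{1}{\sqrt{k}}\Phi x\right\Vert_2^2\right]=\tfrac{1}{k}\operatorname{E}_{P,D}\left[\Vert PHDx\Vert_2^2\right]
\]
by conditioning first on $D$ and then taking the expectation over $P$ (recall $P$, $H$, $D$ are generated independently). Two elementary facts do all the work: (i) for every fixed sign pattern $D$, the matrix $HD$ is orthogonal, so $\Vert HDx\Vert_2=\Vert x\Vert_2$ deterministically; and (ii) the entries of $P$ are independent, mean-zero, and satisfy $\operatorname{E}[P_{if}^2]=1$ — the variance $q^{-1}$ was chosen precisely so that the sparsification by probability $q$ preserves the second moment.

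First I would record that $HD$ is orthogonal: $D$ is a $\pm 1$ diagonal matrix, so $D^{\top}D=D^2=I$; and $H$ is a normalized Hadamard matrix, whose rows are orthonormal (a one-line computation with the formula $H_{fj}=\tfrac{1}{\sqrt d}(-1)^{\langle f-1,j-1\rangle}$ and the fact that $\sum_{c}(-1)^{\langle a,c\rangle}=0$ unless $a=0$), hence $H^{\top}H=I$. Therefore $(HD)^{\top}(HD)=D^{\top}H^{\top}HD=D^{\top}D=I$, so $\Vert HDx\Vert_2^2=\Vert x\Vert_2^2$ for every realization of $D$ and every $x$. Writing $z=z(D):=HDx$, this reduces the claim to showing $\operatorname{E}_{P}\left[\Vert Pz\Vert_2^2\right]=k\Vert z\Vert_2^2$ for an arbitrary fixed vector $z$.

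Next I would expand $\Vert Pz\Vert_2^2=\sum_{i=1}^k\bigl(\sum_{f=1}^d P_{if}z_f\bigr)^2$ and take the expectation over $P$. By independence of the $P_{if}$ and $\operatorname{E}[P_{if}]=0$, the cross terms vanish, so $\operatorname{E}_P\bigl[(\sum_f P_{if}z_f)^2\bigr]=\sum_f \operatorname{E}[P_{if}^2]\,z_f^2=\sum_f z_f^2=\Vert z\Vert_2^2$, where $\operatorname{E}[P_{if}^2]=q\cdot q^{-1}=1$ by the definition of $P$. Summing over the $k$ rows gives $\operatorname{E}_P[\Vert Pz\Vert_2^2]=k\Vert z\Vert_2^2$. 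Taking the outer expectation over $D$ and substituting $\Vert z(D)\Vert_2^2=\Vert x\Vert_2^2$ from the previous step yields $\operatorname{E}_{\Phi}[\Vert PHDx\Vert_2^2]=k\Vert x\Vert_2^2$, hence $\operatorname{E}_{\Phi}[\Vert\tfrac{1}{\sqrt k}\Phi x\Vert_2^2]=\Vert x\Vert_2^2$, which is exactly LPP (Definition \ref{def:lengthpreservingproperty}).

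There is no real obstacle here: the only points needing care are checking that $\operatorname{E}[P_{if}^2]=1$ (so sparsification is second-moment-preserving) and that $H$ is genuinely orthogonal rather than merely scaled-orthogonal — both immediate from the stated definitions. The argument also makes transparent why the randomness in $D$ plays no role for LPP: it is needed only for the concentration statement of Lemma \ref{lem:FJLT}, not for the expectation.
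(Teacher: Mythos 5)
Your proof is correct, and it takes a slightly different route from the paper's. The paper proves LPP by expanding $\Vert \Phi x\Vert_2^2$ entrywise and invoking its precomputed moment primitives for the entries of $\Phi=PHD$, namely $\E_{\Phi}[\Phi_{ij}\Phi_{i\ell}]=0$ for $j\neq\ell$ (which it gets from the mean-zero randomness of $D$ together with the column orthogonality of $H$) and $\E_{\Phi}[\Phi_{ij}^2]=1$. You instead condition on $D$, observe that $HD$ is exactly orthogonal for every realization of $D$, and thereby reduce the claim to the single statement $\E_P[\Vert Pz\Vert_2^2]=k\Vert z\Vert_2^2$ for a fixed vector $z$, which follows from independence, $\E[P_{if}]=0$, and $\E[P_{if}^2]=q\cdot q^{-1}=1$. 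Both arguments are elementary second-moment computations of essentially the same length, but your decomposition buys some clarity: it makes transparent that the randomness in $D$ (and indeed in the sign pattern at all) is irrelevant for the expectation statement, and it isolates the one design choice that matters, namely that the sparsified Gaussian entries of $P$ have unit second moment. The paper's entrywise route is less modular here but pays off elsewhere, since the same table of primitives is reused for the fourth-moment/variance computations in Lemma \ref{lem:varianceunderphi}, where the orthogonality shortcut no longer suffices on its own.
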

\begin{proof}
See Appendix \ref{app:FJLTLPP}.
\end{proof}

\begin{restatable}{lemma}{varianceFJLT}
\label{lem:varianceFJLT}
Let $x,y\in\mathbb{R}^d$ and let $\Phi$ be the FJLT as described above. Then 
\[
\Var[1/k\Vert \Phi x-\Phi y\Vert_2^2]\le \frac{3}{k}\Vert x-y\Vert_2^4.
\]
\end{restatable}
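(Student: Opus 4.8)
The plan is to reduce to the single vector $z := x - y$ and to exploit the product structure $\Phi = PHD$, treating the randomness of $D$ and of $P$ separately through the law of total variance. Write $w := HDz$. Since the normalized Hadamard matrix $H$ is orthogonal and $D$ has $\pm 1$ diagonal entries, $\|w\|_2 = \|z\|_2$ for every realization of $D$. Moreover, conditioned on $D$, the rows $P_1,\dots,P_k$ of $P$ are i.i.d.\ and $\operatorname{E}_P\!\left[\tfrac1k\|Pw\|_2^2 \mid D\right] = \|w\|_2^2 = \|z\|_2^2$, which is a constant; hence $\Var_D\!\left[\operatorname{E}_P[\,\cdot\mid D]\right] = 0$ and $\Var\!\left[\tfrac1k\|\Phi z\|_2^2\right] = \operatorname{E}_D\!\left[\Var_P\!\left[\tfrac1k\|Pw\|_2^2 \mid D\right]\right]$.

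Next I would compute the inner variance for a fixed $w$. Because $\tfrac1k\|Pw\|_2^2 = \tfrac1k\sum_{i=1}^k (P_i\cdot w)^2$ is an average of $k$ i.i.d.\ terms, its variance equals $\tfrac1k\big(\operatorname{E}[(P_1\cdot w)^4] - \operatorname{E}[(P_1\cdot w)^2]^2\big)$. Each entry $P_{if}$ is the product of an independent $\mathrm{Bernoulli}(q)$ variable and an $\mathcal N(0,1/q)$ variable, so it is mean zero with $\operatorname{E}[P_{if}^2] = 1$ and $\operatorname{E}[P_{if}^4] = 3/q$. The standard fourth-moment identity for a sum of independent mean-zero variables then gives $\Var_P\!\left[\tfrac1k\|Pw\|_2^2 \mid D\right] = \tfrac1k\big(2\|z\|_2^4 + (3/q - 3)\sum_{f=1}^d w_f^4\big)$.

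The remaining work is to bound $\operatorname{E}_D\big[\sum_{f} w_f^4\big]$, which is where the Hadamard ``flattening'' is used. For each fixed $f$, $w_f = \tfrac1{\sqrt d}\sum_j (-1)^{\langle f-1,j-1\rangle} D_{jj} z_j$ is a Rademacher sum in the signs $D_{jj}$ with coefficients of absolute value $|z_j|/\sqrt d$, so the fourth-moment identity again gives $\operatorname{E}_D[w_f^4] = d^{-2}\big(3\|z\|_2^4 - 2\sum_j z_j^4\big) \le 3\|z\|_2^4/d^2$, and summing over the $d$ values of $f$ yields $\operatorname{E}_D\big[\sum_f w_f^4\big] \le 3\|z\|_2^4/d$. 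Substituting, $\Var\!\left[\tfrac1k\|\Phi z\|_2^2\right] \le \tfrac1k\big(2 + \tfrac3d(\tfrac3q - 3)\big)\|z\|_2^4$, and I would conclude by distinguishing the two cases of $q = \min\{\Theta(\log^2(1/\beta)/d),1\}$: when $q = 1$ the extra term vanishes, and when $q < 1$ the definition of $q$ forces $\tfrac3d\cdot\tfrac3q \le 1$ (using $\log(1/\beta) = \Omega(1)$ and that the constant hidden in $q$ is large enough, as already required by Lemma~\ref{lem:FJLT}), so in either case $\Var\!\left[\tfrac1k\|\Phi x - \Phi y\|_2^2\right] \le \tfrac3k\|x - y\|_2^4$.

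I expect the bound on $\operatorname{E}_D\big[\sum_f w_f^4\big]$ to be the main obstacle: estimating $\sum_f w_f^4$ deterministically is far too lossy (it can be as large as $\|z\|_2^4$, a factor $d$ too big when $z$ is sparse), so one really must average over $D$ and use that the Hadamard transform spreads the mass of $z$ evenly across coordinates; carrying the constants carefully through this step, and through the final comparison with the size of $q$, is the delicate part.
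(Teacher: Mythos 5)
Your proposal is correct and reaches exactly the same final expression as the paper, namely $\Var[\tfrac1k\Vert\Phi z\Vert_2^2]=\tfrac1k\bigl(2+\tfrac9d(\tfrac1q-1)\bigr)\Vert z\Vert_2^4-\tfrac{6}{dk}(\tfrac1q-1)\Vert z\Vert_4^4$, together with the same requirement on $q$ (the paper states it as $q\ge\frac{1}{d/9+1}$; like you, it leaves the verification that the FJLT's $q=\min\{\Theta(\log^2(1/\beta)/d),1\}$ meets this to the hidden constant). The route is organized differently, though. The paper (Lemma \ref{lem:varianceunderphi} via the primitives of Appendix \ref{app:primitives}) expands $\E_\Phi[\Phi_{ij}^2\Phi_{n\ell}^2]$ and $\E_\Phi[(\Phi x)_i^2(\Phi y)_n^2]$ unconditionally, casing over all index patterns $(i,n,j,\ell)$ and using Hadamard row orthogonality inside those sums. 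You instead condition on $D$, kill the outer term of the law of total variance by observing $\E_P[\tfrac1k\Vert Pw\Vert_2^2\mid D]=\Vert z\Vert_2^2$ is constant, reduce the inner variance to a single-row fourth-moment computation using that the rows of $P$ are i.i.d., and isolate the Hadamard flattening as the separate estimate $\E_D[\sum_f w_f^4]\le 3\Vert z\Vert_2^4/d$ via the Rademacher fourth-moment identity. Your decomposition is more modular and makes it transparent exactly where the $\tfrac{3}{q}$ fourth moment of $P$ and the $1/d$ flattening gain enter, and your closing remark is apt: the averaging over $D$ is genuinely necessary, since a worst-case bound on $\sum_f w_f^4$ loses a factor $d$ for sparse $z$. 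The paper's approach has the minor advantage that its primitives are reused verbatim for the cross terms $\E[(\Phi x)_i^2(\Phi\eta)_i^2]$ needed later in the private-FJLT variance proof, whereas your conditioning argument would need to be redone there.
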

\begin{proof}
The proof follows directly from Lemma \ref{lem:varianceunderphi} in Appendix \ref{app:varianceunderphi}.
\end{proof}

\subsection{Private FJLT}
\label{sec:privateFJLT}
In this section, we argue how to construct a differentially private version of FJLT by adding Gaussian noise to the input.

If we simply exchange the i.i.d. normally distributed JL transform for FJLT in the work of Kenthapadi et al.~\cite{KenthapadiKMM13}, we get the following result.  Note that the $\ell_2$-sensitivity of the (normalized) projection is concentrated around 1, which justifies the choice of Gaussian noise.
\begin{corollary}
\label{cor:FJLToutput}
Let $\Phi$ be a random $k\times d$-FJLT and let $x,y\in\mathbb{R}^d$ be input vectors. Let $\Delta_2$ be the $\ell_2$-sensitivity of $\Phi$ and let $\eta,\mu\sim \mathcal{N}(0,\sigma^2)^k$ for $\sigma\ge \Delta_2 \varepsilon^{-1}\sqrt{2\log(1.25/\delta)}$ be noise vectors. Define
\[
\hat{E}_{FJLT_o}:=\frac{1}{k}\Vert (\Phi x+\eta)-(\Phi y+\mu)\Vert_2^2-2k\sigma^2
\]
\begin{itemize}
    \item $\hat{E}_{FJTL_o}$ is an unbiased estimator for $\Vert x-y\Vert_2^2$.
    \item The estimator has variance 
    \begin{align*}
    \Var\left[\hat{E}_{FJLT_o}\right]\le \frac{3}{k}\Vert x-y\Vert_2^4+O\left(k\sigma^4+\sigma^2\Vert x-y\Vert_2^2\right).
\end{align*}
    \item The sketch $(\Phi, \Phi x+\eta)$ is $(\varepsilon,\delta)$-differentially private.
    \item The sketch $\Phi x+\eta$ can be computed in time \[O\left(\max\left\{d\log d,\frac{dq\log(1/\beta)}{\alpha^2}\right\}\right)\] for $q=\min\{\Theta(\log^2(1/\beta)/d),1\}$.
    
\end{itemize}
\end{corollary}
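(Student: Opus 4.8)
The plan is to obtain Corollary~\ref{cor:FJLToutput} purely as a consequence of the technical lemmas of Section~\ref{sec:generalization} applied to the FJLT. Concretely, I would observe that $\hat{E}_{FJLT_o}$ is exactly the general noisy estimator $\hat{E}_{gen}$ instantiated with $S$ equal to the normalized FJLT $\tfrac1{\sqrt k}\Phi$ and with Gaussian noise $\mathcal{D}=\mathcal{N}(0,\sigma^2)$. The prerequisite for this instantiation is that $S$ satisfies the Length Preserving Property, which is precisely Lemma~\ref{lem:FJLTLPP}. The one bit of bookkeeping needed here is to check that the $1/k$ factor, the noise scale, and the subtracted bias-correction constant written in the statement line up with the $2k\operatorname{E}_{\mathcal{D}}[\eta_*^2]$ term in the definition of $\hat{E}_{gen}$. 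Once that is settled, part~1 (unbiasedness) is immediate from Lemma~\ref{lem:technicalestimator}(1).

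For part~2 I would start from the variance formula of Lemma~\ref{lem:technicalestimator}(2),
\[
\operatorname{Var}\!\big[\hat{E}_{gen}\big]=\operatorname{Var}\!\big[\Vert Sx-Sy\Vert_2^2\big]+8\operatorname{E}_{\mathcal{D}}[\eta_*^2]\Vert x-y\Vert_2^2+2k\operatorname{E}_{\mathcal{D}}[\eta_*^4]+2k\operatorname{E}_{\mathcal{D}}[\eta_*^2]^2,
\]
and substitute two ingredients. First, the FJLT-specific bound $\operatorname{Var}[\Vert Sx-Sy\Vert_2^2]\le \tfrac3k\Vert x-y\Vert_2^4$ from Lemma~\ref{lem:varianceFJLT}. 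Second, the Gaussian moments from Note~\ref{note:expectations}, namely $\operatorname{E}_{\mathcal{D}}[\eta_*^2]=\sigma^2$ and $\operatorname{E}_{\mathcal{D}}[\eta_*^4]=3\sigma^4$. This yields $\operatorname{Var}[\hat{E}_{FJLT_o}]\le \tfrac3k\Vert x-y\Vert_2^4+8\sigma^2\Vert x-y\Vert_2^2+8k\sigma^4$, i.e.\ the claimed bound with the $O(\cdot)$ absorbing constants. I would not re-derive the cancellation of the cross terms here --- in particular the term linear in $\langle\Phi(x-y),\eta-\mu\rangle$ and the one coupling it to $\Vert\eta-\mu\Vert_2^2$ --- since these are already handled inside Lemma~\ref{lem:technicalestimator} using $\operatorname{E}[\eta_*]=\operatorname{E}[\eta_*^3]=0$ and the independence of $\Phi$ from the noise.

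For part~3 I would invoke the Gaussian mechanism (Lemma~\ref{lem:GaussMech}): the transform $\Phi$ is public and data-independent, so adding $\mathcal{N}(0,\sigma^2)^k$ noise with $\sigma\ge \Delta_2\varepsilon^{-1}\sqrt{2\log(1.25/\delta)}$, where $\Delta_2$ is the $\ell_2$-sensitivity of the realized projection, makes the released sketch $(\Phi,\Phi x+\eta)$ satisfy $(\varepsilon,\delta)$-differential privacy; since $\hat{E}_{FJLT_o}$ is a function of the two released sketches alone, it remains $(\varepsilon,\delta)$-private by post-processing. For part~4, computing the sketch is the FJLT evaluation plus the addition of $k$ Gaussian coordinates: by Lemma~\ref{lem:FJLT} the former costs $O(d\log d+dq\log(1/\beta)/\alpha^2)$ with $q=\min\{\Theta(\log^2(1/\beta)/d),1\}$, and the latter costs $O(k)=O(\log(1/\beta)/\alpha^2)$, which is absorbed, giving the stated $O(\max\{d\log d,\,dq\log(1/\beta)/\alpha^2\})$.

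I do not expect a genuine obstacle, as this is a corollary; the step most worth getting right is the normalization/scaling bookkeeping in part~1, since the FJLT as defined in Section~\ref{sec:FJLTDescription} is unnormalized while the LPP (and hence Lemma~\ref{lem:technicalestimator}) refers to $\tfrac1{\sqrt k}\Phi$, so one must ensure the $1/k$ in $\hat{E}_{FJLT_o}$, the noise variance $\sigma^2$, and the subtracted constant are mutually consistent so that the estimator is truly unbiased. A secondary point worth a remark, exactly as in the discussion of Kenthapadi et al.\ in Section~\ref{sec:relatedworkKenthapadi}, is that $\Delta_2$ must be the $\ell_2$-sensitivity of the concretely realized matrix (it concentrates near~$1$ but should be computed in an initialization step rather than merely assumed small), so that the noise genuinely provides privacy for that $\Phi$ and not only in expectation.
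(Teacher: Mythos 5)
Your proposal matches the paper's own proof, which is exactly the one-line combination of Lemmas \ref{lem:technicalestimator}, \ref{lem:FJLTLPP} and \ref{lem:varianceFJLT} for unbiasedness and variance, Lemmas \ref{lem:GaussMech} and \ref{lem:technicalDP} for privacy, and Lemma \ref{lem:FJLT} for the running time. The normalization bookkeeping you flag is indeed the one point the paper glosses over (as stated, a literal instantiation of $\hat{E}_{gen}$ with $S=\tfrac1{\sqrt k}\Phi$ and noise $\mathcal{N}(0,\sigma^2)^k$ added to the \emph{unnormalized} projection would give a bias correction of $2\sigma^2$ rather than $2k\sigma^2$), but resolving that scaling convention is all that separates your argument from the published one.
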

\begin{proof}
That the estimator is unbiased and the variance follow from Lemmas \ref{lem:technicalestimator}, \ref{lem:FJLTLPP} and \ref{lem:varianceFJLT}. Privacy follows from Lemmas \ref{lem:GaussMech} and \ref{lem:technicalDP}. Running time follows from Lemmas \ref{lem:FJLT} and \ref{lem:technicalDP}.
\end{proof}

\begin{note}
\label{note:intializationFJLT}
Although the $\ell_2$-sensitivity of the normalized FJLT is concentrated around 1, the sensitivity of $\Phi$ \emph{could} (with a small probability) be very large, so the sketch $\Phi x+\eta$ suffers from the same initialization cost as the work of Kenthapadi et al.~(see Section \ref{sec:relatedworkKenthapadi}).
\end{note}

We now introduce a private version of FJLT, where we perturb the \emph{input}. This version avoids the issue described in Note \ref{note:intializationFJLT}, but will inevitably introduce error depending on the input size.

\begin{restatable}{lemma}{privateFJLTEstimator}
\label{lem:privateFJLTEstimator}
Let $\Phi$ be a random $k\times d$-FJLT and let $x,y\in\mathbb{R}^d$ be input vectors. Let $\eta,\mu\sim \mathcal{N}(0,\sigma^2)^d$ for $\sigma\ge \varepsilon^{-1}\sqrt{2\log(1.25/\delta)}$ be noise vectors. Define 
\[
\hat{E}_{FJLT_i}:=\frac{1}{k}\Vert \Phi(x+\eta)-\Phi(y+\mu)\Vert_2^2-2d\sigma^2
\]
\begin{itemize}
    \item $\hat{E}_{FJTL_i}$ is an unbiased estimator for $\Vert x-y\Vert_2^2$.
    \item The estimator has variance
    \begin{align*}
    \Var\left[\hat{E}_{FJLT_i}\right]\le \frac{3}{k}\Vert x-y\Vert_2^4+O\left(\frac{d^2\sigma^4}{k}+d\sigma^2\Vert x-y\Vert_2^2\right).
\end{align*}
    \item The sketch $(\Phi, \Phi(x+\eta))$ is $(\varepsilon,\delta)$-differentially private.
    \item The sketch $\Phi(x+\eta)$ can be computed in time \[O\left(\max\left\{d\log d,\frac{dq\log(1/\beta)}{\alpha^2}\right\}\right)\] for $q=\min\{\Theta(\log^2(1/\beta)/d),1\}$.
    
\end{itemize}
\end{restatable}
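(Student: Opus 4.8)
The plan is to reduce the four claims to the non-private FJLT guarantees (Lemmas~\ref{lem:FJLTLPP} and~\ref{lem:varianceFJLT}) and to the Gaussian mechanism (Lemma~\ref{lem:GaussMech}) by conditioning on the noise. Write $z := x-y$ and $\nu := \eta-\mu$; by linearity $\Phi(x+\eta)-\Phi(y+\mu)=\Phi(z+\nu)$, where $\nu\sim\mathcal{N}(0,2\sigma^2)^d$ is independent of $\Phi$, so $\hat{E}_{FJLT_i}=\tfrac1k\Vert\Phi(z+\nu)\Vert_2^2-2d\sigma^2$. Everything then reduces to understanding $\tfrac1k\Vert\Phi w\Vert_2^2$ for a fixed vector $w$, averaged over $w=z+\nu$.

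For unbiasedness, condition on $\nu$ and treat $z+\nu$ as fixed: Lemma~\ref{lem:FJLTLPP} gives $\operatorname{E}_\Phi[\tfrac1k\Vert\Phi(z+\nu)\Vert_2^2\mid\nu]=\Vert z+\nu\Vert_2^2=\Vert z\Vert_2^2+2\langle z,\nu\rangle+\Vert\nu\Vert_2^2$; taking the expectation over $\nu$ and using $\operatorname{E}[\nu]=0$ and $\operatorname{E}[\Vert\nu\Vert_2^2]=2d\sigma^2$ shows $\operatorname{E}[\hat{E}_{FJLT_i}]=\Vert z\Vert_2^2$.

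For the variance I would use the law of total variance with respect to $\nu$: $\operatorname{Var}[\hat{E}_{FJLT_i}]=\operatorname{E}_\nu\!\big[\operatorname{Var}_\Phi[\tfrac1k\Vert\Phi(z+\nu)\Vert_2^2\mid\nu]\big]+\operatorname{Var}_\nu[\Vert z+\nu\Vert_2^2]$. By Lemma~\ref{lem:varianceFJLT} the inner conditional variance is at most $\tfrac3k\Vert z+\nu\Vert_2^4$, so the first summand is at most $\tfrac3k\operatorname{E}_\nu[\Vert z+\nu\Vert_2^4]$. Both $\operatorname{E}_\nu[\Vert z+\nu\Vert_2^4]$ and $\operatorname{Var}_\nu[\Vert z+\nu\Vert_2^2]$ are evaluated by expanding $\Vert z+\nu\Vert_2^2=\Vert z\Vert_2^2+2\langle z,\nu\rangle+\Vert\nu\Vert_2^2$ and invoking the Gaussian moments $\operatorname{E}[\nu_i^2]=2\sigma^2$, $\operatorname{E}[\nu_i^4]=3(2\sigma^2)^2$, and $\operatorname{E}[\nu_i]=\operatorname{E}[\nu_i^3]=0$ (so the odd and cross terms vanish); this yields $\operatorname{E}_\nu[\Vert z+\nu\Vert_2^4]=\Vert z\Vert_2^4+O(d\sigma^2\Vert z\Vert_2^2+d^2\sigma^4)$ and $\operatorname{Var}_\nu[\Vert z+\nu\Vert_2^2]=O(\sigma^2\Vert z\Vert_2^2+d\sigma^4)$. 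Collecting terms and using $k\le d$ (so that $\sigma^2\Vert z\Vert_2^2\le d\sigma^2\Vert z\Vert_2^2$ and $d\sigma^4\le d^2\sigma^4/k$) gives $\operatorname{Var}[\hat{E}_{FJLT_i}]\le\tfrac3k\Vert x-y\Vert_2^4+O(\tfrac{d^2\sigma^4}{k}+d\sigma^2\Vert x-y\Vert_2^2)$.

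Privacy is then immediate: the map $x\mapsto x+\eta$ is the Gaussian mechanism applied to the identity transformation, whose $\ell_2$-sensitivity is $1$ since neighboring inputs satisfy $\Vert x-y\Vert_2\le\Vert x-y\Vert_1\le1$; hence by Lemma~\ref{lem:GaussMech} with $\sigma\ge\varepsilon^{-1}\sqrt{2\log(1.25/\delta)}$, $x+\eta$ is $(\varepsilon,\delta)$-differentially private, and applying the public, data-independent matrix $\Phi$ is post-processing, so $(\Phi,\Phi(x+\eta))$ is $(\varepsilon,\delta)$-differentially private. The running time is $O(d)$ to form $x+\eta$ plus the cost of applying $\Phi$ from Lemma~\ref{lem:FJLT}, which dominates $d$ and equals the stated bound. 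The only real subtlety — and the reason we cannot just quote Lemma~\ref{lem:technicalestimator} — is that the noise $\Phi\eta$ reaching the output is correlated across coordinates through $\Phi$ rather than i.i.d.; the conditioning argument above is precisely what handles this, and the remaining work is the routine fourth-moment bookkeeping for $\operatorname{E}_\nu[\Vert z+\nu\Vert_2^4]$.
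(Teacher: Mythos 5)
Your proof is correct, and for the variance bound it takes a genuinely different route from the paper. The paper proves the second bullet by expanding $\operatorname{E}\bigl[\Vert\Phi x+\Phi\eta\Vert_2^4\bigr]$ directly (via a triangle-inequality upper bound) and then explicitly evaluating the cross-moment $\operatorname{E}_{\Phi,\eta}\bigl[\Vert\Phi x\Vert_2^2\Vert\Phi\eta\Vert_2^2\bigr]$ from the fourth-moment primitives of the FJLT entries in Appendix \ref{app:primitives} — a fairly heavy computation specific to $\Phi=PHD$. You instead condition on the noise $\nu=\eta-\mu$ and apply the law of total variance, which reduces everything to the already-established fixed-vector bound $\operatorname{Var}_\Phi[\frac1k\Vert\Phi w\Vert_2^2]\le\frac3k\Vert w\Vert_2^4$ (Lemma \ref{lem:varianceFJLT}) plus routine Gaussian moment bookkeeping for $\operatorname{E}_\nu[\Vert z+\nu\Vert_2^4]$ and $\operatorname{Var}_\nu[\Vert z+\nu\Vert_2^2]$. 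This is cleaner, avoids recomputing any moments of $\Phi$, and would apply verbatim to any LPP projection satisfying a $\frac{c}{k}\Vert w\Vert_2^4$ variance bound; it also correctly accounts for the $\operatorname{Var}_\nu[\Vert\nu\Vert_2^2]=O(d\sigma^4)$ contribution. The one thing to make explicit is that absorbing that $O(d\sigma^4)$ term into $O(d^2\sigma^4/k)$ requires $k\le d$; this is the natural dimensionality-reduction regime and is assumed elsewhere in the paper (Section \ref{sec:compareFJLTSparser}), but it should be stated. Your unbiasedness, privacy (sensitivity $1$ of the identity plus post-processing by the public $\Phi$), and running-time arguments coincide with the paper's.
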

\begin{proof}
For proofs that the estimator is unbiased and for the variance, see Appendix \ref{app:omittedproofsPrivateFJLT}. We remark that the factor $d$ on the last term in the variance is a by-product of applying $\Phi$ to the noise.
Privacy follows directly from the Gaussian mechanism (see Lemma \ref{lem:GaussMech}), as the $\ell_2$-sensitivity is at most 1 (clearly, as we perturb the input vectors).
As noise can be added in time $O(d)$, the time required to compute the sketch follows from Lemma \ref{lem:FJLT}.
\end{proof}

\begin{note}
By spherical symmetry of the Normal distribution, $\Phi(x+\eta)$ and $\Phi x+P\eta$, where $P$ is defined in Section \ref{sec:FJLTDescription}, are identically distributed. Hence, one could add the same amount of noise after the Hadamard transform to get a differentially private sketch, that is, compute $P(HDx+\eta)$. Thus, for a given projection $P$, suppose column $j$ is all zeros, then we can immediately set $\eta_j=0$. This way, we may save a bit of randomness.
\end{note}

\section{Private Sparser Johnson-Lindenstrauss Transform}
\label{sec:applications}
In this section, we turn to the question of perturbation using Laplacian noise rather than Gaussian noise. We present and analyze a private sketch based on the SJLT, and conclude Theorem \ref{thm:main} in Section \ref{sec:summingup}. The main observation about this sketch is that we perturb the \emph{output} vectors rather than the input vectors while avoiding the initialization cost that was inherent to the work of Kenthapadi et al.~as well as Corollary \ref{cor:FJLToutput}. We compare the work of Kenthapadi et al., our private FJLT from Lemma \ref{lem:privateFJLTEstimator} and our private SJLT from Theorem \ref{thm:main} in Section \ref{sec:compareFJLTSparser}.

Theorem \ref{thm:main} is proven by combining the technical Lemmas \ref{lem:technicalestimator} and \ref{lem:technicalDP} with the SJLT.
Due to their sparsity, these transforms are more efficient than the suggestions from  \cite{KenthapadiKMM13}. We remark that this is just one example of linear transformations that our results can be applied to. It should also be noted that the results of Kenthapadi et al.~are directly transferable to the SJLT, although the results were only proven for the i.i.d. normally distributed JL transform, whereas we give the analysis here.

\subsection{Description of (non-private) Sparser Johnson-Lindenstrauss Transforms (SJLT)}
\label{sec:Kaneconstruction}
We first describe the SJLT from \cite{KaneN14}. We focus on the c)-construction and remark that similar arguments applies for the b)-construction. Let  $k=\Theta(\alpha^{-2}\log n)$ and let $x\in \mathbb{R}^d$ be an input vector. Let $h_1,...,h_s:[d]\rightarrow [k/s]$ and $\sign_1,...,\sign_s:[d]\rightarrow \{-1,+1\}$ be independent, random hash functions from $O(\log(1/\beta))$-wise independent families. Define $\xi_{ri}(j)=1[h_r(j)=i]$. Then $\E[\xi_{ri}(j)^2]=\E[\xi_{ri}(j)]=\frac{s}{k}$. The projection matrix $S$ is defined by
\[
S_{(i,r),j}=\frac{1}{\sqrt{s}}\sign_{r}(j)\xi_{ri}(j)
\]
for $i=1,...,k/s$ and $r=1,...,s$. Hence, entry $i'=i\cdot r\in[k]$ in the resulting embedding $Sx$ can be described as
\[
(Sx)_{i'}=(Sx)_{(i,r)}=\frac{1}{\sqrt{s}}\sum_{j=1}^d \sign_{r}(j)\xi_{ri}(j)x_j.
\]
We can think of $Sx$ as a vector consisting of $s$ blocks, each of length $k/s$. The $i$th block describes the projection of $x$ under $h_i$ and $\sign_i$.

\begin{restatable}{lemma}{sparserLPP}
\label{lem:sparserexpectationproperty}
The SJLT as described above satisfy LPP from Definition \ref{def:lengthpreservingproperty}.
\end{restatable}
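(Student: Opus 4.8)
The plan is to verify the Length Preserving Property directly from the block structure of the SJLT, using linearity of expectation and the moment identities $\E[\xi_{ri}(j)] = \E[\xi_{ri}(j)^2] = s/k$ already recorded in the construction. Fix $x \in \mathbb{R}^d$. I would expand
\[
\Vert Sx\Vert_2^2 = \sum_{r=1}^s \sum_{i=1}^{k/s} (Sx)_{(i,r)}^2 = \frac{1}{s}\sum_{r=1}^s \sum_{i=1}^{k/s}\left(\sum_{j=1}^d \sign_r(j)\xi_{ri}(j) x_j\right)^2,
\]
and then expand the inner square as a double sum over $j, j'$. Taking expectations, split into the diagonal terms $j = j'$ and the off-diagonal terms $j \neq j'$.

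For the off-diagonal terms, I would use that for $j \neq j'$ the sign functions contribute $\E[\sign_r(j)\sign_r(j')] = \E[\sign_r(j)]\E[\sign_r(j')] = 0$; here I only need that the $\sign_r$ functions are drawn from a family that is at least $2$-wise independent and uniform on $\{-1,+1\}$, which follows from $O(\log(1/\beta))$-wise independence. (One subtlety: $\sign_r$ and $\xi_{ri}$ share the hash structure only through $r$; since $\sign_r$ is an independent hash function from $h_r$, the sign and the indicator factor independently, so the cross term vanishes.) Hence the off-diagonal contribution is zero.

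For the diagonal terms, $\sign_r(j)^2 = 1$ always, so each such term contributes $\E[\xi_{ri}(j)^2] x_j^2 = \frac{s}{k} x_j^2$. Summing over $i = 1,\dots,k/s$ gives a factor $k/s$, over $r = 1,\dots,s$ gives a factor $s$, and the prefactor $1/s$ from $1/\sqrt{s}$ squared cancels one of these, leaving
\[
\E_S\left[\Vert Sx\Vert_2^2\right] = \frac{1}{s}\cdot s \cdot \frac{k}{s}\cdot \frac{s}{k}\sum_{j=1}^d x_j^2 = \sum_{j=1}^d x_j^2 = \Vert x\Vert_2^2,
\]
which is exactly LPP. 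I do not expect any real obstacle here — the only point requiring a moment of care is making explicit that the cross terms vanish because $\sign_r$ is independent of $\xi_{ri}$ and has mean zero, and that limited ($2$-wise) independence within each $\sign_r$ family already suffices for this expectation computation even though the full $O(\log(1/\beta))$-wise independence will be needed elsewhere for concentration.
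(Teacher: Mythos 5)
Your proposal is correct and follows essentially the same route as the paper's proof: expand $\Vert Sx\Vert_2^2$ into the double sum over $j,\ell$, kill the off-diagonal terms using that $\sign_r(j)$ and $\sign_r(\ell)$ are independent and mean zero for $j\neq\ell$, and evaluate the diagonal terms via $\operatorname{E}[\xi_{ri}(j)]=s/k$. Your added remarks that $2$-wise independence of the signs already suffices here and that $\sign_r$ is independent of $h_r$ are correct but not needed beyond what the paper states.
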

\begin{proof}
The proof is a simple calculation and can be found in Appendix \ref{app:omittedproofsSparserLPP}.
\end{proof}

\begin{restatable}{lemma}{sparserJLvariance}
\label{lem:sparserJLnoiselessvariance}
Let $x,y\in\mathbb{R}^{d}$ and let $S$ be the SJLT as described above. Then 
\[
\operatorname{Var}\left[\Vert Sx-Sy\Vert_2^2\right]\le \frac{2}{k}\Vert x-y\Vert_2^4.
\]
\end{restatable}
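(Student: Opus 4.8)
The plan is to reduce to a single vector $z := x-y$, so that $\Vert Sx - Sy\Vert_2^2 = \Vert Sz\Vert_2^2$, and then compute the second moment of the deviation $T := \Vert Sz\Vert_2^2 - \Vert z\Vert_2^2$ directly. By Lemma~\ref{lem:sparserexpectationproperty} we have $\E[\Vert Sz\Vert_2^2] = \Vert z\Vert_2^2$, hence $\operatorname{Var}\left[\Vert Sx-Sy\Vert_2^2\right] = \E[T^2]$, and it suffices to show $\E[T^2]\le \frac{2}{k}\Vert z\Vert_2^4$.

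First I would expand $\Vert Sz\Vert_2^2$ using the block structure of $S$. Writing $\xi_{ri}(j) = 1[h_r(j)=i]$ and using the identity $\sum_{i=1}^{k/s}\xi_{ri}(j)\xi_{ri}(j') = 1[h_r(j) = h_r(j')]$, one obtains
\[
\Vert Sz\Vert_2^2 = \frac{1}{s}\sum_{r=1}^s\sum_{j,j'}\sign_r(j)\sign_r(j')\,1[h_r(j)=h_r(j')]\,z_j z_{j'}.
\]
The diagonal terms $j=j'$ contribute exactly $\frac{1}{s}\sum_{r=1}^s\sum_j z_j^2 = \Vert z\Vert_2^2$ (since $\sign_r(j)^2 = 1$), so
\[
T = \frac{1}{s}\sum_{r=1}^s T_r,\qquad T_r := \sum_{j\ne j'}\sign_r(j)\sign_r(j')\,1[h_r(j)=h_r(j')]\,z_j z_{j'}.
\]

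Next I would compute $\E[T^2] = \frac{1}{s^2}\sum_{r,r'}\E[T_r T_{r'}]$. For $r\ne r'$ the hash/sign pair for block $r$ is independent of that for block $r'$, and $\E[T_r]=0$ (using $2$-wise independence and the zero mean of the signs), so all cross terms vanish. For a single block, expand $\E[T_r^2]$ as a sum over quadruples $(j_1,j_2,j_3,j_4)$ with $j_1\ne j_2$ and $j_3\ne j_4$; since the signs within a block are (at least $4$-wise) independent and mean zero, $\E\!\left[\sign_r(j_1)\sign_r(j_2)\sign_r(j_3)\sign_r(j_4)\right]$ is nonzero only when $\{j_3,j_4\}=\{j_1,j_2\}$ as sets, which leaves exactly two surviving configurations, each with sign product $1$ and indicator $1[h_r(j_1)=h_r(j_2)]$. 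Using pairwise independence and uniformity of $h_r$ on $[k/s]$, $\Pr[h_r(j_1)=h_r(j_2)] = s/k$, so
\[
\E[T_r^2] = \frac{2s}{k}\sum_{j_1\ne j_2} z_{j_1}^2 z_{j_2}^2 = \frac{2s}{k}\left(\Vert z\Vert_2^4 - \Vert z\Vert_4^4\right)\le \frac{2s}{k}\Vert z\Vert_2^4.
\]
Summing over the $s$ blocks gives $\E[T^2] = \frac{1}{s^2}\sum_{r=1}^s\E[T_r^2]\le \frac{1}{s^2}\cdot s\cdot \frac{2s}{k}\Vert z\Vert_2^4 = \frac{2}{k}\Vert z\Vert_2^4$, which is the claim with $z = x-y$.

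The only slightly delicate point is the bookkeeping in $\E[T_r^2]$: one must verify that, under the constraints $j_1\ne j_2$ and $j_3\ne j_4$, the only sign-expectation patterns that survive are the two ``matched-pair'' cases, and that the limited independence of the hash families (only $4$-wise independence of the signs and pairwise independence of the $h_r$ are needed, far less than the $O(\log(1/\beta))$-wise independence assumed) is enough to justify every expectation used; the independence across distinct blocks is what kills all cross terms. I do not anticipate any genuine obstacle: each cross term vanishes for free, and the single-block contribution is a short moment calculation. The detailed computation can be deferred to the appendix.
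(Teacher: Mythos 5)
Your proof is correct and follows essentially the same route as the paper's: reduce to $z=x-y$, expand the fourth moment, split into diagonal and off-diagonal contributions, kill cross terms via the mean-zero signs, and evaluate the surviving off-diagonal terms using the collision probability $s/k$, arriving at the same exact variance $\frac{2}{k}\left(\Vert z\Vert_2^4-\Vert z\Vert_4^4\right)$. Your observation that the diagonal part is deterministically equal to $\Vert z\Vert_2^2$ (since $\sum_i\xi_{ri}(j)=1$) is a nice streamlining that makes the paper's Claims \ref{claim:asquared} and \ref{claim:twoab} immediate, but it does not change the substance of the argument.
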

\begin{proof}
The proof can be found in Appendix \ref{app:sparserJLvariance}.
\end{proof}

\subsection{Private SJLT}
We now turn to proving our main theorem, Theorem \ref{thm:main}. 
Combining Lemmas
\ref{lem:technicalestimator}, \ref{lem:sparserexpectationproperty} and \ref{lem:sparserJLnoiselessvariance}, we obtain the following corollary.

\begin{corollary}
\label{cor:sparserJL}
Let $S$ be a random $k\times d$-SJLT and let $x,y\in\mathbb{R}^{d}$ be input vectors. Let $\eta,\mu\sim \mathcal{D}^{k}$ be noise vectors where each entry is drawn from a zero-mean distribution $\mathcal{D}$. Then
\[
\hat{E}_{SJLT_\mathcal{D}}:=\Vert (Sx+\eta)-(Sy+\mu)\Vert_2^2-2k\operatorname{E}_{\mathcal{D}}[\eta_*^2]
\] is an unbiased estimator for $\Vert x-y\Vert_2^2$ with variance
\begin{equation}
\begin{split}
\label{expr:varianceKaneD}
\operatorname{Var}\left[\hat{E}_{SJLT_\mathcal{D}}\right]\le \frac{2}{k}\Vert x-y\Vert_2^4&+8\operatorname{E}_{\mathcal{D}}\left[\eta_*^2\right]\Vert x-y\Vert_2^2+2k\left(\operatorname{E}_{\mathcal{D}}[\eta_*^4]+\operatorname{E}_{\mathcal{D}}\left[\eta_*^2\right]^2\right).
\end{split}
\end{equation}
\end{corollary}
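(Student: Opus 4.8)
The plan is to recognize that $\hat{E}_{SJLT_\mathcal{D}}$ is nothing but the generic estimator $\hat{E}_{gen}$ from Section \ref{sec:generalization}, instantiated with $S$ taken to be the SJLT of Section \ref{sec:Kaneconstruction}; the corollary then follows by assembling three already-established facts. First I would invoke Lemma \ref{lem:sparserexpectationproperty}, which says the SJLT satisfies the Length Preserving Property --- precisely the hypothesis under which Lemma \ref{lem:technicalestimator} was proved. Since $\mathcal{D}$ is assumed zero-mean and the subtracted correction $2k\operatorname{E}_{\mathcal{D}}[\eta_*^2]$ agrees with the one in the definition of $\hat{E}_{gen}$, part (1) of Lemma \ref{lem:technicalestimator} immediately gives that $\hat{E}_{SJLT_\mathcal{D}}$ is an unbiased estimator of $\Vert x-y\Vert_2^2$.

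For the variance, part (2) of Lemma \ref{lem:technicalestimator} gives
\[
\operatorname{Var}\left[\hat{E}_{SJLT_\mathcal{D}}\right]=\operatorname{Var}\left[\Vert Sx-Sy\Vert_2^2\right]+8\operatorname{E}_{\mathcal{D}}[\eta_*^2]\Vert x-y\Vert_2^2+2k\operatorname{E}_{\mathcal{D}}[\eta_*^4]+2k\operatorname{E}_{\mathcal{D}}[\eta_*^2]^2 ,
\]
and the only term here that is specific to the SJLT is $\operatorname{Var}[\Vert Sx-Sy\Vert_2^2]$. Using linearity of $S$ to write $Sx-Sy=S(x-y)$ and applying Lemma \ref{lem:sparserJLnoiselessvariance} to the vector $x-y$, this term is at most $\frac{2}{k}\Vert x-y\Vert_2^4$. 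Substituting this bound and grouping the two noise moment terms together as $2k(\operatorname{E}_{\mathcal{D}}[\eta_*^4]+\operatorname{E}_{\mathcal{D}}[\eta_*^2]^2)$ yields exactly inequality (\ref{expr:varianceKaneD}).

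There is no real obstacle here beyond bookkeeping: every nontrivial ingredient --- the unbiasedness and variance decomposition of $\hat{E}_{gen}$ (Lemma \ref{lem:technicalestimator}), the LPP of the SJLT (Lemma \ref{lem:sparserexpectationproperty}), and the $\frac{2}{k}\Vert x-y\Vert_2^4$ bound on the noiseless variance (Lemma \ref{lem:sparserJLnoiselessvariance}) --- has already been isolated into its own lemma. The one point worth flagging is that Corollary \ref{cor:sparserJL} is kept deliberately in terms of the raw moments $\operatorname{E}_{\mathcal{D}}[\eta_*^2]$ and $\operatorname{E}_{\mathcal{D}}[\eta_*^4]$, with no choice of $\mathcal{D}$ yet fixed; the specialization to Laplace noise with scale parameter $\Theta(\sqrt{s}/\varepsilon)$ --- at which point Note \ref{note:expectations} converts these moments into the $s/\varepsilon^2$ and $s^2/\varepsilon^4$ factors appearing in Theorem \ref{thm:main}, and the Laplace mechanism (Lemma \ref{lem:LaplaceMech}) supplies pure differential privacy --- is deferred to the proof of the main theorem.
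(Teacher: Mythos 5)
Your proposal is correct and matches the paper's own argument exactly: the paper obtains Corollary \ref{cor:sparserJL} precisely by combining Lemma \ref{lem:technicalestimator} (unbiasedness and variance decomposition under LPP), Lemma \ref{lem:sparserexpectationproperty} (SJLT satisfies LPP), and Lemma \ref{lem:sparserJLnoiselessvariance} (the $\frac{2}{k}\Vert x-y\Vert_2^4$ bound on the noiseless variance). Your remark about deferring the choice of $\mathcal{D}$ to the proof of Theorem \ref{thm:main} also reflects the paper's structure.
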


We have yet to choose $\mathcal{D}$ to ensure differential privacy of this estimator as well as argue about the efficiency. 
We first discuss the value of $k$.

\subsubsection{Optimal Projection Dimension}
\label{sec:optimalknoisy}
For the non-private SJLT, the optimal projection dimension is $k=\Theta\left(\alpha^{-2}\log(1/\beta)\right)$. One may ask what $k$ is optimal in the private case. The analysis and our optimal $k$ are very similar to the findings in \cite{KenthapadiKMM13}: we see that the 
variance in (\ref{expr:varianceKaneD}) is minimized for $k=\Theta\left(\frac{\Vert x-y\Vert_2^2}{\sqrt{\operatorname{E}[\eta_*^4]+\operatorname{E}[\eta_*^2]^2}}\right)$. By the same argument as in \cite{KenthapadiKMM13}, generally, no fixed value of $k$ will be optimal for the entire input domain, although there might be exceptions, when certain properties of the data are known. As in the work of Kenthapadi et al., if we have input domain $X$, then we may let $\nu=\max_{x\in X}\{\Vert x\Vert_2^2\}$ to obtain $k=\Theta(\nu\varepsilon^2/\Delta_1^2)$ for $\mathcal{D}=\operatorname{Lap}(\Delta_1/\varepsilon)$. Note that $k$ might not be optimal for all input vectors. We assume that $\nu$ is unknown and may be very large -- in particular, we consider vectors over the reals -- and thus proceed with $k=\Theta(\alpha^{-2}\log(1/\beta))$.

\subsubsection{Efficiency}
\label{sec:efficiency}
Let $S$ be a SJLT with sparsity $s=O\left(\alpha^{-1}\log(1/\beta)\right)$ and let input $x\in\mathbb{R}^d$ be given. The embedding $Sx$ can be computed in time $O(s\Vert x\Vert_0)$. Assuming that we sample from $\mathcal{N}(0,\sigma^2)$ and $\operatorname{Lap}(b)$ in constant time, random noise vector $\eta\sim\mathcal{D}$ for  $\mathcal{D}=\operatorname{Lap}(\Delta_1/\varepsilon)$ or $\mathcal{D}=\mathcal{N}(0,\sigma
^2)$ for $\sigma\ge \Delta_2 \varepsilon^{-1}\sqrt{2\log(1.25/\delta)}$ can be added in time $O(k)$ to give $Sx+\eta$. From \cite{securenoisegeneration} we know that we can at least sample from discretizations in expected constant time, so this assumption seems reasonable.
For given $Sx+\eta$ and $Sy+\mu$, the estimator $\hat{E}_{SJLT}$ can be computed in time $O(k)$.

\subsubsection{Summing Up}
\label{sec:summingup}
The SJLT as described in Section \ref{sec:Kaneconstruction}, where $k=\Theta(\alpha^{-2}\log(1/\beta))$ and $s=O(\alpha^{-1}\log(1/\beta))$, has $\ell_1$-sensitivity $\Delta_1=\sqrt{s}$ and $\ell_2$-sensitivity $\Delta_2=1$. Hence, consider Corollary \ref{cor:sparserJL} with $\mathcal{D}=\operatorname{Lap}(\sqrt{s}/\varepsilon)$. Lemma \ref{lem:LaplaceMech} ensures that $\hat{E}_{SJLT}$ is $\varepsilon$-differentially private. Combining with Section \ref{sec:efficiency} finishes the proof of Theorem \ref{thm:main}. If instead we let $\mathcal{D}=\mathcal{N}(0,\sigma^2)$ for $\sigma~\ge~\varepsilon^{-1}\sqrt{2\ln(1.25/\delta)}$ in Corollary \ref{cor:sparserJL}, $\hat{E}_{SJLT}$ is $(\varepsilon,\delta)$-differentially private and achieves the same variance as the work of Kenthapadi et al., while we gain a speed-up as well as avoid the initialization cost. Finally, we remark that by Note \ref{note:technicalnoteDistributions}, we minimize the variance of $\hat{E}_{SJLT}$ by letting $\mathcal{D}=\operatorname{Lap}(\sqrt{s}/\varepsilon)$ whenever $\delta<e^{-s}$.

\section{Comparison}
\label{sec:compareFJLTSparser}
We now compare Lemma \ref{lem:privateFJLTEstimator} and Theorem \ref{thm:main} with the work of Kenthapadi et al.

We first compare the running times to see for what parameters the private FJLT is faster than the private SJLT and then compare the variances for the two methods to get the speed-variance trade-off. Finally, we compare to the results of Kenthapadi et al.

Recall that our private FJLT can be computed in time
\[
O\left(\max\left\{d\log d,\frac{\log^3(1/\beta)}{\alpha^2}\right\}\right),
\]
and the private SJLT can be computed in time bounded by $O(sd)$ (for dense vectors) where $s=O\left(\log(1/\beta)\alpha^{-1}\right)$.
Observing that 
\[
O(sd)>O(d\log d)\quad \Leftrightarrow\quad d<e^{O(s)}=\frac{1}{\beta^{O(1/\alpha)}}
\]
and
\[
O(sd)>O\left(\frac{\log^3(1/\beta)}{\alpha^2}\right)\quad \Leftrightarrow\quad d>O\left(\frac{\log^2(1/\beta)}{\alpha}\right),
\]
we conclude that our private FJLT is indeed faster than the private SJLT whenever

\begin{align}
\label{interval:ford}
O\left(\frac{\log^2(1/\beta)}{\alpha}\right)<d<\frac{1}{\beta^{O(1/\alpha)}}.
\end{align}

We now turn to comparing the variances of the private versions of FJLT and SJLT:
Recall from Lemma \ref{lem:privateFJLTEstimator} that the private FJLT has variance 
\[
\Var\left[\hat{E}_{FJLT_i}\right]\le \frac{3}{k}\Vert x-y\Vert_2^4+O\left(d\sigma^2\Vert x-y\Vert_2^2+\frac{d^2\sigma^4}{k}\right).
\]
while, as seen in Theorem \ref{thm:main}, the private SJLT has variance
\[
\operatorname{Var}\left[\hat{E}_{SJLT}\right]\le \frac{2}{k}\Vert x-y\Vert_2^4+O\left(\frac{s}{\varepsilon^2}\Vert x-y\Vert_2^2+\frac{s^2}{\varepsilon^4}k\right).
\]
For the sake of simplicity, we will disregard the variance incurred by the transforms and limit ourselves to considering the terms incurred by the noise addition. The private SJLT (in particular) achieves a better variance than the private FJLT whenever
\begin{align*}
&O\left(\frac{d^2\sigma^4}{k}\right)=O\left(\frac{d^2\log^2(1/\delta)}{\varepsilon^4 k}\right)>O\left(\frac{s^2k}{\varepsilon^4}\right)\qquad \text{and}\\ &O\left(d\sigma^2\Vert x-y\Vert_2^2\right)=O\left(\frac{d\log(1/\delta)}{\varepsilon^2}\Vert x-y\Vert_2^2\right)>O\left(\frac{s}{\varepsilon^2}\Vert x-y\Vert_2^2\right).
\end{align*}
Treating each of the inequalities separately, we analyze for what values of $\delta$ this is the case:
\begin{align*}
    &O\left(\frac{d^2\log^2(1/\delta)}{\varepsilon^4 k}\right)>O\left(\frac{s^2k}{\varepsilon^4}\right)\quad\Leftrightarrow\quad \log(1/\delta)>O\left(\frac{sk}{d}\right)\quad \Leftrightarrow\quad \frac{1}{e^{O(sk/d)}}>\delta
\end{align*}
and
\begin{align*}
&O\left(\frac{d\log(1/\delta)}{\varepsilon^2}\Vert x-y\Vert_2^2\right)>O\left(\frac{s}{\varepsilon^2}\Vert x-y\Vert_2^2\right)\quad\Leftrightarrow\quad \log(1/\delta)>O\left(\frac{s}{d}\right)\quad\Leftrightarrow\quad \frac{1}{e^{O(s/d)}}>\delta.
\end{align*}
Hence, in particular, the private SJLT has smaller variance than the private FJLT whenever
\begin{align*}
\delta&<\min\left\{1/e^{O(s/d)}, 1/e^{O(sk/d)}\right\}=1/e^{O(sk/d)}=1/e^{O\left(\frac{\log^2(1/\beta)}{\alpha^{3}d}\right)}=\beta^{O\left(\frac{\log(1/\beta)}{\alpha^{3}d}\right)}.
\end{align*}

The variance of the estimator from Theorem \ref{thm:kenthapadiUnbiasedEstVar} by Kenthapadi et al.~was 
\[
\Var[\hat{E}_{iid}]=\frac{2}{k}\Vert x-y\Vert_2^4+O\left(\sigma^2\Vert x-y\Vert_2^2+\sigma^4k\right).
\]
An argument similar to the one above proves that the variance of our private SJLT improves over the variance of Kenthapadi et al.~when $\delta<e^{-s}=\beta^{O(1/\alpha)}$. Clearly,  Kenthapadi et al.~always achieves better variance than our private FJLT, due to the dependence on $d$ which was inherent from perturbing the input rather than the output, and we may assume $k<d$. 

Hence, we see a trade-off in running time versus variance, for certain values of input dimension $d$.

To sum up the above discussion, suppose that $\delta<\beta^{O(1/\alpha)}$. Then the private SJLT obtains the best variance out of all the methods. If $d$ satisfies (\ref{interval:ford}), then the private FJLT achieves the best running time, and otherwise, the private SJLT improves over the private FJLT in terms of both variance and running time.

\vspace{1cm}

\textbf{Acknowledgements\\}
This work was supported by Investigator Grant 16582, Basic Algorithms Research Copenhagen (BARC), from the VILLUM Foundation.
I would like to thank my advisor Rasmus Pagh for the support, great discussions and pointing out several interesting limitations of previous work, leading to this work.
I would also like to thank the reviewers for their excellent comments helping to improve this paper.

\bibliographystyle{plain}
\bibliography{bibl}

\appendix
\section{Omitted Proofs for Technical Lemmas}
\label{app:omittedproofs}

\label{app:technicallemmas}
\technicalestimator*
\begin{proof}
We start by showing 1). For simpler notation, we define $z:=x-y$. By independence and since $\operatorname{E}_{\mathcal{D}}[\eta_i]=0$ for all $i$, 
\begin{align*}
    \operatorname{E}_{S,\mathcal{D}}\Big[\big\Vert (Sx+\eta)-(Sy+\mu)\big\Vert_2^2\Big]&= \operatorname{E}_{S,\mathcal{D}}\left[\sum_{i=1}^k\left((Sx+\eta)_i-(Sy+\mu)_i\right)^2\right]\\&=\operatorname{E}_{S,\mathcal{D}}\left[\sum_{i=1}^k\left((Sz)_i^2+(\eta_i-\mu_i)^2+2(\eta_i-\mu_i)(Sz)_i\right)\right]\\
    &=\operatorname{E}_{S}\left[\sum_{i=1}^k(Sz)_i^2\right]+2\sum_{i=1}^k\operatorname{E}_{\mathcal{D}}\left[\eta_*^2\right]=\Vert z\Vert_2^2+2k\operatorname{E}_{\mathcal{D}}\left[\eta_*^2\right]
\end{align*}
where we in the last step used that $S$ has the LPP. 
So clearly, re-inserting $z=x-y$
\[
\operatorname{E}_{S,\mathcal{D}}\Big[\big\Vert (Sx+\eta)-(Sy+\mu)\big\Vert_2^2-2k\operatorname{E}_{\mathcal{D}}\left[\eta_*^2\right]\Big]=\Vert x-y\Vert_2^2.
\]

We turn to proving 2):
\begin{equation}
\begin{split}
\label{expr:varianceestimator}
    \operatorname{Var}\left[\hat{E}_{gen}\right]&=\operatorname{E}_{S,\mathcal{D}}\left[\hat{E}_{gen}^2\right]-\operatorname{E}_{S,\mathcal{D}}\left[\hat{E}_{gen}\right]^2=\operatorname{E}_{S,\mathcal{D}}\left[\hat{E}_{gen}^2\right]-\Vert x-y\Vert_2^4,
\end{split}
\end{equation}
so we analyze the first term:
\begin{equation}
\label{expr:generalvarianceterms}
\begin{split}
    \operatorname{E}_{S,\mathcal{D}}\left[\hat{E}_{gen}^2\right]&=\operatorname{E}_{S,\mathcal{D}}\left[\left(\big\Vert (Sx+\eta)-(Sy+\mu)\big\Vert_2^2-2k\operatorname{E}_{\mathcal{D}}\left[\eta_*^2\right]\right)^2\right]\\
    &=\operatorname{E}_{S,\mathcal{D}}\left[\big\Vert (Sx+\eta)-(Sy+\mu)\big\Vert_2^4\right]+4k^2\operatorname{E}_{\mathcal{D}}\left[\eta_*^2\right]^2\\&\qquad-4k\operatorname{E}_{\mathcal{D}}\left[\eta_*^2\right]\operatorname{E}_{S,\mathcal{D}}\left[\big\Vert (Sx+\eta)-(Sy+\mu)\big\Vert_2^2\right]
    \end{split}
\end{equation}
The last term in (\ref{expr:generalvarianceterms}) equals
\begin{equation}
\label{expr:lasttermgeneralvariancecomp}
\begin{split}
    &4k\operatorname{E}_{\mathcal{D}}\left[\eta_*^2\right]\left(\Vert x-y\Vert_2^2+2k\operatorname{E}_{\mathcal{D}}\left[\eta_*^2\right]\right)=4k\operatorname{E}_{\mathcal{D}}\left[\eta_*^2\right]\Vert x-y\Vert_2^2+8k^2\operatorname{E}_{\mathcal{D}}\left[\eta_*^2\right]^2
    \end{split}
\end{equation}

\begin{restatable}{claim}{claimfirstterm}
\label{claim:firstterm}
The first term in (\ref{expr:generalvarianceterms}) equals
\begin{align*}
    \operatorname{E}_{S,\mathcal{D}}\left[\big\Vert (Sx+\eta)-(Sy+\mu)\big\Vert_2^4\right]&=\operatorname{E}_{S}\left[\Vert S(x-y)\Vert_2^4\right]+4(k+2)\operatorname{E}_{\mathcal{D}}\left[\eta_*^2\right]\Vert x-y\Vert_2^2+2k\operatorname{E}_{\mathcal{D}}[\eta_*^4]\\&\qquad+2k(1+2k)\operatorname{E}_{\mathcal{D}}\left[\eta_*^2\right]^2
\end{align*}
\end{restatable}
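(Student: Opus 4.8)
The plan is to expand the fourth moment $\operatorname{E}_{S,\mathcal{D}}\left[\big\Vert (Sx+\eta)-(Sy+\mu)\big\Vert_2^4\right]$ directly. Writing $z:=x-y$ and $\nu_i:=\eta_i-\mu_i$, note that $\Vert (Sx+\eta)-(Sy+\mu)\Vert_2^2=\sum_{i=1}^k\big((Sz)_i+\nu_i\big)^2=\sum_i (Sz)_i^2+2\sum_i (Sz)_i\nu_i+\sum_i\nu_i^2$. Squaring this sum produces nine families of terms; I would group them as $\big(\sum (Sz)_i^2\big)^2$, the cross terms, and the pure-noise terms, and take expectations over $S$ and $\mathcal D$ separately using independence of $S$ and the noise and the fact that $\operatorname{E}[\eta_i]=0$.

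The key computations are the following. First, $\operatorname{E}_{S}\big[(\sum_i(Sz)_i^2)^2\big]=\operatorname{E}_S[\Vert Sz\Vert_2^4]$, which is the leading term. Second, every term linear in some $\nu_i$ (there are two such families: $(Sz)_j^2\cdot (Sz)_i\nu_i$ and $\nu_j^2\cdot(Sz)_i\nu_i$ summed appropriately) vanishes in expectation because $\operatorname{E}[\nu_i]=0$ and, for the second family, odd moments of the symmetric variable $\nu_i$ vanish when $i\ne j$ while the $i=j$ term is $\operatorname{E}[\nu_i^3]=0$ by symmetry. Third, I need the mixed quadratic term $\operatorname{E}\big[\big(\sum_i(Sz)_i^2\big)\big(\sum_j\nu_j^2\big)\big]+ \operatorname{E}\big[4(\sum_i(Sz)_i\nu_i)^2\big]$: using $\operatorname{E}_S[\sum_i(Sz)_i^2]=\Vert z\Vert_2^2$ (LPP) and $\operatorname{E}[\nu_j^2]=2\operatorname{E}_{\mathcal D}[\eta_*^2]$, and using $\operatorname{E}[\nu_i\nu_j]=0$ for $i\ne j$ together with $\operatorname{E}[(Sz)_i^2]=\operatorname{E}[(Sz)_i^2]$ so that $\operatorname{E}[4(\sum_i(Sz)_i\nu_i)^2]=4\operatorname{E}_S[\sum_i(Sz)_i^2]\operatorname{E}[\nu_*^2]=4\Vert z\Vert_2^2\cdot 2\operatorname{E}_{\mathcal D}[\eta_*^2]$, while the first piece gives $\Vert z\Vert_2^2\cdot 2k\operatorname{E}_{\mathcal D}[\eta_*^2]$; together these contribute $2(k+4)\operatorname{E}_{\mathcal D}[\eta_*^2]\Vert z\Vert_2^2$ — wait, here I must be careful with the factor bookkeeping to land exactly on $4(k+2)\operatorname{E}_{\mathcal D}[\eta_*^2]\Vert z\Vert_2^2$, which is where I expect to have to double-check constants. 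Fourth, the pure fourth-power noise term $\operatorname{E}\big[(\sum_i\nu_i^2)^2\big]=\sum_i\operatorname{E}[\nu_i^4]+\sum_{i\ne j}\operatorname{E}[\nu_i^2]\operatorname{E}[\nu_j^2]=k\operatorname{E}[\nu_*^4]+k(k-1)\operatorname{E}[\nu_*^2]^2$; since $\eta,\mu$ are independent copies, $\operatorname{E}[\nu_*^2]=2\operatorname{E}_{\mathcal D}[\eta_*^2]$ and $\operatorname{E}[\nu_*^4]=\operatorname{E}[(\eta_*-\mu_*)^4]=2\operatorname{E}_{\mathcal D}[\eta_*^4]+6\operatorname{E}_{\mathcal D}[\eta_*^2]^2$, and assembling these should yield the stated $2k\operatorname{E}_{\mathcal D}[\eta_*^4]+2k(1+2k)\operatorname{E}_{\mathcal D}[\eta_*^2]^2$.

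The main obstacle is purely the careful bookkeeping of combinatorial constants when passing from the symmetrized noise $\nu_i=\eta_i-\mu_i$ back to moments of $\eta_*$, in particular matching the $4(k+2)$ coefficient on the cross term and the $2k(1+2k)$ coefficient on $\operatorname{E}_{\mathcal D}[\eta_*^2]^2$; a sign slip or an off-by-$\operatorname{E}[\eta_*^2]^2$ term (e.g. from conflating $k(k-1)$ with $k^2$) is easy to make. To guard against this I would keep everything in terms of $\nu_*$ until the very last line, verify the two identities $\operatorname{E}[\nu_*^2]=2\operatorname{E}_{\mathcal D}[\eta_*^2]$ and $\operatorname{E}[\nu_*^4]=2\operatorname{E}_{\mathcal D}[\eta_*^4]+6\operatorname{E}_{\mathcal D}[\eta_*^2]^2$ explicitly, and only then substitute. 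Once Claim~\ref{claim:firstterm} is established, plugging it together with \eqref{expr:lasttermgeneralvariancecomp} into \eqref{expr:generalvarianceterms} and then into \eqref{expr:varianceestimator} gives part 2 of Lemma~\ref{lem:technicalestimator} after the $\Vert z\Vert_2^4$ terms cancel against $\operatorname{E}_S[\Vert Sz\Vert_2^4]-\Vert z\Vert_2^4=\operatorname{Var}[\Vert Sz\Vert_2^2]$ and the $k^2\operatorname{E}_{\mathcal D}[\eta_*^2]^2$ terms cancel.
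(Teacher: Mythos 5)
Your proposal follows essentially the same route as the paper's proof: expand the square of $\sum_{i}\big((Sz)_i+\nu_i\big)^2$ with $z=x-y$ and $\nu_i=\eta_i-\mu_i$, kill the odd-in-$\nu$ terms using zero mean and the symmetry of $\eta_*-\mu_*$, and convert moments of $\nu_*$ back to moments of $\eta_*$ via $\operatorname{E}[\nu_*^2]=2\operatorname{E}_{\mathcal D}[\eta_*^2]$ and $\operatorname{E}[\nu_*^4]=2\operatorname{E}_{\mathcal D}[\eta_*^4]+6\operatorname{E}_{\mathcal D}[\eta_*^2]^2$. The constant you flagged resolves correctly once you keep the factor $2$ on the cross term $2\big(\sum_i(Sz)_i^2\big)\big(\sum_j\nu_j^2\big)$ in the trinomial expansion, which contributes $2\cdot 2k\operatorname{E}_{\mathcal D}[\eta_*^2]\Vert z\Vert_2^2=4k\operatorname{E}_{\mathcal D}[\eta_*^2]\Vert z\Vert_2^2$ and, combined with $\operatorname{E}\big[4(\sum_i(Sz)_i\nu_i)^2\big]=8\operatorname{E}_{\mathcal D}[\eta_*^2]\Vert z\Vert_2^2$, yields exactly $4(k+2)\operatorname{E}_{\mathcal D}[\eta_*^2]\Vert z\Vert_2^2$.
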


The proof of the claim straightforward but tedious and thus left out here. It is proven formally in Appendix \ref{app:omittedproofsClaim}.

Inserting (\ref{expr:lasttermgeneralvariancecomp}) and Claim \ref{claim:firstterm} into (\ref{expr:generalvarianceterms}), we get
\begin{align*}
    \operatorname{E}_{S,\mathcal{D}}\left[\hat{E}_{gen}^2\right]&=\operatorname{E}_{S,\mathcal{D}}\left[\left(\big\Vert (Sx+\eta)-(Sy+\mu)\big\Vert_2^2-2k\operatorname{E}_{\mathcal{D}}\left[\eta_*^2\right]\right)^2\right]\\
    &=\operatorname{E}_{S}\left[\Vert S(x-y)\Vert_2^4\right]+4(k+2)\operatorname{E}_{\mathcal{D}}\left[\eta_*^2\right]\Vert x-y\Vert_2^2+2k\operatorname{E}_{\mathcal{D}}[\eta_*^4]+2k(1+2k)\operatorname{E}_{\mathcal{D}}\left[\eta_*^2\right]^2\\&\qquad+4k^2\operatorname{E}_{\mathcal{D}}\left[\eta_*^2\right]^2-4k\operatorname{E}_{\mathcal{D}}\left[\eta_*^2\right]\Vert x-y\Vert_2^2-8k^2\operatorname{E}_{\mathcal{D}}\left[\eta_*^2\right]^2\\
    &=\operatorname{E}_{S}\left[\Vert S(x-y)\Vert_2^4\right]+8\operatorname{E}_{\mathcal{D}}\left[\eta_*^2\right]\Vert x-y\Vert_2^2+2k\operatorname{E}_{\mathcal{D}}[\eta_*^4]+2k\operatorname{E}_{\mathcal{D}}\left[\eta_*^2\right]^2.
\end{align*}
Inserting this expression into (\ref{expr:varianceestimator}) proves that the variance is 
\begin{align*}
    \operatorname{Var}\left[\hat{E}_{gen}\right]&=\operatorname{E}_{S}\left[\Vert S(x-y)\Vert_2^4\right]+8\operatorname{E}_{\mathcal{D}}\left[\eta_*^2\right]\Vert x-y\Vert_2^2+2k\operatorname{E}_{\mathcal{D}}[\eta_*^4]+2k\operatorname{E}_{\mathcal{D}}\left[\eta_*^2\right]^2-\Vert x-y\Vert_2^4\\
    &=\operatorname{Var}\left[\Vert S(x-y)\Vert_2^2\right]+8\operatorname{E}_{\mathcal{D}}\left[\eta_*^2\right]\Vert x-y\Vert_2^2+2k\operatorname{E}_{\mathcal{D}}[\eta_*^4]+2k\operatorname{E}_{\mathcal{D}}\left[\eta_*^2\right]^2,
\end{align*}
again using that $S$ satisfies LPP.
\end{proof}

\subsubsection{Proof of Claim \ref{claim:firstterm}}
\label{app:omittedproofsClaim}
We repeat the claim for convenience:
\claimfirstterm*
\begin{proof}

For a simpler notation, we define $z:=x-y$. By simply unfolding the expression, we see that
\begin{align*}
    &=\sum_{i,\ell=1}^k\operatorname{E}_{S}\left[(Sz)_i^2(Sz)_\ell^2\right]+\sum_{i,\ell=1}^k\left(\operatorname{E}_{\mathcal{D}}\left[\eta_i^2\right]+\operatorname{E}_{\mathcal{D}}\left[\mu_i^2\right]\right)\operatorname{E}_{S}\left[(Sz)_\ell^2\right]\\&+0+\sum_{i,\ell=1}^k\operatorname{E}_{S}\left[(Sz)_i^2\right]\left(\operatorname{E}_{\mathcal{D}}\left[\eta_\ell^2\right]+\operatorname{E}_{\mathcal{D}}\left[\mu_\ell^2\right]\right)+\sum_{i=1}^k\operatorname{E}_{\mathcal{D}}\left[(\eta_i-\mu_i)^4\right]+\sum_{i\neq\ell}\operatorname{E}_{\mathcal{D}}\left[(\eta_i-\mu_i)^2\right]\operatorname{E}_{\mathcal{D}}\left[(\eta_\ell-\mu_\ell)^2\right]\\&+2\sum_{i=1}^k\operatorname{E}_{S}\left[(Sz)_i\right]\operatorname{E}_{\mathcal{D}}\left[(\eta_i-\mu_i)^3\right]+0+2\sum_{\ell=1}^k\operatorname{E}_{\mathcal{D}}\left[(\eta_\ell-\mu_\ell)^3\right]\operatorname{E}_{S}\left[(Sz)_\ell\right]+4\sum_{i=1}^k\operatorname{E}_{\mathcal{D}}\left[(\eta_i-\mu_i)^2\right]\operatorname{E}_{S}\left[(Sz)_i^2\right]\\&+4\sum_{i\neq\ell}\underbrace{\operatorname{E}_{\mathcal{D}}\left[(\eta_i-\mu_i)(\eta_\ell-\mu_\ell)\right]}_{=0}\operatorname{E}_{S}\left[(Sz)_i(Sz)_\ell\right]
\end{align*}
where we used that $\operatorname{E}[\eta_i]=\operatorname{E}[\mu_i]=0$ for all $i=1,...,k$ and that the noise is drawn independently of $S$.

Recalling that $\operatorname{E}_{\mathcal{D}}[\eta_i^2]~=~\operatorname{E}_{\mathcal{D}}[\mu_i^2]~=~\operatorname{E}_{\mathcal{D}}[\eta_*^2]$ for all $i$, we obtain
\begin{align*}
    &=\sum_{i,\ell=1}^k\operatorname{E}_{S}\left[(Sz)_i^2(Sz)_\ell^2\right]+4k\operatorname{E}_{\mathcal{D}}\left[\eta_*^2\right]\Vert z\Vert_2^2+k\left(2\operatorname{E}_{\mathcal{D}}[\eta_*^4]+6\operatorname{E}_{\mathcal{D}}[\eta_*^2]^2\right)+\sum_{i\neq\ell}4\operatorname{E}_{\mathcal{D}}\left[\eta_*^2\right]^2\\&+2\sum_{i=1}^k\operatorname{E}_{S}\left[(Sz)_i\right]\left(\operatorname{E}_{\mathcal{D}}\left[\eta_*^3\right]-\operatorname{E}_{\mathcal{D}}\left[\eta_*^3\right]\right)+2\sum_{\ell=1}^k\operatorname{E}_{S}\left[(Sz)_\ell\right]\left(\operatorname{E}_{\mathcal{D}}\left[\eta_*^3\right]-\operatorname{E}_{\mathcal{D}}\left[\eta_*^3\right]\right)\\&+4\sum_{i=1}^k2\operatorname{E}_{\mathcal{D}}\left[\eta_*^2\right]\operatorname{E}_{S}\left[(Sz)_i^2\right]
\end{align*}
which simplifies to
\begin{align*}
    &=\operatorname{E}_{S}\left[\Vert Sz\Vert_2^4\right]+4k\operatorname{E}_{\mathcal{D}}\left[\eta_*^2\right]\Vert z\Vert_2^2+2k\operatorname{E}_{\mathcal{D}}[\eta_*^4]+6k\operatorname{E}_{\mathcal{D}}[\eta_*^2]^2+4(k^2-k)\operatorname{E}_{\mathcal{D}}\left[\eta_*^2\right]^2+8\operatorname{E}_{\mathcal{D}}\left[\eta_*^2\right]\Vert z\Vert_2^2\\
    &=\operatorname{E}_{S}\left[\Vert Sz\Vert_2^4\right]+4(k+2)\operatorname{E}_{\mathcal{D}}\left[\eta_*^2\right]\Vert z\Vert_2^2+2k\operatorname{E}_{\mathcal{D}}[\eta_*^4]+2k(1+2k)\operatorname{E}_{\mathcal{D}}\left[\eta_*^2\right]^2
\end{align*}
Re-inserting $z=x-y$, we conclude that 
\begin{align*}
\operatorname{E}_{S,\mathcal{D}}\left[\big\Vert (Sx+\eta)-(Sy+\mu)\big\Vert_2^4\right]&=\operatorname{E}_{S}\left[\Vert S(x-y)\Vert_2^4\right]+4(k+2)\operatorname{E}_{\mathcal{D}}\left[\eta_*^2\right]\Vert x-y\Vert_2^2+2k\operatorname{E}_{\mathcal{D}}[\eta_*^4]\\&\qquad+2k(1+2k)\operatorname{E}_{\mathcal{D}}\left[\eta_*^2\right]^2.
\end{align*}
\end{proof}

\section{Omitted Proofs for FJLT}
\label{app:omittedproofsFJLT}
\subsection{Primitives}
\label{app:primitives}
This section will give some primitives that will be useful in the next section. Non-trivial arguments can be found in Section \ref{app:primitiveHelpers}.
We let $\Phi=PHD$ be the FJLT transform as described in Section \ref{sec:FJLTDescription} and $x,y\in\mathbb{R}^{d}$ be any real vectors. Let $X\sim\mathcal{N}(0,q^{-1})$. Then for any $i,n\in[k]$ and $j,\ell\in[d]$
\[
\E_P[P_{ij}]=0,\qquad \E_P\left[P_{ij}^2\right]=q\cdot \E_X\left[X^2\right]=1
\]
\[
\E_P\left[P_{ij}^4\right]=q\cdot \E_X\left[X^4\right]=q\cdot 3q^{-2}=\frac{3}{q}.
\]
\[
\E_D[D_{jj}]=0,\qquad \E_D[D_{jj}^2]=D_{jj}^2=1.
\]
\[
\E[\Phi_{ij}]=\sum_{f=1}^d\E_P\left[P_{if}\right]H_{fj}\E_D\left[D_{jj}\right]=0
\]
\[
\E_{\Phi}[\Phi_{ij}\Phi_{n\ell}]=\begin{cases}
\E_{\Phi}[\Phi_{ij}]\E_{\Phi}[\Phi_{n\ell}]=0,\qquad\qquad i\neq n,\ j\neq \ell\\
\E_{P}\left[(PH)_{ij}\right]\E_{P}\left[(PH)_{nj}\right]
=0,\ \ i\neq n,\ j=\ell\\
\E_{\Phi}[\Phi_{ij}]\E_{\Phi}[\Phi_{i\ell}]=0,\qquad\qquad i=n,\ j\neq \ell\\
\E_{\Phi}[\Phi_{ij}^2]=1,\qquad\qquad\qquad\quad\ \ i=n,\ j=\ell
\end{cases}
\]

\begin{align}
\label{eq:squaredphiterms}
\E_{\Phi}[\Phi_{ij}^2\Phi_{n\ell}^2]&=\begin{cases}
1,\qquad\qquad\qquad i\neq n\\
\frac{3}{qd}+1-\frac{3}{d},\qquad i=n,\ j\neq \ell\\
\frac{3}{qd}+3-\frac{3}{d},\qquad i=n,\ j= \ell
\end{cases}
\end{align}
 
\begin{align*}
    \E_{\Phi}[\Phi_{ij}\Phi_{i\ell}\Phi_{nv}\Phi_{nw}]=
    \begin{cases}
    \E_{\Phi}[\Phi_{ij}^2\Phi_{nv}^2],\qquad j=\ell,\ v=w\\
    \E_{\Phi}[\Phi_{ij}^2\Phi_{i\ell}^2],\qquad i=n,\ (j=v,\ \ell=w)\lor (j=w,\ \ell=v)\\
    0,\qquad\qquad\qquad\ \text{otherwise}
    \end{cases}
\end{align*}

\begin{align*}
    \E_{\Phi}[(\Phi x)_{i}(\Phi y)_{n}]=\sum_{j,\ell=1}^dx_jy_\ell\E_{\Phi}[\Phi_{ij}\Phi_{n\ell}]=\begin{cases}
    0,\qquad\qquad\qquad i\neq n\\
    \sum_{j=1}^dx_jy_j,\qquad\ i=n
    \end{cases}
\end{align*}

\begin{align}
\label{eq:squaredphix}
\E_{\Phi}[(\Phi x)_{i}^2(\Phi y)_{n}^2]=\Vert x\Vert_2^2\Vert y\Vert_2^2,\qquad i\neq n
\end{align}

\begin{equation}
\begin{split}
    \E_{\Phi}[(\Phi x)_{i}^2(\Phi y)_{i}^2]
    \label{eq:squaredphixieqn}
    &=
    \frac{3}{d}\left(\frac{d}{3}+\left(\frac{1}{q}-1\right)\right)\left(\Vert x\Vert_2^2\Vert y\Vert_2^2+2\langle x,y\rangle^2\right)-\frac{6}{d}\left(\frac{1}{q}-1\right)\sum_{j=1}^dx_j^2y_j^2
\end{split}
\end{equation}

\subsubsection{Arguments for Primitives}
\label{app:primitiveHelpers}
\subsubsection*{Argument for (\ref{eq:squaredphiterms})}
We use that
\begin{align*}
\E_{\Phi}[\Phi_{ij}^2\Phi_{n\ell}^2]=\sum_{f,g,h,s=1}^d\E_{P}[P_{if}P_{ig}P_{nh}P_{ns}]H_{fj}H_{gj}H_{h\ell}H_{s\ell}\E_{D}[D_{jj}^2D_{\ell\ell}^2]
\end{align*}
and so for $i\neq n$
\[
\sum_{f,h=1}^d\E_{P}\left[P_{if}^2\right]\E_{P}\left[P_{nh}^2\right]H_{fj}^2H_{h\ell}^2=1
\]
and for $i=n$
\begin{align*}
    \sum_{f,g,h,s=1}^d\E_{P}[P_{if}P_{ig}P_{nh}P_{ns}]H_{fj}H_{gj}H_{h\ell}H_{s\ell}&=\sum_{f=1}^d\E_{P}[P_{if}^4]H_{fj}^2H_{f\ell}^2+\sum_{f\neq h=1}^d\E_{P}[P_{if}^2]\E_{P}[P_{ih}^2]H_{fj}^2H_{h\ell}^2\\&\qquad+2\sum_{f\neq \ell=1}^d\E_{P}[P_{if}^2]\E_{P}[P_{ig}^2]H_{fj}H_{gj}H_{f\ell}H_{g\ell}\\
    &=\E_{P}[P_{if}^4]/d+(d^2-d)/d^2+2(\langle H_j,H_\ell\rangle^2-1/d)\\
    &=\E_{P}[P_{if}^4]/d+1-3/d+2\langle H_j,H_\ell\rangle^2\\&=\begin{cases}
    \E_{P}[P_{if}^4]/d+1-3/d,\qquad j\neq \ell\\
\E_{P}[P_{if}^4]/d+3-3/d,\qquad j=\ell
    \end{cases}
\end{align*}
because $\langle H_j,H_\ell\rangle=\begin{cases}
0,\qquad j\neq \ell\\
1,\qquad j=\ell.
\end{cases}$

\subsubsection*{Argument for (\ref{eq:squaredphix}) and (\ref{eq:squaredphixieqn})}
We used that 
\begin{align*}
    &\E_{\Phi}[(\Phi x)_{i}^2(\Phi y)_{n}^2]=\sum_{j,v=1}^dx_j^2y_v^2\E_{\Phi}[\Phi_{ij}^2\Phi_{nv}^2]+2\sum_{j\neq \ell=1}^dx_j^2y_\ell^2\E_{\Phi}[\Phi_{ij}\Phi_{i\ell}\Phi_{nj}\Phi_{n\ell}]
\end{align*}
where we for $i\neq n$ get $\Vert x\Vert_2^2\Vert y\Vert_2^2$ and for $i=n$ get 
\begin{align*}
    &\sum_{j=1}^dx_j^2y_j^2\E_{\Phi}[\Phi_{ij}^4]+\sum_{j\neq v=1}^dx_j^2y_v^2\E_{\Phi}[\Phi_{ij}^2\Phi_{iv}^2]+2\sum_{j\neq \ell=1}^dx_jx_\ell y_jy_\ell\E_{\Phi}[\Phi_{ij}^2\Phi_{i\ell}^2]\\&=\sum_{j=1}^dx_j^2y_j^2\left(\frac{3}{qd}+3-\frac{3}{d}\right)+\sum_{j\neq v=1}^dx_j^2y_v^2\left(\frac{3}{qd}+1-\frac{3}{d}\right)+2\sum_{j\neq \ell=1}^dx_jx_\ell y_jy_\ell\left(\frac{3}{qd}+1-\frac{3}{d}\right)\\&=\sum_{j=1}^dx_j^2y_j^2\left(\frac{3}{qd}+3-\frac{3}{d}\right)+\left(\Vert x\Vert_2^2\Vert y\Vert_2^2-\sum_{j=1}^dx_j^2y_j^2\right)\left(\frac{3}{qd}+1-\frac{3}{d}\right)+2\left(\langle x,y\rangle^2-\sum_{j=1}^dx_j^2y_j^2\right)\left(\frac{3}{qd}+1-\frac{3}{d}\right)\\&=\frac{3}{d}\left(\frac{d}{3}+\left(\frac{1}{q}-1\right)\right)\left(\Vert x\Vert_2^2\Vert y\Vert_2^2+2\langle x,y\rangle^2\right)-\frac{6}{d}\left(\frac{1}{q}-1\right)\sum_{j=1}^dx_j^2y_j^2
\end{align*}

\subsection{Proof of FJLT Satisfying LPP}
\label{app:FJLTLPP}
\fjltLPP*
\begin{proof}
Applying the primitives from Appendix \ref{app:primitives}, we get 
\begin{align*}
    \E_{\Phi}\left[\frac{1}{k}\Vert \Phi x\Vert_2^2\right]&=\frac{1}{k}\E_{\Phi}\left[\sum_{i=1}^k(\Phi x)_i^2\right]=\frac{1}{k}\sum_{i=1}^k\sum_{j,\ell=1}^d\E_{\Phi}\left[\Phi_{ij}\Phi_{i\ell}\right]x_jx_\ell=\frac{1}{k}\sum_{i=1}^k\sum_{j=1}^d\E_{\Phi}\left[\Phi_{ij}^2\right]x_j^2=\Vert x\Vert_2^2.
\end{align*}
\end{proof}

\subsection{Variance under FJLT}
\label{app:varianceunderphi}
For convenience, we prove the following result, as it will be useful in this form for several other proofs. Note that Lemma \ref{lem:varianceFJLT} follows directly from Lemma \ref{lem:varianceunderphi}.
\begin{lemma}
\label{lem:varianceunderphi}
Let $k\times d$-matrix $\Phi=PHD$, where $P_{ij}$ is $\mathcal{N}(0,q^{-1})$ with probability $q$ and and 0 otherwise. For input vector $\eta\sim\mathcal{D}^d$ for a real-valued distribution $\mathcal{D}$:
\[
\Var[\Vert \Phi \eta\Vert_2^2]\le \frac{3}{k}\E_{\eta}\left[\Vert \eta\Vert_2^2\right].
\]
For $x\in \mathbb{R}^d$, we get 
\[
\Var[\Vert \Phi x\Vert_2^2]\le \frac{3}{k}\Vert x\Vert_2^2.
\]
\end{lemma}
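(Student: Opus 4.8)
The plan is to prove the fixed-vector statement by a direct second-moment calculation using the primitives of Appendix~\ref{app:primitives}, and then obtain the random-input version by conditioning. Fix $z\in\mathbb{R}^d$ and write $\Var[\Vert\Phi z\Vert_2^2]=\E[\Vert\Phi z\Vert_2^4]-\E[\Vert\Phi z\Vert_2^2]^2$. The second moment is immediate: since $\E_\Phi[\Phi_{ij}^2]=1$ and $\E_\Phi[\Phi_{ij}\Phi_{i\ell}]=0$ for $j\neq\ell$, expanding $(\Phi z)_i^2=\sum_{j,\ell}\Phi_{ij}\Phi_{i\ell}z_jz_\ell$ gives $\E[(\Phi z)_i^2]=\Vert z\Vert_2^2$ and hence $\E[\Vert\Phi z\Vert_2^2]=k\Vert z\Vert_2^2$, so $\E[\Vert\Phi z\Vert_2^2]^2=k^2\Vert z\Vert_2^4$. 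All the work is in the fourth moment.

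Next I would expand $\Vert\Phi z\Vert_2^4=\sum_{i,\ell=1}^k(\Phi z)_i^2(\Phi z)_\ell^2$ and split by whether $i=\ell$. For the $k(k-1)$ off-diagonal pairs, equation~(\ref{eq:squaredphix}) with $x=y=z$ gives $\E[(\Phi z)_i^2(\Phi z)_\ell^2]=\Vert z\Vert_2^4$. For the $k$ diagonal terms, I specialize equation~(\ref{eq:squaredphixieqn}) to $x=y=z$, using $\Vert z\Vert_2^2\Vert z\Vert_2^2+2\langle z,z\rangle^2=3\Vert z\Vert_2^4$ and $\sum_j z_j^2z_j^2=\sum_j z_j^4$, which yields $\E[(\Phi z)_i^4]=\bigl(3+\tfrac{9}{d}(\tfrac1q-1)\bigr)\Vert z\Vert_2^4-\tfrac{6}{d}(\tfrac1q-1)\sum_j z_j^4$. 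Summing the two contributions, $\E[\Vert\Phi z\Vert_2^4]=(k^2+2k)\Vert z\Vert_2^4+\tfrac{3k}{d}(\tfrac1q-1)\bigl(3\Vert z\Vert_2^4-2\sum_j z_j^4\bigr)$; the dominant $k^2\Vert z\Vert_2^4$ cancels against $\E[\Vert\Phi z\Vert_2^2]^2$, leaving $\Var[\Vert\Phi z\Vert_2^2]=2k\Vert z\Vert_2^4+\tfrac{3k}{d}(\tfrac1q-1)\bigl(3\Vert z\Vert_2^4-2\sum_j z_j^4\bigr)$.

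It remains to bound the leftover correction. Since $q\le 1$ we have $\tfrac1q-1\ge 0$, and $\sum_j z_j^4\ge 0$, so discarding the $-2\sum_j z_j^4$ part only increases the bound: $\Var[\Vert\Phi z\Vert_2^2]\le k\Vert z\Vert_2^4\bigl(2+\tfrac{9}{d}(\tfrac1q-1)\bigr)\le k\Vert z\Vert_2^4\bigl(2+\tfrac{9}{dq}\bigr)$. Now I invoke the definition $q=\min\{\Theta(\log^2(1/\beta)/d),1\}$: either $q=1$, so the correction vanishes, or $dq=\Theta(\log^2(1/\beta))$, which for a suitable absolute constant in the $\Theta(\cdot)$ is at least $9$, so $\tfrac{9}{dq}\le 1$ and $\Var[\Vert\Phi z\Vert_2^2]\le 3k\Vert z\Vert_2^4$. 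Dividing by $k^2$ gives the normalized bound $\Var[\tfrac1k\Vert\Phi z\Vert_2^2]\le\tfrac3k\Vert z\Vert_2^4$, and taking $z=x-y$ recovers Lemma~\ref{lem:varianceFJLT}. For a random input $\eta\sim\mathcal{D}^d$ drawn independently of $\Phi$, every step above used only that the argument is a fixed real vector and that $\sum_j\eta_j^4\ge 0$ holds pointwise, so applying the bound conditionally on $\eta$ and then taking $\E_\eta[\,\cdot\,]$ yields the stated estimate for the noise-averaged conditional variance, $\E_\eta[\Var_\Phi[\tfrac1k\Vert\Phi\eta\Vert_2^2\mid\eta]]\le\tfrac3k\E_\eta[\Vert\eta\Vert_2^4]$.

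I expect the only obstacle to be clerical rather than conceptual: correctly assembling the diagonal fourth-moment term out of~(\ref{eq:squaredphixieqn}), tracking the exact cancellation of the $k^2\Vert z\Vert_2^4$ contribution, and verifying that the residual factor $\tfrac{9}{dq}$ is an absolute constant under the prescribed $q$ (equivalently, fixing the constant hidden in its $\Theta$). Once the primitives of Appendix~\ref{app:primitives} are available, nothing deeper is needed.
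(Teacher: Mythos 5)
Your proposal is correct and follows essentially the same route as the paper: expand $\E[\Vert\Phi z\Vert_2^4]$ into diagonal and off-diagonal contributions via the primitives (\ref{eq:squaredphix}) and (\ref{eq:squaredphixieqn}), cancel the $k^2\Vert z\Vert_2^4$ term against $\E[\Vert\Phi z\Vert_2^2]^2$, arrive at $2k\Vert z\Vert_2^4+\tfrac{3k}{d}\bigl(\tfrac{1}{q}-1\bigr)\bigl(3\Vert z\Vert_2^4-2\Vert z\Vert_4^4\bigr)$, and absorb the residual factor using the lower bound on $q$ (which the paper states as $q\ge\frac{1}{d/9+1}$, equivalent to your condition up to constants). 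Your treatment of the random input by conditioning on $\eta$ and averaging is consistent with what the paper's algebra actually computes (it subtracts $k^2\E_{\eta}[\Vert\eta\Vert_2^4]$, i.e.\ the expected conditional variance), and your implicit correction of the normalization and exponent typos in the lemma statement is the intended reading.
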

\begin{proof}
\begin{align*}
    \Var[\Vert \Phi \eta\Vert_2^2]&= \E_{\Phi,\eta}[\Vert \Phi \eta\Vert_2^4]- \E_{\Phi,\eta}[\Vert \Phi \eta\Vert_2^2]^2=\sum_{i,n=1}^k\E_{\Phi,\eta}\left[(\Phi \eta)_i^2(\Phi \eta)_n^2\right]- k^2\E_{\eta}\left[\Vert \eta\Vert_2^4\right]\\&=
    \sum_{i=1}^k\E_{\Phi,\eta}\left[(\Phi  \eta)_i^4\right]+\sum_{i\neq n=1}^k\E_{\Phi}\left[(\Phi \eta)_i^2(\Phi \eta)_n^2\right]- k^2\E_{\eta}\left[\Vert \eta\Vert_2^4\right]\\
    &=\frac{9k}{d}\left(\frac{d}{3}+\left(\frac{1}{q}-1\right)\right)\E_{\eta}\left[\Vert \eta\Vert_2^4\right]-\frac{6k}{d}\left(\frac{1}{q}-1\right)\E_{\eta}\left[\Vert \eta\Vert_4^4\right]-k^2\E_{\eta}\left[\Vert \eta\Vert_2^4\right]\\
    &=3k\left(\frac{2}{3}+\frac{3}{d}\left(\frac{1}{q}-1\right)\right)\E_{\eta}\left[\Vert \eta\Vert_2^4\right]-\frac{6k}{d}\left(\frac{1}{q}-1\right)\E_{\eta}\left[\Vert \eta\Vert_4^4\right].
\end{align*}
which again implies
\begin{align*}
    \Var\left[\frac{1}{k}\Vert \Phi \eta\Vert_2^2\right]&=\frac{3}{k}\left(\frac{2}{3}+\frac{3}{d}\left(\frac{1}{q}-1\right)\right)\E_{\eta}\left[\Vert \eta\Vert_2^4\right]-\frac{6}{dk}\left(\frac{1}{q}-1\right)\E_{\eta}\left[\Vert \eta\Vert_4^4\right]\\&\le \frac{3\E_{\eta}\left[\Vert \eta\Vert_2^4\right]}{k}\left(
    \frac{2}{3}+\frac{3}{d}\left(\frac{1}{q}-1\right)\right)\le \frac{3}{k}\E_{\eta}\left[\Vert \eta\Vert_2^4\right]
\end{align*}
when $q\ge \frac{1}{d/9+1}$. 
\end{proof}

\section{Omitted Proofs for Private FJLT}
\subsection{Estimator and Variance for Private FJLT}
\label{app:omittedproofsPrivateFJLT}

\begin{lemma}
We have 
\begin{enumerate}
    \item $\hat{E}_{FJLT_i}$ is an unbiased estimator for $\Vert x-y\Vert_2^2$.
    \item $\Var[\hat{E}_{FJLT_i}]\le \frac{3}{k}\Vert x-y\Vert_2^4+O\left(\frac{d^2\sigma^4}{k}+d\sigma^2\Vert x-y\Vert_2^2\right)$.
\end{enumerate}
\end{lemma}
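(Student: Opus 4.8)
The plan is to reduce to a single shifted Gaussian noise vector and then peel off the randomness of $\Phi$ by conditioning. Set $z:=x-y$ and $\nu:=\eta-\mu$; since $\eta,\mu$ are independent with i.i.d.\ $\mathcal{N}(0,\sigma^2)$ coordinates, $\nu\sim\mathcal{N}(0,2\sigma^2)^d$, and
\[
\hat{E}_{FJLT_i}=\tfrac{1}{k}\Vert \Phi(z+\nu)\Vert_2^2-2d\sigma^2 .
\]
One cannot simply invoke Lemma~\ref{lem:technicalestimator}: after distributing $\Phi$, the effective ``output noise'' $\tfrac{1}{k}\Phi\nu$ is neither independent of $\Phi$ nor coordinate-wise i.i.d., which is precisely why the input-perturbation version needs a separate argument. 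Instead I will condition on $\nu$ and use the two already-established properties of the normalized FJLT: it satisfies the Length Preserving Property (Lemma~\ref{lem:FJLTLPP}), so $\E_\Phi[\tfrac{1}{k}\Vert\Phi w\Vert_2^2]=\Vert w\Vert_2^2$ for any fixed $w$, and $\operatorname{Var}_\Phi[\tfrac{1}{k}\Vert\Phi w\Vert_2^2]\le \tfrac{3}{k}\Vert w\Vert_2^4$ (Lemma~\ref{lem:varianceFJLT}, equivalently the deterministic case of Lemma~\ref{lem:varianceunderphi}).

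Unbiasedness then follows in two steps: conditioning on $\nu$ and applying LPP gives $\E_\Phi[\hat{E}_{FJLT_i}\mid\nu]=\Vert z+\nu\Vert_2^2-2d\sigma^2=\Vert z\Vert_2^2+2\langle z,\nu\rangle+\Vert\nu\Vert_2^2-2d\sigma^2$, and taking expectation over $\nu$ the cross term vanishes while $\E_\nu[\Vert\nu\Vert_2^2]=2d\sigma^2$, leaving $\Vert z\Vert_2^2=\Vert x-y\Vert_2^2$.

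For the variance I would apply the law of total variance with respect to $\nu$:
\[
\operatorname{Var}[\hat{E}_{FJLT_i}]=\E_\nu\!\left[\operatorname{Var}_\Phi\!\left[\tfrac{1}{k}\Vert\Phi(z+\nu)\Vert_2^2\mid\nu\right]\right]+\operatorname{Var}_\nu\!\left[\Vert z+\nu\Vert_2^2\right],
\]
where the second summand already used LPP. By Lemma~\ref{lem:varianceFJLT} the first summand is at most $\tfrac{3}{k}\E_\nu[\Vert z+\nu\Vert_2^4]$, so it remains to evaluate the Gaussian moments $\E_\nu[\Vert z+\nu\Vert_2^4]$ and $\operatorname{Var}_\nu[\Vert z+\nu\Vert_2^2]$. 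Expanding $\Vert z+\nu\Vert_2^2=\Vert z\Vert_2^2+2\langle z,\nu\rangle+\Vert\nu\Vert_2^2$ and using that all odd moments in $\nu$ vanish by symmetry, that $\E_\nu[\langle z,\nu\rangle^2]=2\sigma^2\Vert z\Vert_2^2$, $\E_\nu[\Vert\nu\Vert_2^2]=2d\sigma^2$, and $\operatorname{Var}_\nu[\Vert\nu\Vert_2^2]=8d\sigma^4$ (each coordinate $\nu_i^2$ has variance $8\sigma^4$), one obtains $\E_\nu[\Vert z+\nu\Vert_2^4]=\Vert z\Vert_2^4+8\sigma^2\Vert z\Vert_2^2+4d\sigma^2\Vert z\Vert_2^2+4d^2\sigma^4+8d\sigma^4$ and $\operatorname{Var}_\nu[\Vert z+\nu\Vert_2^2]=8\sigma^2\Vert z\Vert_2^2+8d\sigma^4$. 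Substituting, collecting terms, and using $1\le k\le d$ (as $\Phi$ reduces dimension) to absorb every lower-order piece into $d\sigma^2\Vert x-y\Vert_2^2$ and $d^2\sigma^4/k$ gives the claimed bound $\tfrac{3}{k}\Vert x-y\Vert_2^4+O\!\left(\tfrac{d^2\sigma^4}{k}+d\sigma^2\Vert x-y\Vert_2^2\right)$.

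The genuinely new content is only the conditioning structure; the rest is routine Gaussian fourth-moment bookkeeping. The main place to be careful is classifying the cross terms in the expansion of $\Vert z+\nu\Vert_2^4$: terms such as $\Vert z\Vert_2^2\langle z,\nu\rangle$ and $\langle z,\nu\rangle\Vert\nu\Vert_2^2$ must be seen to vanish under the sign symmetry $\nu\mapsto-\nu$, whereas $\Vert z\Vert_2^2\Vert\nu\Vert_2^2$ and $\langle z,\nu\rangle^2$ do not. The ``$d^2\sigma^4$'' term is exactly the by-product of $\E_\nu[\Vert\nu\Vert_2^4]\approx(2d\sigma^2)^2$, i.e.\ of passing the noise through $\Phi$ at full input dimension $d$, as the lemma's remark already flags.
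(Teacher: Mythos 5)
Your proof is correct, but it takes a genuinely different route from the paper's. The paper works with the decomposition $\Vert\Phi z+\Phi\nu\Vert_2^2=\Vert\Phi z\Vert_2^2+\Vert\Phi\nu\Vert_2^2+2\langle\Phi z,\Phi\nu\rangle$, bounds the fourth moment via a triangle-inequality step, and then evaluates the mixed moment $\E_{\Phi,\nu}\bigl[\Vert\Phi z\Vert_2^2\Vert\Phi\nu\Vert_2^2\bigr]$ explicitly from the FJLT primitives $\E[\Phi_{ij}^2\Phi_{n\ell}^2]$ --- a fairly heavy computation that re-enters the internal structure of $P$, $H$, $D$. You instead condition on $\nu$ and invoke the law of total variance, so that the only facts about $\Phi$ you ever use are the two already-established marginal ones: LPP (Lemma \ref{lem:FJLTLPP}) and the deterministic-input variance bound $\operatorname{Var}_\Phi[\tfrac1k\Vert\Phi w\Vert_2^2]\le\tfrac3k\Vert w\Vert_2^4$ (Lemma \ref{lem:varianceFJLT}). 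Your Gaussian bookkeeping checks out: with $\nu\sim\mathcal{N}(0,2\sigma^2)^d$ one indeed gets $\E_\nu[\Vert z+\nu\Vert_2^4]=\Vert z\Vert_2^4+(8+4d)\sigma^2\Vert z\Vert_2^2+4d^2\sigma^4+8d\sigma^4$ and $\operatorname{Var}_\nu[\Vert z+\nu\Vert_2^2]=8\sigma^2\Vert z\Vert_2^2+8d\sigma^4$, and absorbing lower-order terms (using $k\le d$, which the paper also assumes) yields the stated bound. What your approach buys is modularity and a cleaner handling of the dependence between the ``effective noise'' $\Phi\nu$ and the projection $\Phi$ --- it sidesteps both the cross-moment primitives and the triangle-inequality/cancellation step of the paper's argument, and would apply verbatim to any LPP transform with a $\tfrac{c}{k}\Vert w\Vert_2^4$ variance bound. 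What it gives up is the slightly sharper explicit constants the paper extracts from the exact mixed-moment formula, which are in any case swallowed by the $O(\cdot)$ in the statement.
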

\begin{proof}
We repeatedly apply the primitives of Section \ref{app:primitives} and Section \ref{app:varianceunderphi}.

We start by proving 1). Observe that
\begin{align*}
    \E\left[\Vert \Phi(x+\eta)-\Phi(y+\mu)\Vert_2^2\right]&=\E\left[\Vert \Phi(x-y)+\Phi(\eta-\mu)\Vert_2^2\right]\\&=\E\left[\Vert \Phi(x-y)\Vert_2^2\right]+\E\left[\Vert\Phi(\eta-\mu)\Vert_2^2\right]\\&=k\Vert x-y\Vert_2^2+k\E_{\eta,\mu}\left[\Vert \eta-\mu\Vert_2^2\right]
\end{align*}
Since $\eta,\mu\sim\mathcal{N}(0,\sigma^2)^d$, we have $\eta-\mu\sim\mathcal{N}(0,2\sigma^2)^d$ and so
\[
\E_{\eta}\left[\Vert \eta-\mu\Vert_2^2\right]=\sum_{j=1}^d\E_{\eta,\mu}[(\eta_j-\mu_j)^2]=2d\sigma^2.
\]
We conclude that 
\[
\hat{E}_{FJLT_i}=1/k\Vert \Phi(x+\eta)-\Phi(y+\mu)\Vert_2^2-2d\sigma^2
\]
is an unbiased estimator for $\Vert x-y\Vert_2^2$.

We turn to proving 2). Note that
\begin{align*}
&\Var\left[1/k\Vert \Phi (x-y)+\Phi (\eta-\mu)\Vert_2^2-2d\sigma^2\right]=\frac{1}{k^2}\Var\left[\Vert \Phi (x-y)+\Phi (\eta-\mu)\Vert_2^2\right],
\end{align*}
so it suffices to consider the RHS. For readability, we will do the analysis for $x$ and $\eta$, and eventually substitute $x$ for $x-y$ and $\eta$ for $\eta-\mu$, recalling that if $\eta\sim \mathcal{N}(0,\sigma^2)$, then $\eta-\mu\sim\mathcal{N}(0,2\sigma^2)$.

For any $x,\eta\in\mathbb{R}^d$
\begin{align*}
    \E_{\Phi,\eta}\left[\Vert \Phi x+\Phi \eta\Vert_2^2\right]^2&=\left(\E_{\Phi}\left[\Vert \Phi x\Vert_2^2\right]+\E_{\Phi}\left[\Vert\Phi\eta\Vert_2^2\right]\right)^2\\
    &=\E_{\Phi}\left[\Vert \Phi x\Vert_2^2\right]^2+\E_{\Phi}\left[\Vert\Phi\eta\Vert_2^2\right]^2+2\E_{\Phi}\left[\Vert \Phi x\Vert_2^2\right]\E_{\Phi}\left[\Vert\Phi \eta\Vert_2^2\right]
\end{align*}

By the triangle inequality, we see that
\begin{align*}
    \E_{\Phi,\eta}\left[\Vert \Phi x+\Phi\eta\Vert_2^4\right]&=\E_{\Phi,\eta}\left[\left(\Vert \Phi x+\Phi\eta\Vert_2^2\right)^2\right]\\&\le \E_{\Phi,\eta}\left[\left(\Vert \Phi x\Vert_2^2+\Vert\Phi\eta\Vert_2^2+2\Vert \Phi x\Vert_2\Vert \Phi\eta\Vert_2\right)^2\right]\\
    &= \E_{\Phi}\left[\Vert \Phi x\Vert_2^4\right]+\E_{\Phi,\eta}\left[\Vert\Phi\eta\Vert_2^4\right]+6\E_{\Phi,\eta}\left[\Vert \Phi x\Vert_2^2\Vert \Phi\eta\Vert_2^2\right]
\end{align*}
Where the last equality follows from the zero-meaned $\eta$ leading to a several terms cancelling out.

Hence, the variance is bounded by
\begin{align*}
    \Var\left[\Vert \Phi x+\Phi \eta\Vert_2^2\right]&\le\E_{\Phi}\left[\Vert \Phi x\Vert_2^4\right]+\E_{\Phi,\eta}\left[\Vert\Phi\eta\Vert_2^4\right]+6\E_{\Phi,\eta}\left[\Vert \Phi x\Vert_2^2\Vert \Phi\eta\Vert_2^2\right]-\E_{\Phi}\left[\Vert \Phi x\Vert_2^2\right]^2-\E_{\Phi,\eta}\left[\Vert\Phi\eta\Vert_2^2\right]^2\\&\qquad-2\E_{\Phi}\left[\Vert \Phi x\Vert_2^2\right]\E_{\Phi,\eta}\left[\Vert\Phi\eta\Vert_2^2\right]
\end{align*}
which again implies
\begin{equation}
\begin{split}
    \Var\left[1/k\Vert \Phi x+\Phi \eta\Vert_2^2\right]&\le 
    \label{eq:varianceofunbiasedestimatorFJLT}\Var_{\Phi}\left[1/k\Vert \Phi x\Vert_2^2\right]+\Var_{\Phi,\eta}\left[1/k\Vert\Phi\eta\Vert_2^2\right]+\frac{6}{k^2}\E_{\Phi,\eta}\left[\Vert \Phi x\Vert_2^2\Vert \Phi\eta\Vert_2^2\right]\\&\qquad-\frac{2}{k^2}\E_{\Phi}\left[\Vert \Phi x\Vert_2^2\right]\E_{\Phi,\eta}\left[\Vert\Phi\eta\Vert_2^2\right]
\end{split}
\end{equation}

For the last term we have
\begin{align*}
    \E_{\Phi}\left[\Vert \Phi x\Vert_2^2\right]\E_{\Phi,\eta}\left[\Vert\Phi\eta\Vert_2^2\right]=2k^2\Vert x\Vert_2^2\E_{\eta}[\Vert \eta\Vert_2^2]=2k^2d\sigma^2\Vert x\Vert_2^2
\end{align*}

and for the second to last term we get:
\begin{align*}
    \E_{\Phi,\eta}\left[\Vert \Phi x\Vert_2^2\Vert \Phi\eta\Vert_2^2\right]&=\sum_{i,n=1}^k\E_{\Phi,\eta}[(\Phi x)_i^2(\Phi \eta)_n^2]=\sum_{i=1}^k\E_{\Phi,\eta}[(\Phi x)_i^2(\Phi \eta)_i^2]+\sum_{i\neq n=1}^k\E_{\Phi,\eta}[(\Phi x)_i^2(\Phi \eta)_n^2]\\&=\frac{3k}{d}\left(\frac{d}{3}-\left(1-\frac{1}{q}\right)\right)\left(\Vert x\Vert_2^2\E\left[\Vert \eta\Vert_2^2\right]+2\E\left[\langle x,\eta\rangle^2\right]\right)+\frac{6k}{d}\left(1-\frac{1}{q}\right)\sum_{j=1}^dx_j^2\E\left[\eta_j^2\right]\\&\qquad+(k^2-k)\Vert x\Vert_2^2\E_{\eta}[\Vert \eta\Vert_2^2]\\&=\frac{3k}{d}\left(\frac{d}{3}-\left(1-\frac{1}{q}\right)\right)\left(\Vert x\Vert_2^2\E\left[\Vert \eta\Vert_2^2\right]+2\Vert x\Vert_2^2\E_{\eta}[\eta_*^2]\right)+\frac{6k}{d}\left(1-\frac{1}{q}\right)\Vert x\Vert_2^2\E\left[\eta_*^2\right]\\&\qquad+(k^2-k)\Vert x\Vert_2^2\E_{\eta}[\Vert \eta\Vert_2^2]\\&=k\left(k-\frac{3}{d}\left(1-\frac{1}{q}\right)\right)\Vert x\Vert_2^2d\sigma^2+2k\Vert x\Vert_2^2\sigma^2\\&=k\Vert x\Vert_2^2\sigma^2\left(kd+2-3\left(1-\frac{1}{q}\right)\right)
\end{align*}

Inserting into (\ref{eq:varianceofunbiasedestimatorFJLT}) and applying Section \ref{app:varianceunderphi}, we see that 
\begin{align*}
    \Var[1/k\Vert \Phi x+\Phi \eta\Vert_2^2]&\le \frac{3}{k}\Vert x\Vert_2^4+O\left(\frac{d^2\sigma^4}{k}\right)+\frac{6}{k}\Vert x\Vert_2^2\sigma^2\left(kd+2-3\left(1-\frac{1}{q}\right)\right)-4d\sigma^2\Vert x\Vert_2^2\\&=\frac{3}{k}\Vert x\Vert_2^4+O\left(\frac{d^2\sigma^4}{k}\right)+\frac{2}{k}\Vert x\Vert_2^2\sigma^2\left(kd+6+9\left(\frac{1}{q}-1\right)\right)
\end{align*}
Substituting $x$ for $x-y$ and $\eta$ for $\eta-\mu$ proves that
\begin{align*}
    \Var[1/k\Vert \Phi(x+\eta)-\Phi(y+\mu)\Vert_2^2-2d\sigma^2]&\le \frac{3}{k}\Vert x-y\Vert_2^4+O\left(\frac{d^2\sigma^4}{k}+d\sigma^2\Vert x\Vert_2^2+\frac{\sigma^2}{qk}\Vert x\Vert_2^2\right)
\end{align*}
Recalling that $q= \min\left\{\Theta\left(\frac{\log k}{d}\right),1\right\}$ we get 
\[
\frac{3}{k}\Vert x-y\Vert_2^4+O\left(\frac{d^2\sigma^4}{k}+d\sigma^2\Vert x\Vert_2^2\right)
\]
concluding the proof.
\end{proof}

\section{Omitted Proofs for SJLT}
\subsection{Proof of SJLT Satisfying LPP}
\label{app:omittedproofsSparserLPP}
\sparserLPP*
\begin{proof}
We show the result here for the $c)$-construction. A similar proof shows the result for the $b)$-construction. 

\begin{align*}
\operatorname{E}_{S}\left[\Vert Sx\Vert_2^2\right]&=\operatorname{E}_{S}\left[\sum_{i=1}^{k/s}\sum_{r=1}^s(Sx)_{(i,r)}^2\right]=\frac{1}{s}\operatorname{E}_{h,\sign}\left[\sum_{i=1}^{k/s}\sum_{r=1}^s\left(\sum_{j=1}^d \sign_{r}(j)\xi_{ri}(j)x_j\right)^2\right]\\&=\frac{1}{s}\operatorname{E}_{S}\left[\sum_{i=1}^{k/s}\sum_{r=1}^s\sum_{j,\ell=1}^d \sign_{r}(j)\sign_{r}(\ell)\xi_{ri}(j)\xi_{ri}(\ell)x_jx_\ell\right]\\&=\frac{1}{s}\sum_{j=1}^dx_j^2\sum_{i=1}^{k/s}\sum_{r=1}^s \operatorname{E}_{h}\left[\xi_{ri}(j)\right]=\Vert x\Vert_2^2
\end{align*}
because $\sign_r(j)$ and $\sign_r(\ell)$ are independent for $j\neq \ell$ and $\operatorname{E}_{\sign}[\sign_r(j)]=0$.
\end{proof}

\subsection{Proof of Variance of (non-private) SJLT}
\label{app:sparserJLvariance}
The following lemma will be useful throughout this appendix. The proof is immediate from the definition of $\xi$.
\begin{lemma}
\label{lem:primitivesxi}
\begin{align*}
    \E_{\xi}[\xi_{ri}(j)\xi_{tn}(\ell)]&=\begin{cases}
    \E_{\xi}[\xi_{ri}(j)]\E_{\xi}[\xi_{tn}(\ell)],\quad j\neq \ell\\
    \E_{\xi}[\xi_{ri}(j)^2],\qquad\qquad\quad r=t,\ i=n,\ j=\ell\\
    \E_{\xi}[\xi_{ri}(j)]\E_{\xi}[\xi_{ti}(j)],\quad r\neq t,\ i=n,\ j=\ell\\
    0,\qquad\qquad\qquad\qquad\quad\ r= t,\ i\neq n,\ j=\ell\\
    \E_{\xi}[\xi_{ri}(j)]\E_{\xi}[\xi_{tn}(j)],\quad r\neq t,\ i\neq n,\ j=\ell\\
    \end{cases}\\&=\begin{cases}
    s^2/k^2,\qquad j\neq \ell\\
    s/k,\qquad\quad r=t,\ i=n,\ j=\ell\\
   s^2/k^2,\qquad r\neq t,\ i=n,\ j=\ell\\
    0,\qquad\qquad r= t,\ i\neq n,\ j=\ell\\
    s^2/k^2,\qquad r\neq t,\ i\neq n,\ j=\ell
    \end{cases}
\end{align*}
where we recalled that
\[
h_{r}(j)=i\land h_{r}(j)=n,\qquad \Leftrightarrow\qquad i=n.
\]
\end{lemma}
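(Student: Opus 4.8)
The plan is to reduce the statement to two elementary facts about the $\{0,1\}$-valued indicators $\xi_{ri}(j) = 1[h_r(j) = i]$: that $\xi_{ri}(j)^2 = \xi_{ri}(j)$, and that since $h_r(j)$ is uniform on $[k/s]$ we have $\E_{\xi}[\xi_{ri}(j)] = \Pr[h_r(j) = i] = s/k$. Everything else is a short case analysis organized around whether $j = \ell$; no genuine computation is required.

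First I would dispose of the case $j \neq \ell$. Here $\xi_{ri}(j)$ and $\xi_{tn}(\ell)$ are functions of the hash values at two distinct coordinates, and these values are independent: if $r = t$ this is the pairwise independence guaranteed by the $O(\log(1/\beta))$-wise independent family, and if $r \neq t$ it follows from the mutual independence of the hash functions $h_1, \dots, h_s$. In either subcase the expectation factors and equals $(s/k)(s/k) = s^2/k^2$, matching the first branch of the claimed formula irrespective of how $i$ and $n$ relate.

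Next I would take $j = \ell$ and split on $r$ versus $t$ and $i$ versus $n$. If $r = t$ and $i = n$, the product is $\xi_{ri}(j)^2 = \xi_{ri}(j)$, whose expectation is $s/k$. If $r = t$ and $i \neq n$, the product equals $1[h_r(j) = i \wedge h_r(j) = n]$, which is identically $0$ because $h_r(j)$ takes a single value — this is the equivalence $h_r(j) = i \wedge h_r(j) = n \Leftrightarrow i = n$ recorded in the statement — so the expectation is $0$. If $r \neq t$ (for any $i$ and $n$, equal or not), then $\xi_{ri}(j)$ and $\xi_{tn}(j)$ depend on the independent hash functions $h_r$ and $h_t$, so the expectation factors again to $s^2/k^2$. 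Assembling the cases gives the stated piecewise formula. The only step that warrants an explicit sentence, rather than being purely mechanical, is checking that the hypotheses of the SJLT construction — the $O(\log(1/\beta))$-wise independent families and the independence of the $s$ hash functions — indeed supply the independence facts invoked above; beyond this bookkeeping and the identity $\xi^2 = \xi$ there is no real obstacle.
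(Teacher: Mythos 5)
Your proof is correct and is essentially the argument the paper has in mind: the paper simply declares the lemma ``immediate from the definition of $\xi$,'' and your case analysis (independence of hash values at distinct coordinates or under distinct hash functions, the identity $\xi^2=\xi$, and the impossibility of $h_r(j)=i\wedge h_r(j)=n$ for $i\neq n$) is exactly the omitted verification. The only point worth keeping explicit is the one you already flag: the $j\neq\ell$, $r=t$ subcase needs (at least) pairwise independence of the family, which the $O(\log(1/\beta))$-wise independence supplies.
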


We now prove Lemma \ref{lem:sparserJLnoiselessvariance}. We state it here for convenience.
\sparserJLvariance*

\begin{proof}
Throughout the proof, we will apply Lemma \ref{lem:primitivesxi} without further comment.
By linearity of $S$ and since $S$ satisfies LPP, it is sufficient to show that for $x\in \mathbb{R}^d$
\begin{align*}
    \operatorname{Var}\left[\Vert Sx\Vert_2^2\right]&=\operatorname{E}_{S}\left[\Vert Sx\Vert_2^4\right]-\left(\operatorname{E}_{S}\left[\Vert Sx\Vert_2^2\right]\right)^2=\operatorname{E}_{S}\left[\Vert Sx\Vert_2^4\right]-\Vert x\Vert_2^4\le \frac{2}{k}\Vert x\Vert_2^4.
\end{align*}
We will consider the first term:
\begin{align}
    \nonumber
    \operatorname{E}_{S}\left[\Vert Sx\Vert_2^4\right]&\nonumber=\operatorname{E}_{S}\left[\left(\sum_{i=1}^{k/s}\sum_{r=1}^s(Sx)_{(i,r)}^2\right)^2\right]=\operatorname{E}_{S}\left[\left(\sum_{i=1}^{k/s}\sum_{r=1}^s\left(\sum_{j=1}^d\frac{1}{\sqrt{s}}x_j\sign_r(j)\xi_{ri}(j)\right)^2\right)^2\right]\\
    \label{expr:SparserSecondmoment}
    &=\frac{1}{s^2}\operatorname{E}_{S}\left[\left(\sum_{i=1}^{k/s}\sum_{r=1}^s\sum_{j,\ell=1}^dx_jx_\ell\sign_r(j)\sign_r(\ell)\xi_{ri}(j)\xi_{ri}(\ell)\right)^2\right]
\end{align}
Letting
\[
a=\sum_{i=1}^{k/s}\sum_{r=1}^s\sum_{j=1}^dx_j^2\xi_{ri}(j)
\]
and 
\[
b=\sum_{i=1}^{k/s}\sum_{r=1}^s\sum_{j\neq\ell}x_jx_\ell\sign_r(j)\sign_r(\ell)\xi_{ri}(j)\xi_{ri}(\ell).
\]
we can express (\ref{expr:SparserSecondmoment}) as 
\begin{align}
    \label{expr:expectationAB}
    \frac{1}{s^2}\operatorname{E}_{S}\left[\left(a+b\right)^2\right]&=\frac{1}{s^2}\operatorname{E}_{S}\left[a^2+b^2+2ab\right]
\end{align}
The proofs of the following claims are straightforward but tedious and thus we leave them out here. They can be found in Appendix \ref{app:omittedproofsSparserclaims}.
\begin{restatable}{claim}{asquared}
\label{claim:asquared}
\[
\operatorname{E}_{S}\left[a^2\right]=s^2\Vert x\Vert_2^4.
\]
\end{restatable}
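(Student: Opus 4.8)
The plan is to expand the square and use the independence/$O(\log(1/\beta))$-wise independence of the hash functions, reducing everything to the moments of the indicator variables $\xi_{ri}(j)$ collected in Lemma~\ref{lem:primitivesxi}. Writing $a=\sum_{i=1}^{k/s}\sum_{r=1}^s\sum_{j=1}^d x_j^2\,\xi_{ri}(j)$, we have
\[
a^2=\sum_{i,n=1}^{k/s}\sum_{r,t=1}^s\sum_{j,\ell=1}^d x_j^2 x_\ell^2\,\xi_{ri}(j)\,\xi_{tn}(\ell),
\]
so $\operatorname{E}_S[a^2]=\sum_{i,n,r,t}\sum_{j,\ell} x_j^2 x_\ell^2\,\operatorname{E}_\xi[\xi_{ri}(j)\xi_{tn}(\ell)]$.

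First I would split the inner sum over $(j,\ell)$ into the diagonal $j=\ell$ and the off-diagonal $j\neq\ell$ parts, since Lemma~\ref{lem:primitivesxi} gives different values in these cases. For $j\neq\ell$ the expectation is always $s^2/k^2$ regardless of the indices $i,n,r,t$; summing $x_j^2 x_\ell^2$ over $j\neq\ell$ gives $\Vert x\Vert_2^4-\Vert x\Vert_4^4$, and summing over all $(i,r)$ and $(n,t)$ contributes a factor $(k/s\cdot s)^2=k^2$, so this block contributes $s^2(\Vert x\Vert_2^4-\Vert x\Vert_4^4)$. For $j=\ell$ I would further split on whether $(i,r)=(n,t)$, whether $r=t$ but $i\neq n$ (expectation $0$), or $r\neq t$ (expectation $s^2/k^2$, with $i=n$ or $i\neq n$ both allowed); the case $(i,r)=(n,t)$ has expectation $s/k$. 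Counting index tuples: the ``equal'' case gives $(k/s)\cdot s=k$ choices with weight $s/k$, and the ``$r\neq t$'' case gives $(k/s)^2\cdot s(s-1)$ choices with weight $s^2/k^2$; summing $x_j^4$ over $j$ gives $\Vert x\Vert_4^4$. Adding these contributions, the $\Vert x\Vert_4^4$ terms are arranged to cancel and one is left with exactly $s^2\Vert x\Vert_2^4$.

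The main obstacle, such as it is, is purely bookkeeping: one must be careful to enumerate all the index-coincidence patterns for $(i,r),(n,t)$ and $(j,\ell)$ without double counting, and to track which pattern triggers which case of Lemma~\ref{lem:primitivesxi} (in particular not forgetting that $r=t,\ i\neq n,\ j=\ell$ contributes zero, which is what forces the clean cancellation). There is no analytic difficulty; the $O(\log(1/\beta))$-wise independence is more than enough since only pairwise products of the $\xi$'s appear, so no concentration argument is needed here. Once the cancellation of the $\Vert x\Vert_4^4$ terms is verified, the claimed identity $\operatorname{E}_S[a^2]=s^2\Vert x\Vert_2^4$ follows directly.
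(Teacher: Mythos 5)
Your proof is correct and follows essentially the same route as the paper: expand $a^2$, split the index sum by the coincidence patterns of $(j,\ell)$ and $(i,r),(n,t)$, plug in the moments from Lemma~\ref{lem:primitivesxi} (including the crucial zero for $r=t,\,i\neq n,\,j=\ell$), and observe that the $\Vert x\Vert_4^4$ contributions cancel, leaving $s^2\Vert x\Vert_2^4$. The only cosmetic difference is that you merge the paper's two $r\neq t$ subcases (with $i=n$ and $i\neq n$) into one, which is harmless since both carry the same weight $s^2/k^2$.
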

\begin{restatable}{claim}{bsquared}
\label{claim:bsquared}
\[
\operatorname{E}_{S}\left[b^2\right]=\frac{2s^2}{k}\left(\Vert x\Vert_2^4-\Vert x\Vert_4^4\right).
\]
\end{restatable}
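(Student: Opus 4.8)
The plan is to expand $b^{2}$ into a sum over two blocks and two ordered pairs of distinct coordinates, take the expectation using that the sign hash families $\{\sign_{r}\}_{r}$ are independent of the bucket hash families $\{h_{r}\}_{r}$, and then check that only a handful of terms survive. Writing $b=\sum_{i=1}^{k/s}\sum_{r=1}^{s}\sum_{j\neq\ell}x_{j}x_{\ell}\sign_{r}(j)\sign_{r}(\ell)\xi_{ri}(j)\xi_{ri}(\ell)$ and squaring (relabelling the second copy with primes) gives
\[
b^{2}=\sum_{i,i'=1}^{k/s}\sum_{r,r'=1}^{s}\sum_{\substack{j\neq\ell\\ j'\neq\ell'}}x_{j}x_{\ell}x_{j'}x_{\ell'}\,\sign_{r}(j)\sign_{r}(\ell)\sign_{r'}(j')\sign_{r'}(\ell')\,\xi_{ri}(j)\xi_{ri}(\ell)\xi_{r'i'}(j')\xi_{r'i'}(\ell').
\]
Since $\{\sign_{r}\}_{r}$ is independent of $\{h_{r}\}_{r}$ (hence of the $\xi$'s), the expectation of each summand factors as $\operatorname{E}[\text{sign part}]\cdot\operatorname{E}[\xi\text{ part}]$, and I would treat the two parts in turn.

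First I would dispose of the sign part. If $r\neq r'$ then by independence it equals $\operatorname{E}[\sign_{r}(j)\sign_{r}(\ell)]\operatorname{E}[\sign_{r'}(j')\sign_{r'}(\ell')]=0$, since $j\neq\ell$ and $j'\neq\ell'$. If $r=r'$ then, using that $\sign_{r}$ is a $\pm1$-valued hash from an $O(\log(1/\beta))$-wise (so in particular $4$-wise) independent family, $\operatorname{E}[\sign_{r}(j)\sign_{r}(\ell)\sign_{r}(j')\sign_{r}(\ell')]$ equals $1$ exactly when every coordinate in the multiset $\{j,\ell,j',\ell'\}$ occurs an even number of times, and $0$ otherwise. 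Because $j\neq\ell$ and $j'\neq\ell'$, the even-multiplicity condition forces $\{j,\ell\}=\{j',\ell'\}$ as sets, i.e. $(j',\ell')=(j,\ell)$ or $(j',\ell')=(\ell,j)$, and in either case the sign product is identically $1$. These two configurations are distinct (as $j\neq\ell$) and both yield $x_{j}x_{\ell}x_{j'}x_{\ell'}=x_{j}^{2}x_{\ell}^{2}$.

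Next I would evaluate the $\xi$ part on the surviving terms ($r=r'$ and $\{j,\ell\}=\{j',\ell'\}$). There the product $\xi_{ri}(j)\xi_{ri}(\ell)\xi_{ri'}(j')\xi_{ri'}(\ell')$ is the indicator that $h_{r}$ sends both $j$ and $\ell$ to $i$ \emph{and} to $i'$, so it vanishes unless $i=i'$, and when $i=i'$ it equals $1[h_{r}(j)=i]\,1[h_{r}(\ell)=i]$, whose expectation is $(s/k)^{2}$ by $2$-wise independence of $h_{r}$ (the relevant line of Lemma~\ref{lem:primitivesxi}). Collecting the survivors: the two coordinate configurations contribute a factor $2$, the sum over $i=i'\in[k/s]$ contributes $(k/s)(s/k)^{2}=s/k$, and the sum over $r\in[s]$ contributes $s$, giving
\[
\operatorname{E}_{S}\left[b^{2}\right]=2\cdot\frac{s}{k}\cdot s\sum_{j\neq\ell}x_{j}^{2}x_{\ell}^{2}=\frac{2s^{2}}{k}\Big(\Vert x\Vert_{2}^{4}-\Vert x\Vert_{4}^{4}\Big),
\]
using $\sum_{j\neq\ell}x_{j}^{2}x_{\ell}^{2}=\big(\sum_{j}x_{j}^{2}\big)^{2}-\sum_{j}x_{j}^{4}$.

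No step is conceptually deep, so the main obstacle is purely the bookkeeping in the two reductions. The two points that need genuine care are: (i) within a fixed block $r$ the variables $\xi_{ri}(\cdot)$ for different bucket indices $i$ are \emph{dependent} through the single hash $h_{r}$, which is precisely what forces $i=i'$ and must not be overlooked; and (ii) the ordered pairs $(j,\ell)$ and $(\ell,j)$ must each be counted once, with no double-counting and no missed survivor among the $2s$ hash functions. Everything else is a routine application of Lemma~\ref{lem:primitivesxi} together with the limited independence of the hash families.
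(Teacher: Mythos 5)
Your proof is correct and follows essentially the same route as the paper's: expand $b^{2}$, use the sign expectation (with $j\neq\ell$, $j'\neq\ell'$) to kill everything except the $r=r'$, $\{j,\ell\}=\{j',\ell'\}$ terms with a factor of $2$, and then observe that the $\xi$-product forces $i=i'$ and contributes $(s/k)^{2}$ per surviving term. The bookkeeping and the final count $2\cdot s\cdot(k/s)\cdot(s/k)^{2}=2s^{2}/k$ match the paper's computation exactly.
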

\begin{restatable}{claim}{twoab}
\label{claim:twoab}
\[
2\operatorname{E}_{S}\left[ab\right]=0
\]
\end{restatable}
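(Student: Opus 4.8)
The plan is to expand the product $ab$ into a single sum over all index choices, and to show that every resulting summand has expectation zero. Recall that $a$ contains no sign factors at all, while every term of $b$ carries exactly one pair $\sign_{r'}(j')\sign_{r'}(\ell')$ with the arguments forced to be distinct, $j'\neq\ell'$. Hence a generic term of $ab$ has the form
\[
x_j^2\, x_{j'} x_{\ell'}\, \sign_{r'}(j')\,\sign_{r'}(\ell')\, \xi_{ri}(j)\,\xi_{r'i'}(j')\,\xi_{r'i'}(\ell'),
\]
where the unprimed indices $(i,r,j)$ are the ones summed in $a$ and the primed indices $(i',r',j',\ell')$ with $j'\neq\ell'$ are the ones summed in $b$.

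The key step is to use that the sign hash functions $\sign_1,\dots,\sign_s$ are drawn independently of the location hash functions $h_1,\dots,h_s$, hence independently of all the $\xi$'s. Thus the expectation of each term factors as
\[
\operatorname{E}_S\!\left[\sign_{r'}(j')\,\sign_{r'}(\ell')\right]\cdot \operatorname{E}_S\!\left[\xi_{ri}(j)\,\xi_{r'i'}(j')\,\xi_{r'i'}(\ell')\right].
\]
Since $j'\neq\ell'$ and the sign family is (at least) pairwise independent with $\operatorname{E}_{\sign}[\sign_{r'}(\cdot)]=0$, the first factor equals $\operatorname{E}[\sign_{r'}(j')]\operatorname{E}[\sign_{r'}(\ell')]=0$ regardless of the value of the second factor. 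Therefore every summand of $ab$ vanishes in expectation, so $\operatorname{E}_S[ab]=0$ and consequently $2\operatorname{E}_S[ab]=0$.

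I expect no real obstacle here: the argument is a one-line parity observation dressed up with bookkeeping. The only points requiring care are (i) confirming that the two sign factors inherited from $b$ genuinely have distinct arguments $j'\neq\ell'$ (immediate from the definition of $b$), and (ii) invoking the correct independence properties — that the $\sign$ and $h$ families are mutually independent and that each $\sign_r$ is pairwise independent — both of which follow from the $O(\log(1/\beta))$-wise independence assumed in Section~\ref{sec:Kaneconstruction}. In particular, no cancellation between distinct nonzero terms is needed; each term is individually zero.
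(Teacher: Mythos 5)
Your proof is correct and follows the same route as the paper: expand $ab$, use the independence of the $\sign$ and $h$ families to factor each term's expectation, and observe that $\operatorname{E}[\sign_{r'}(j')\sign_{r'}(\ell')]=0$ for $j'\neq\ell'$ by pairwise independence and zero mean. Your write-up is slightly more explicit than the paper's about exactly which independence properties are invoked, but the argument is identical.
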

Inserting Claims \ref{claim:asquared}-\ref{claim:twoab} into (\ref{expr:expectationAB}), we conclude that
\begin{align*}
    \operatorname{E}_{S}\left[\Vert Sx\Vert_2^4\right]=\Vert x\Vert_2^4+\frac{2}{k}\left(\Vert x\Vert_2^4-\Vert x\Vert_4^4\right)
\end{align*}
finally proving that
\begin{align*}
    \operatorname{Var}\left[\Vert Sx\Vert_2^2\right]=\frac{2}{k}\left(\Vert x\Vert_2^4-\Vert x\Vert_4^4\right)\le \frac{2}{k}\Vert x\Vert_2^4.
\end{align*}
\end{proof}

\subsubsection{Proof of Claims}
\label{app:omittedproofsSparserclaims}
Throughout this section, we apply Lemma \ref{lem:primitivesxi} repeatedly without further comment.
\asquared*

\begin{proof}
\begin{align*}
    \operatorname{E}_{h}\left[\left(\sum_{i=1}^{k/s}\sum_{r=1}^s\sum_{j=1}^dx_j^2\xi_{ri}(j)\right)^2\right]
    &=\sum_{i,n=1}^{k/s}\sum_{r,t=1}^s\sum_{j,\ell=1}^dx_j^2x_\ell^2\operatorname{E}_{h}\left[\xi_{ri}(j)\xi_{tn}(\ell)\right]\\
    &=\sum_{i,n=1}^{k/s}\sum_{r,t=1}^s\sum_{j=1}^dx_j^4\operatorname{E}_{h}\left[\xi_{ri}(j)\xi_{tn}(j)\right]+\sum_{i,n=1}^{k/s}\sum_{r,t=1}^s\sum_{j\neq\ell=1}^dx_j^2x_\ell^2\operatorname{E}_{h}\left[\xi_{ri}(j)\xi_{tn}(\ell)\right]\\
    &=\sum_{i=1}^{k/s}\sum_{r=1}^s\sum_{j=1}^dx_j^4\frac{s}{k}+\sum_{i=1}^{k/s}\sum_{r\neq t=1}^s\sum_{j=1}^dx_j^4\frac{s^2}{k^2}+\sum_{i\neq n=1}^{k/s}\sum_{r=1}^s\sum_{j=1}^dx_j^4\cdot 0\\&\qquad+\sum_{i\neq n=1}^{k/s}\sum_{r\neq t=1}^s\sum_{j=1}^dx_j^4\frac{s^2}{k^2}+\sum_{i,n=1}^{k/s}\sum_{r,t=1}^s\sum_{j\neq\ell=1}^dx_j^2x_\ell^2\frac{s^2}{k^2}\\
    &=s\Vert x\Vert_4^4+\frac{s(s^2-s)}{k}\Vert x\Vert_4^4+\left(1-\frac{s}{k}\right)(s^2-s)\Vert x\Vert_4^4+s^2\left(\Vert x\Vert_2^4-\Vert x\Vert_4^4\right)\\&=s^2\Vert x\Vert_2^4
\end{align*}
\end{proof}

\bsquared*

\begin{proof}
\begin{align*}
    \operatorname{E}_{h,\sign}\left[\left(\sum_{i=1}^{k/s}\sum_{r=1}^s\sum_{j\neq\ell}x_jx_\ell\sign_r(j)\sign_r(\ell)\xi_{ri}(j)\xi_{ri}(\ell)\right)^2\right]
    &=\sum_{i,n=1}^{k/s}\sum_{r,t=1}^s\sum_{\substack{j\neq\ell\\v\neq w}}x_jx_\ell x_v x_w\operatorname{E}_{\sign}\left[\sign_r(j)\sign_r(\ell)\sign_t(v)\sign_t(w)\right]\\&\qquad\qquad\qquad\qquad\cdot\operatorname{E}_{h}\left[\xi_{ri}(j)\xi_{ri}(\ell)\xi_{tn}(v)\xi_{tn}(w)\right]
\end{align*}
We remark that, as $j\neq \ell$, we have
\begin{align*}
\operatorname{E}_{\sign}\left[\sign_r(j)\sign_r(\ell)\sign_t(v)\sign_t(w)\right]=\begin{cases}
1,\qquad r=t\land((j=v\land \ell=w)\lor(j=w\land \ell=v))\\
0,\qquad \text{otherwise.}
\end{cases}
\end{align*}
This leaves us with
\begin{align*}
2\sum_{i,n=1}^{k/s}\sum_{r=1}^s\sum_{j\neq\ell}x_j^2x_\ell^2 \operatorname{E}_{h}\left[\xi_{ri}(j)\xi_{ri}(\ell)\xi_{rn}(j)\xi_{rn}(\ell)\right]&=2\sum_{i=1}^{k/s}\sum_{r=1}^s\sum_{j\neq\ell}x_j^2x_\ell^2 \operatorname{E}_{h}\left[\xi_{ri}(j)^2\right]\operatorname{E}_{h}\left[\xi_{ri}(\ell)^2\right]\\&=2\sum_{i=1}^{k/s}\sum_{r=1}^s\sum_{j\neq\ell}x_j^2x_\ell^2 \frac{s^2}{k^2}=\frac{2s^2}{k}\left(\Vert x\Vert_2^4-\Vert x\Vert_4^4\right)
\end{align*}
where we used that $\E_{h}[\xi_{ri}(j)\xi_{rn}(j)]=0$ if $i\neq n$.
\end{proof}

\twoab*
\begin{proof}
Observe that 
\begin{align*}
\E_S[ab]
&=\sum_{i,v=1}^{k/s}\sum_{r,t=1}^s\sum_{\substack{j=1\\v\neq w}}^dx_j^2x_{v}x_{w}\operatorname{E}_{\sign}\left[\sign_t(v)\sign_t(w)\right]\operatorname{E}_{h}\left[\xi_{ri}(j)\xi_{tn}(v)\xi_{tn}(w)\right]=0.
\end{align*}
because the signs are independent.
\end{proof}

\end{document}